\documentclass[draftclsnofoot,12pt,onecolumn]{IEEEtran}

\usepackage{mleftright}
\usepackage{cite}
\usepackage[T1]{fontenc}
\usepackage{graphicx}
\usepackage{pgfplots}
\usepackage{grffile}
\usepackage{sansmath}
\pgfplotsset{every tick label/.append style={font=\sansmath\LARGE}}
\pgfplotsset{every axis legend/.append style={font=\sansmath\LARGE}}
\pgfplotsset{every axis label/.append style={font=\sansmath\LARGE}}
\pgfplotsset{/tikz/font=\sansmath\sffamily}

\pgfplotsset{compat=newest}
\usetikzlibrary{plotmarks}
\usepgfplotslibrary{patchplots}
\pdfsuppresswarningpagegroup=1

\usepackage{hyperref}

\usepackage{mathtools}
\usepackage{amssymb}
\usepackage{amsmath}
\usepackage{paralist}
\usepackage{subfigure}
\usepackage{booktabs} 
\usepackage{algorithm}
\usepackage{algorithmic}
\usepackage{amsthm}
\usepackage{multirow}
\usepackage{microtype}
\usepackage{tablefootnote}
\usepackage{standalone}

\usepackage{adjustbox}
%
%
%
%
%

\usepackage{amssymb}
\usepackage{amsfonts}
\usepackage{mathrsfs}
\usepackage{xspace}
\usepackage{bm}
\usepackage{upgreek}

\newcommand{\safemath}[2]{\newcommand{#1}{\ensuremath{#2}\xspace}}



\safemath{\bma}{\mathbf{a}}
\safemath{\bmb}{\mathbf{b}}
\safemath{\bmc}{\mathbf{c}}
\safemath{\bmd}{\mathbf{d}}
\safemath{\bme}{\mathbf{e}}
\safemath{\bmf}{\mathbf{f}}
\safemath{\bmg}{\mathbf{g}}
\safemath{\bmh}{\mathbf{h}}
\safemath{\bmi}{\mathbf{i}}
\safemath{\bmj}{\mathbf{j}}
\safemath{\bmk}{\mathbf{k}}
\safemath{\bml}{\mathbf{l}}
\safemath{\bmm}{\mathbf{m}}
\safemath{\bmn}{\mathbf{n}}
\safemath{\bmo}{\mathbf{o}}
\safemath{\bmp}{\mathbf{p}}
\safemath{\bmq}{\mathbf{q}}
\safemath{\bmr}{\mathbf{r}}
\safemath{\bms}{\mathbf{s}}
\safemath{\bmt}{\mathbf{t}}
\safemath{\bmu}{\mathbf{u}}
\safemath{\bmv}{\mathbf{v}}
\safemath{\bmw}{\mathbf{w}}
\safemath{\bmx}{\mathbf{x}}
\safemath{\bmy}{\mathbf{y}}
\safemath{\bmz}{\mathbf{z}}
\safemath{\bmzero}{\mathbf{0}}
\safemath{\bmone}{\mathbf{1}}

\bmdefine{\biad}{a}
\bmdefine{\bibd}{b}
\bmdefine{\bicd}{c}
\bmdefine{\bidd}{d}
\bmdefine{\bied}{e}
\bmdefine{\bifd}{f}
\bmdefine{\bigd}{g}
\bmdefine{\bihd}{h}
\bmdefine{\biid}{i}
\bmdefine{\bijd}{j}
\bmdefine{\bikd}{k}
\bmdefine{\bild}{l}
\bmdefine{\bimd}{m}
\bmdefine{\bind}{n}
\bmdefine{\biod}{o}
\bmdefine{\bipd}{p}
\bmdefine{\biqd}{q}
\bmdefine{\bird}{r}
\bmdefine{\bisd}{s}
\bmdefine{\bitd}{t}
\bmdefine{\biud}{u}
\bmdefine{\bivd}{v}
\bmdefine{\biwd}{w}
\bmdefine{\bixd}{x}
\bmdefine{\biyd}{y}
\bmdefine{\bizd}{z}

\bmdefine{\bixid}{\xi}
\bmdefine{\bilambdad}{\lambda}
\bmdefine{\bimud}{\mu}
\bmdefine{\bithetad}{\theta}
\bmdefine{\biphid}{\phi}
\bmdefine{\bideltad}{\delta}

\safemath{\bmia}{\biad}
\safemath{\bmib}{\bibd}
\safemath{\bmic}{\bicd}
\safemath{\bmid}{\bidd}
\safemath{\bmie}{\bied}
\safemath{\bmif}{\bifd}
\safemath{\bmig}{\bigd}
\safemath{\bmih}{\bihd}
\safemath{\bmii}{\biid}
\safemath{\bmij}{\bijd}
\safemath{\bmik}{\bikd}
\safemath{\bmil}{\bild}
\safemath{\bmim}{\bimd}
\safemath{\bmin}{\bind}
\safemath{\bmio}{\biod}
\safemath{\bmip}{\bipd}
\safemath{\bmiq}{\biqd}
\safemath{\bmir}{\bird}
\safemath{\bmis}{\bisd}
\safemath{\bmit}{\bitd}
\safemath{\bmiu}{\biud}
\safemath{\bmiv}{\bivd}
\safemath{\bmiw}{\biwd}
\safemath{\bmix}{\bixd}
\safemath{\bmiy}{\biyd}
\safemath{\bmiz}{\bizd}

\safemath{\bmxi}{\bixid}
\safemath{\bmlambda}{\bilambdad}
\safemath{\bmmu}{\bimud}
\safemath{\bmtheta}{\bithetad}
\safemath{\bmphi}{\biphid}
\safemath{\bmdelta}{\bideltad}

\safemath{\bA}{\mathbf{A}}
\safemath{\bB}{\mathbf{B}}
\safemath{\bC}{\mathbf{C}}
\safemath{\bD}{\mathbf{D}}
\safemath{\bE}{\mathbf{E}}
\safemath{\bF}{\mathbf{F}}
\safemath{\bG}{\mathbf{G}}
\safemath{\bH}{\mathbf{H}}
\safemath{\bI}{\mathbf{I}}
\safemath{\bJ}{\mathbf{J}}
\safemath{\bK}{\mathbf{K}}
\safemath{\bL}{\mathbf{L}}
\safemath{\bM}{\mathbf{M}}
\safemath{\bN}{\mathbf{N}}
\safemath{\bO}{\mathbf{O}}
\safemath{\bP}{\mathbf{P}}
\safemath{\bQ}{\mathbf{Q}}
\safemath{\bR}{\mathbf{R}}
\safemath{\bS}{\mathbf{S}}
\safemath{\bT}{\mathbf{T}}
\safemath{\bU}{\mathbf{U}}
\safemath{\bV}{\mathbf{V}}
\safemath{\bW}{\mathbf{W}}
\safemath{\bX}{\mathbf{X}}
\safemath{\bY}{\mathbf{Y}}
\safemath{\bZ}{\mathbf{Z}}

\safemath{\bZero}{\mathbf{0}}
\safemath{\bOne}{\mathbf{1}}
\safemath{\bDelta}{\mathbf{\Delta}}
\safemath{\bLambda}{\mathbf{\UpLambda}}
\safemath{\bPhi}{\mathbf{\Upphi}}
\safemath{\bSigma}{\mathbf{\Upsigma}}
\safemath{\bOmega}{\mathbf{\Upomega}}
\safemath{\bTheta}{\mathbf{\Uptheta}}

\bmdefine{\biAd}{A}
\bmdefine{\biBd}{B}
\bmdefine{\biCd}{C}
\bmdefine{\biDd}{D}
\bmdefine{\biEd}{E}
\bmdefine{\biFd}{F}
\bmdefine{\biGd}{G}
\bmdefine{\biHd}{H}
\bmdefine{\biId}{I}
\bmdefine{\biJd}{J}
\bmdefine{\biKd}{K}
\bmdefine{\biLd}{L}
\bmdefine{\biMd}{M}
\bmdefine{\biOd}{N}
\bmdefine{\biPd}{O}
\bmdefine{\biQd}{P}
\bmdefine{\biRd}{R}
\bmdefine{\biSd}{S}
\bmdefine{\biTd}{T}
\bmdefine{\biUd}{U}
\bmdefine{\biVd}{V}
\bmdefine{\biWd}{W}
\bmdefine{\biXd}{X}
\bmdefine{\biYd}{Y}
\bmdefine{\biZd}{Z}

\bmdefine{\biDelta}{\Delta}
\bmdefine{\biLambda}{\Lambda}
\bmdefine{\biPhi}{\Phi}
\bmdefine{\biSigma}{\Sigma}
\bmdefine{\biOmega}{\Omega}
\bmdefine{\biTheta}{\Theta}

\safemath{\bimA}{\biAd}
\safemath{\bimB}{\biBd}
\safemath{\bimC}{\biCd}
\safemath{\bimD}{\biDd}
\safemath{\bimE}{\biEd}
\safemath{\bimF}{\biFd}
\safemath{\bimG}{\biGd}
\safemath{\bimH}{\biHd}
\safemath{\bimI}{\biId}
\safemath{\bimJ}{\biJd}
\safemath{\bimK}{\biKd}
\safemath{\bimL}{\biLd}
\safemath{\bimM}{\biMd}
\safemath{\bimN}{\biNd}
\safemath{\bimO}{\biOd}
\safemath{\bimP}{\biPd}
\safemath{\bimQ}{\biQd}
\safemath{\bimR}{\biRd}
\safemath{\bimS}{\biSd}
\safemath{\bimT}{\biTd}
\safemath{\bimU}{\biUd}
\safemath{\bimV}{\biVd}
\safemath{\bimW}{\biWd}
\safemath{\bimX}{\biXd}
\safemath{\bimY}{\biYd}
\safemath{\bimZ}{\biZd}

\safemath{\bimDelta}{\biDelta}
\safemath{\bimLambda}{\biLambda}
\safemath{\bimPhi}{\biPhi}
\safemath{\bimSigma}{\biSigma}
\safemath{\bimOmega}{\biOmega}
\safemath{\bimTheta}{\biTheta}

\safemath{\setA}{\mathcal{A}}
\safemath{\setB}{\mathcal{B}}
\safemath{\setC}{\mathcal{C}}
\safemath{\setD}{\mathcal{D}}
\safemath{\setE}{\mathcal{E}}
\safemath{\setF}{\mathcal{F}}
\safemath{\setG}{\mathcal{G}}
\safemath{\setH}{\mathcal{H}}
\safemath{\setI}{\mathcal{I}}
\safemath{\setJ}{\mathcal{J}}
\safemath{\setK}{\mathcal{K}}
\safemath{\setL}{\mathcal{L}}
\safemath{\setM}{\mathcal{M}}
\safemath{\setN}{\mathcal{N}}
\safemath{\setO}{\mathcal{O}}
\safemath{\setP}{\mathcal{P}}
\safemath{\setQ}{\mathcal{Q}}
\safemath{\setR}{\mathcal{R}}
\safemath{\setS}{\mathcal{S}}
\safemath{\setT}{\mathcal{T}}
\safemath{\setU}{\mathcal{U}}
\safemath{\setV}{\mathcal{V}}
\safemath{\setW}{\mathcal{W}}
\safemath{\setX}{\mathcal{X}}
\safemath{\setY}{\mathcal{Y}}
\safemath{\setZ}{\mathcal{Z}}
\safemath{\emptySet}{\varnothing}

\safemath{\colA}{\mathscr{A}}
\safemath{\colB}{\mathscr{B}}
\safemath{\colC}{\mathscr{C}}
\safemath{\colD}{\mathscr{D}}
\safemath{\colE}{\mathscr{E}}
\safemath{\colF}{\mathscr{F}}
\safemath{\colG}{\mathscr{G}}
\safemath{\colH}{\mathscr{H}}
\safemath{\colI}{\mathscr{I}}
\safemath{\colJ}{\mathscr{J}}
\safemath{\colK}{\mathscr{K}}
\safemath{\colL}{\mathscr{L}}
\safemath{\colM}{\mathscr{M}}
\safemath{\colN}{\mathscr{N}}
\safemath{\colO}{\mathscr{O}}
\safemath{\colP}{\mathscr{P}}
\safemath{\colQ}{\mathscr{Q}}
\safemath{\colR}{\mathscr{R}}
\safemath{\colS}{\mathscr{S}}
\safemath{\colT}{\mathscr{T}}
\safemath{\colU}{\mathscr{U}}
\safemath{\colV}{\mathscr{V}}
\safemath{\colW}{\mathscr{W}}
\safemath{\colX}{\mathscr{X}}
\safemath{\colY}{\mathscr{Y}}
\safemath{\colZ}{\mathscr{Z}}

\safemath{\opA}{\mathbb{A}}
\safemath{\opB}{\mathbb{B}}
\safemath{\opC}{\mathbb{C}}
\safemath{\opD}{\mathbb{D}}
\safemath{\opE}{\mathbb{E}}
\safemath{\opF}{\mathbb{F}}
\safemath{\opG}{\mathbb{G}}
\safemath{\opH}{\mathbb{H}}
\safemath{\opI}{\mathbb{I}}
\safemath{\opJ}{\mathbb{J}}
\safemath{\opK}{\mathbb{K}}
\safemath{\opL}{\mathbb{L}}
\safemath{\opM}{\mathbb{M}}
\safemath{\opN}{\mathbb{N}}
\safemath{\opO}{\mathbb{O}}
\safemath{\opP}{\mathbb{P}}
\safemath{\opQ}{\mathbb{Q}}
\safemath{\opR}{\mathbb{R}}
\safemath{\opS}{\mathbb{S}}
\safemath{\opT}{\mathbb{T}}
\safemath{\opU}{\mathbb{U}}
\safemath{\opV}{\mathbb{V}}
\safemath{\opW}{\mathbb{W}}
\safemath{\opX}{\mathbb{X}}
\safemath{\opY}{\mathbb{Y}}
\safemath{\opZ}{\mathbb{Z}}
\safemath{\opZero}{\mathbb{O}}
\safemath{\identityop}{\opI}


\safemath{\veca}{\bma}
\safemath{\vecb}{\bmb}
\safemath{\vecc}{\bmc}
\safemath{\vecd}{\bmd}
\safemath{\vece}{\bme}
\safemath{\vecf}{\bmf}
\safemath{\vecg}{\bmg}
\safemath{\vech}{\bmh}
\safemath{\veci}{\bmi}
\safemath{\vecj}{\bmj}
\safemath{\veck}{\bmk}
\safemath{\vecl}{\bml}
\safemath{\vecm}{\bmm}
\safemath{\vecn}{\bmn}
\safemath{\veco}{\bmo}
\safemath{\vecp}{\bmp}
\safemath{\vecq}{\bmq}
\safemath{\vecr}{\bmr}
\safemath{\vecs}{\bms}
\safemath{\vect}{\bmt}
\safemath{\vecu}{\bmu}
\safemath{\vecv}{\bmv}
\safemath{\vecw}{\bmw}
\safemath{\vecx}{\bmx}
\safemath{\vecy}{\bmy}
\safemath{\vecz}{\bmz}

\safemath{\veczero}{\bmzero}
\safemath{\vecone}{\bmone}
\safemath{\vecxi}{\bmxi}
\safemath{\veclambda}{\bmlambda}
\safemath{\vecmu}{\bmmu}
\safemath{\vectheta}{\bmtheta}
\safemath{\vecphi}{\bmphi}
\safemath{\vecdelta}{\bmdelta}

\safemath{\matA}{\bA}
\safemath{\matB}{\bB}
\safemath{\matC}{\bC}
\safemath{\matD}{\bD}
\safemath{\matE}{\bE}
\safemath{\matF}{\bF}
\safemath{\matG}{\bG}
\safemath{\matH}{\bH}
\safemath{\matI}{\bI}
\safemath{\matJ}{\bJ}
\safemath{\matK}{\bK}
\safemath{\matL}{\bL}
\safemath{\matM}{\bM}
\safemath{\matN}{\bN}
\safemath{\matO}{\bO}
\safemath{\matP}{\bP}
\safemath{\matQ}{\bQ}
\safemath{\matR}{\bR}
\safemath{\matS}{\bS}
\safemath{\matT}{\bT}
\safemath{\matU}{\bU}
\safemath{\matV}{\bV}
\safemath{\matW}{\bW}
\safemath{\matX}{\bX}
\safemath{\matY}{\bY}
\safemath{\matZ}{\bZ}
\safemath{\matzero}{\bmzero}

\safemath{\matDelta}{\bDelta}
\safemath{\matLambda}{\bLambda}
\safemath{\matPhi}{\bPhi}
\safemath{\matSigma}{\bSigma}
\safemath{\matOmega}{\bOmega}
\safemath{\matTheta}{\bTheta}

\safemath{\matidentity}{\matI}
\safemath{\matone}{\matO}


\safemath{\rnda}{A}
\safemath{\rndb}{B}
\safemath{\rndc}{C}
\safemath{\rndd}{D}
\safemath{\rnde}{E}
\safemath{\rndf}{F}
\safemath{\rndg}{G}
\safemath{\rndh}{H}
\safemath{\rndi}{I}
\safemath{\rndj}{J}
\safemath{\rndk}{K}
\safemath{\rndl}{L}
\safemath{\rndm}{M}
\safemath{\rndn}{N}
\safemath{\rndo}{O}
\safemath{\rndp}{P}
\safemath{\rndq}{Q}
\safemath{\rndr}{R}
\safemath{\rnds}{S}
\safemath{\rndt}{T}
\safemath{\rndu}{U}
\safemath{\rndv}{V}
\safemath{\rndw}{W}
\safemath{\rndx}{X}
\safemath{\rndy}{Y}
\safemath{\rndz}{Z}

\safemath{\rveca}{\bimA}
\safemath{\rvecb}{\bimB}
\safemath{\rvecc}{\bimC}
\safemath{\rvecd}{\bimD}
\safemath{\rvece}{\bimE}
\safemath{\rvecf}{\bimF}
\safemath{\rvecg}{\bimG}
\safemath{\rvech}{\bimH}
\safemath{\rveci}{\bimI}
\safemath{\rvecj}{\bimJ}
\safemath{\rveck}{\bimK}
\safemath{\rvecl}{\bimL}
\safemath{\rvecm}{\bimM}
\safemath{\rvecn}{\bimN}
\safemath{\rveco}{\bomO}
\safemath{\rvecp}{\bimP}
\safemath{\rvecq}{\bimQ}
\safemath{\rvecr}{\bimR}
\safemath{\rvecs}{\bimS}
\safemath{\rvect}{\bimT}
\safemath{\rvecu}{\bimU}
\safemath{\rvecv}{\bimV}
\safemath{\rvecw}{\bimW}
\safemath{\rvecx}{\bimX}
\safemath{\rvecy}{\bimY}
\safemath{\rvecz}{\bimZ}

\safemath{\rvecxi}{\bmxi}
\safemath{\rveclambda}{\bmlambda}
\safemath{\rvecmu}{\bmmu}
\safemath{\rvectheta}{\bmtheta}
\safemath{\rvecphi}{\bmphi}

\safemath{\rmatA}{\bimA}
\safemath{\rmatB}{\bimB}
\safemath{\rmatC}{\bimC}
\safemath{\rmatD}{\bimD}
\safemath{\rmatE}{\bimE}
\safemath{\rmatF}{\bimF}
\safemath{\rmatG}{\bimG}
\safemath{\rmatH}{\bimH}
\safemath{\rmatI}{\bimI}
\safemath{\rmatJ}{\bimJ}
\safemath{\rmatK}{\bimK}
\safemath{\rmatL}{\bimL}
\safemath{\rmatM}{\bimM}
\safemath{\rmatN}{\bimN}
\safemath{\rmatO}{\bimO}
\safemath{\rmatP}{\bimP}
\safemath{\rmatQ}{\bimQ}
\safemath{\rmatR}{\bimR}
\safemath{\rmatS}{\bimS}
\safemath{\rmatT}{\bimT}
\safemath{\rmatU}{\bimU}
\safemath{\rmatV}{\bimV}
\safemath{\rmatW}{\bimW}
\safemath{\rmatX}{\bimX}
\safemath{\rmatY}{\bimY}
\safemath{\rmatZ}{\bimZ}

\safemath{\rmatDelta}{\bimDelta}
\safemath{\rmatLambda}{\bimLambda}
\safemath{\rmatPhi}{\bimPhi}
\safemath{\rmatSigma}{\bimSigma}
\safemath{\rmatOmega}{\bimOmega}
\safemath{\rmatTheta}{\bimTheta}

%
%

\usepackage{amssymb}
\usepackage{amsfonts}
\usepackage{mathrsfs}
\usepackage{xspace}
\usepackage{bm}
\usepackage{fancyref}
\usepackage{textcomp}

\usepackage{multirow}
\usepackage{stmaryrd}


\newenvironment{textbmatrix}{	\setlength{\arraycolsep}{2.5pt}%
								\big[\begin{matrix}}{\end{matrix}\big]%
								\raisebox{0.08ex}{\vphantom{M}}}


\def\be{\begin{equation}}
\def\ee{\end{equation}}
\def\een{\nonumber \end{equation}}
\def\mat{\begin{bmatrix}}
\def\emat{\end{bmatrix}}
\def\btm{\begin{textbmatrix}}
\def\etm{\end{textbmatrix}}

\def\ba#1\ea{\begin{align}#1\end{align}}
\def\bas#1\eas{\begin{align*}#1\end{align*}}
\def\bs#1\es{\begin{split}#1\end{split}}
\def\bg#1\eg{\begin{gather}#1\end{gather}}
\def\bml#1\eml{\begin{multline}#1\end{multline}}
\def\bi#1\ei{\begin{itemize}#1\end{itemize}}



\newcommand{\lefto}{\mathopen{}\left}



\DeclareMathOperator*{\argmax}{arg\;max}		
\DeclareMathOperator{\Exop}{\opE}			
\DeclareMathOperator{\Varop}{\opV\!\mathrm{ar}} 


\newcommand{\abs}[1]{\lefto\lvert#1\right\rvert}		



\newcommand{\vecnorm}[1]{\lefto\lVert#1\right\rVert}		
\newcommand{\est}[1]{\ensuremath{\hat{#1}}} 	

\safemath{\dirac}{\delta}					
\safemath{\krond}{\dirac}					

\safemath{\upto}{\uparrow}
\safemath{\downto}{\downarrow}
\safemath{\iu}{j}							
\safemath{\ev}{\lambda}						
\safemath{\hilseqspace}{l^{2}}				
\newcommand{\banachfunspace}[1]{\setL^{#1}}	
\safemath{\hilfunspace}{\banachfunspace{2}}	

\safemath{\SNR}{\textit{SNR}} 				
\safemath{\PAR}{\textit{PAR}} 				
\safemath{\No}{N_0}							
\safemath{\Es}{E_s}							
\safemath{\Eb}{E_b}							
\safemath{\EbNo}{\frac{\Eb}{\No}}
\safemath{\EsNo}{\frac{\Es}{\No}}

\DeclareMathOperator{\CHop}{\ensuremath{\opH}} 
\safemath{\tvir}{\rndh_{\CHop}}				
\safemath{\tvtf}{\rndl_{\CHop}}				
\safemath{\spf}{\rnds_{\CHop}}				
\safemath{\bff}{H_{\CHop}}					

\safemath{\ircf}{r_{h}}						
\safemath{\tftvcf}{r_{s}}					
\safemath{\tfcf}{r_{l}}						
\safemath{\bfcf}{r_{H}}						

\safemath{\tcorr}{c_h}						
\safemath{\scf}{c_{s}}						
\safemath{\tfcorr}{c_{l}}					
\safemath{\fcorr}{c_{H}}						

\safemath{\mi}{I}							
\safemath{\capacity}{C}						

\safemath{\normal}{\mathcal{N}}			
\safemath{\jpg}{\mathcal{CN}}			
\safemath{\mchain}{\leftrightarrow}		

\safemath{\dB}{\,\mathrm{dB}}
\safemath{\dBm}{\,\mathrm{dBm}}
\safemath{\Hz}{\,\mathrm{Hz}}
\safemath{\kHz}{\,\mathrm{kHz}}
\safemath{\MHz}{\,\mathrm{MHz}}
\safemath{\GHz}{\,\mathrm{GHz}}
\safemath{\s}{\,\mathrm{s}}
\safemath{\ms}{\,\mathrm{ms}}
\safemath{\mus}{\,\mathrm{\text{\textmu}s}}
\safemath{\ns}{\,\mathrm{ns}}
\safemath{\ps}{\,\mathrm{ps}}
\safemath{\meter}{\,\mathrm{m}}
\safemath{\mm}{\,\mathrm{mm}}
\safemath{\cm}{\,\mathrm{cm}}
\safemath{\m}{\,\mathrm{m}}
\safemath{\W}{\,\mathrm{W}}
\safemath{\mW}{\, \mathrm{mW}}
\safemath{\J}{\,\mathrm{J}}
\safemath{\K}{\,\mathrm{K}}
\safemath{\bit}{\,\mathrm{bit}}
\safemath{\nat}{\,\mathrm{nat}}


\safemath{\define}{\triangleq}			

\safemath{\equivalent}{\sim}
\safemath{\distas}{\sim}					
\safemath{\sdiff}{\Delta}				

\safemath{\reals}{\mathbb{R}}
\safemath{\positivereals}{\reals_{+}}
\safemath{\integers}{\mathbb{Z}}
\safemath{\posint}{\integers_{+}}
\safemath{\naturals}{\mathbb{N}}
\safemath{\posnaturals}{\naturals_{+}}
\safemath{\complexset}{\mathbb{C}}
\safemath{\rationals}{\mathbb{Q}}

\newcommand*{\fancyrefapplabelprefix}{app}		
\newcommand*{\fancyrefthmlabelprefix}{thm}		
\newcommand*{\fancyreflemlabelprefix}{lem}		
\newcommand*{\fancyrefcorlabelprefix}{cor}		
\newcommand*{\fancyrefdeflabelprefix}{def}		
\newcommand*{\fancyrefproplabelprefix}{prop}		
\newcommand*{\fancyrefexmpllabelprefix}{exmpl}
\newcommand*{\fancyrefalglabelprefix}{alg}		
\newcommand*{\fancyreftbllabelprefix}{tbl}    
\newcommand*{\fancyrefasmlabelprefix}{asm}    

\frefformat{vario}{\fancyrefasmlabelprefix}{Assumption~#1}
\frefformat{vario}{\fancyrefseclabelprefix}{Section~#1}
\frefformat{vario}{\fancyrefthmlabelprefix}{Theorem~#1}
\frefformat{vario}{\fancyreftbllabelprefix}{Table~#1}
\frefformat{vario}{\fancyreflemlabelprefix}{Lemma~#1}
\frefformat{vario}{\fancyrefcorlabelprefix}{Corollary~#1}
\frefformat{vario}{\fancyrefdeflabelprefix}{Definition~#1}
\frefformat{vario}{\fancyreffiglabelprefix}{Fig.~#1}
\frefformat{vario}{\fancyrefapplabelprefix}{Appendix~#1}
\frefformat{vario}{\fancyrefeqlabelprefix}{(#1)}
\frefformat{vario}{\fancyrefproplabelprefix}{Proposition~#1}
\frefformat{vario}{\fancyrefexmpllabelprefix}{Example~#1}
\frefformat{vario}{\fancyrefalglabelprefix}{Algorithm~#1}

 \newtheorem{thm}{Theorem}
 \newtheorem{cor}[thm]{Corollary}   
 
 \newtheorem{defi}{Definition}
 \newtheorem{lem}{Lemma}


\safemath{\dictab}{[\,\dicta\,\,\dictb\,]}

\safemath{\ysig}{\bmy}
\safemath{\ysighat}{\hat{\ysig}}
\safemath{\ysigdim}{M}
\safemath{\xsig}{\bmx}
\safemath{\xsigdim}{N}
\safemath{\nx}{n_x}
\safemath{\zsig}{\bmz}
\safemath{\zsigdim}{\ysigdim}
\safemath{\rsig}{\bmr}
\safemath{\Adict}{\bA}
\safemath{\Adicttilde}{\widetilde{\Adict}}
\safemath{\Adictdim}{\outputdim\times\xsigdim}
\safemath{\avec}{\bma}
\safemath{\avectilde}{\tilde{\avec}}
\safemath{\Bdict}{\bB}
\safemath{\Bdicttilde}{\widetilde{\Bdict}}
\safemath{\Cdict}{\bC}
\safemath{\cvec}{\bmc}
\safemath{\Ddict}{\bD}
\safemath{\Ddictdim}{\ysigdim\times\xsigdim}
\safemath{\dvec}{\bmd}
\safemath{\Ddicttilde}{\widetilde{\bD}}
\safemath{\Bonb}{\bB}
\safemath{\bvec}{\bmb}
\safemath{\Bonbdim}{\ysigdim\times\ysigdim}
\safemath{\noise}{\bmn}
\safemath{\noisedim}{\ysigim}
\safemath{\err}{\bme}
\safemath{\errdim}{\ysigdim}
\safemath{\errset}{\setE}
\safemath{\nerr}{n_e}
\safemath{\delop}{\bP_\errset}
\safemath{\delopc}{\bP_{{\errset}^c}}

%

\safemath{\cplxi}{\imath}
\safemath{\cplxj}{\jmath}

\safemath{\dict}{\matD}
\safemath{\inputdim}{N}		
\safemath{\outputdim}{M}		
\safemath{\sparsity}{S}	
\safemath{\inputdimA}{{N_a}}	
\safemath{\inputdimB}{{N_b}}	
\safemath{\elemA}{{n_a}}	
\safemath{\elemB}{{n_b}}	
\safemath{\resA}{\matR_a}	
\safemath{\resB}{\matR_b}	
\safemath{\subD}{\matS} 
\safemath{\subA}{\matS_a} 
\safemath{\subB}{\matS_b} 
\safemath{\dicta}{\matA} 	
\safemath{\dictb}{\matB} 	
\safemath{\hollowS}{H}
\safemath{\hollowA}{H_a}
\safemath{\hollowB}{H_b}
\safemath{\cross}{Z}
\safemath{\coh}{\mu_d}			
\safemath{\coha}{\mu_a}			
\safemath{\cohb}{\mu_b}			
\safemath{\mubs}{\nu}	
\safemath{\cohm}{\mu_m} 
\safemath{\dictset}{\setD}	
\safemath{\dictsetp}{\dictset(\coh,\coha,\cohb)}	
\safemath{\dictsetgen}{\dictset_\text{gen}}
\safemath{\dictsetgenp}{\dictsetgen(\coh)}
\safemath{\dictsetonb}{\dictset_\text{onb}}
\safemath{\dictsetonbp}{\dictsetonb(\coh)}

\safemath{\leftside}{U}
\safemath{\rightsideA}{R_a}
\safemath{\rightsideB}{R_b}

\safemath{\indexS}{\setI_S} 

\safemath{\na}{n_a}			
\safemath{\nb}{n_b}			
\safemath{\coeffa}{p_i}	
\safemath{\coeffb}{q_j}	
\safemath{\seta}{\setP}		
\safemath{\setb}{\setQ}     
\safemath{\setw}{\setW}	
\safemath{\setz}{\setZ}	
\safemath{\cola}{\veca}		
\safemath{\colb}{\vecb}		
\safemath{\cold}{\vecd}		
\safemath{\inputvec}{\vecx} 	
\safemath{\error}{\vece}	
\safemath{\noiseout}{\vecz} 	
\safemath{\inputvecel}{x}
\safemath{\inputveca}{\vecx_a}
\safemath{\inputvecb}{\vecx_b}
\safemath{\outputvec}{\vecy}	
\safemath{\lambdamin}{\lambda_{\mathrm{min}}}


\safemath{\elltwo}{\ell_2}
\safemath{\ellone}{\ell_1}
\safemath{\ellzero}{\ell_0}
\safemath{\ellinf}{\ell_\infty}
\safemath{\ellinftilde}{\ell_{\widetilde\infty}}
\safemath{\licard}{Z(\coh,\coha,\cohb)}
\safemath{\xsol}{\hat{x}}
\safemath{\xbord}{x_b}		
\safemath{\xstat}{x_s}		
\safemath{\xstatLone}{\tilde{x}_s}
\safemath{\order}{\mathcal{O}} 
\safemath{\scales}{\Theta} 
\safemath{\ones}{\mathbf{1}} 
\safemath{\zeroes}{\mathbf{0}} 
\safemath{\thlone}{\kappa(\coh,\cohb)} 
\safemath{\constoneA}{\delta} 
\safemath{\constoneB}{\epsilon} 
\safemath{\nlarge}{L}				   
\safemath{\sumlarge}{S_\nlarge}
\safemath{\maxlarger}{P_\nlarge}	   
\safemath{\Pzero}{\textrm{P0}}	
\safemath{\Pone}{\textrm{P1}}
\safemath{\vecfir}{\vecw}			 
\safemath{\vecsec}{\vecz}
\safemath{\elvecfir}{w}              
\safemath{\elvecsec}{z}				 
\safemath{\nlargefir}{n}
\safemath{\normout}{\gamma}
\safemath{\auxfun}{h}
\safemath{\supp}{\textrm{supp}}

\safemath{\indexa}{\ell}
\safemath{\indexb}{r}
\safemath{\indexc}{i}
\safemath{\indexd}{j}

\safemath{\project}{P}

\allowdisplaybreaks 

\newcommand{\ar}{a_\textnormal{R}}%
\newcommand{\ai}{a_\textnormal{I}}%
\newcommand{\Sr}{S_\textnormal{R}}%
\newcommand{\dd}{\textnormal{d}}%

\usepackage{color}

\newcommand{\shat}[1]{\hat\bms#1}
\newcommand{\shate}[1]{\hat{s}#1}

\newcommand{\srv}[1]{S#1}

\newcommand{\resid}[1]{\bmr#1}
\newcommand{\reside}[1]{r#1}




\newcommand{\realpart}[1]{\textnormal{Re}\!\left\{ #1 \right\}\!}

\newcommand{\imagpart}[1]{\textnormal{Im}\!\left\{ #1 \right\}\!}

\safemath{\LAMA}{\textrm{LAMA}}
\safemath{\MRT}{\textrm{MRT}}
\safemath{\betamax}{\beta^\textnormal{max}_\setO}
\safemath{\betamaxno}{\beta^\textnormal{max}}
\safemath{\betamin}{\beta^\textnormal{min}_\setO}
\safemath{\betaminno}{\beta^\textnormal{min}}

\safemath{\Nomin}{\No^\textnormal{min}(\beta)}
\safemath{\Nominnobeta}{\No^\textnormal{min}}
\safemath{\Nomax}{\No^\textnormal{max}(\beta)}
\safemath{\Nomaxnobeta}{\No^\textnormal{max}}

\safemath{\MAP}{\textrm{MAP}}
\safemath{\IO}{\textrm{IO}}
\safemath{\JO}{\textrm{JO}}
\safemath{\Nopost}{N_{0}^\textnormal{post}}
\safemath{\MT}{{M_\textnormal{T}}}
\safemath{\MR}{{M_\textnormal{R}}}
\safemath{\Tran}{\textnormal{T}}
\safemath{\Herm}{\textnormal{H}}
\safemath{\row}{\textnormal{r}}
\safemath{\col}{\textnormal{c}}


\begin{document}

\title{Optimal Data Detection in Large MIMO}\author{Charles Jeon, Ramina Ghods, Arian Maleki, and Christoph Studer\thanks{This paper was presented in part at the 2015 International Symposium on Information Theory (ISIT)\cite{JGMS2015conf}.}\thanks{C. Jeon, R.~Ghods, and C.~Studer are with the School of Electrical and Computer Engineering, Cornell University, Ithaca, NY; e-mail: {jeon@csl.cornell.edu}, {rghods@csl.cornell.edu},  {studer@cornell.edu}; web: vip.ece.cornell.edu} \thanks{A. Maleki is with Department of Statistics at Columbia University, New York City, NY; e-mail: {arian@stat.columbia.edu}.} \thanks{MATLAB code to reproduce our numerical simulations will be made available upon (possible) acceptance of paper.}}
\maketitle

\begin{abstract}
Large multiple-input multiple-output (MIMO) appears in massive multi-user  MIMO and randomly-spread code-division multiple access (CDMA)-based wireless systems. In order to cope with the excessively high complexity of optimal data detection in such systems, a variety of efficient yet sub-optimal algorithms have been proposed in the past. 
In this paper, we propose a data detection algorithm that is computationally efficient and optimal in a sense that it is able to achieve the same error-rate performance as the individually optimal (IO) data detector under certain assumptions on the MIMO system matrix and constellation alphabet.
Our algorithm, which we refer to as \LAMA (short for LArge MIMO AMP),  builds on complex-valued Bayesian approximate message passing (AMP), which enables an exact analytical characterization of the performance and complexity in the large-system limit via the state-evolution framework.
We derive optimality conditions for LAMA and investigate performance/complexity trade-offs. As a byproduct of our analysis, we recover classical results of IO data detection for randomly-spread CDMA.
We furthermore provide practical ways for LAMA to approach the theoretical performance limits in realistic, finite-dimensional systems at low computational complexity. 
\end{abstract}


\begin{IEEEkeywords}
Approximate message passing (AMP), individually optimal (IO) data detection, massive multi-user MIMO, state evolution, randomly-spread code-division multiple access (CDMA).
\end{IEEEkeywords}

%


\section{Introduction}
\label{sec:intro}
We consider the problem of recovering the $\MT$-dimensional data vector $\vecs_0\in\setO^\MT$ from the noisy input-output relation \mbox{$\vecy=\bH\vecs_0+\bmn$}, by solving the  
individually-optimal (IO) data detection problem \cite{guo2003multiuser,GV2005}
\begin{align*}
(\IO) \quad s_\ell^\IO & = \argmax_{\tilde s_\ell\in\setO} \,p\!\left(\tilde s_\ell \,|\, \bmy, \bH\right)\!,
\quad \ell=1,2,\ldots,\MT,
\end{align*}
where $p\!\left(\tilde s_\ell \,|\, \bmy, \bH\right)$ 
is the probability density function conditioned on observing the receive vector~$\vecy\in\complexset^\MR$ and assuming Gaussian noise for the noise vector \mbox{$\vecn\in\complexset^\MR$}.
The scalar $s_\ell^\IO$ corresponds to the $\ell$th IO estimate, $\setO$ is a finite constellation set (e.g., PAM, PSK, or QAM), $\MT$ and $\MR$ denote the number of transmitters and receivers, respectively,  and \mbox{$\bH\in\complexset^{\MR\times\MT}$} represents the (known)  multiple-input multiple-output (MIMO) channel matrix.

We develop a computationally efficient algorithm, referred to as \LAMA (short for \underline{la}rge \underline{M}IMO \underline{a}pproximate message passing), which is able to achieve the error-rate performance of the \IO data-detector under certain assumptions on the MIMO channel matrix and the constellation alphabet.
We show that in the large system limit, i.e., for $\beta=\MT/\MR$ and $\MT\to\infty$, and for i.i.d.\ Rayleigh fading MIMO channels, LAMA decouples the noisy MIMO system into a set of independent additive white Gaussian noise (AWGN) channels with equal signal-to-noise ratio (SNR); see \fref{fig:introfigure2} for an illustration of this decoupling property. 
LAMA is iterative in nature and enables one to compute the noise variance  $\sigma_t^2$ of each decoupled AWGN channel in each iteration $t$. This property allows for a precise analysis of the algorithm's performance (in terms of achievable rates and error rate) and complexity (in terms of the number of LAMA iterations).

Figure \ref{fig:introfigurea} demonstrates that \LAMA (i) is able to achieve the error-rate performance of the individually optimal detector for a square MIMO system (i.e.,  $\MR=\MT$) in the large-system limit, and (ii) closely approaches the error-rate performance of the IO data detector in finite-dimensional systems.
Furthermore, we can accurately characterize the performance/complexity trade-offs without the need for expensive system simulations; see \fref{fig:introfigureb} for an illustration.


%

\begin{figure}[tp]
\centering
%
\subfigure[]{\includegraphics[width=0.35\textwidth]{./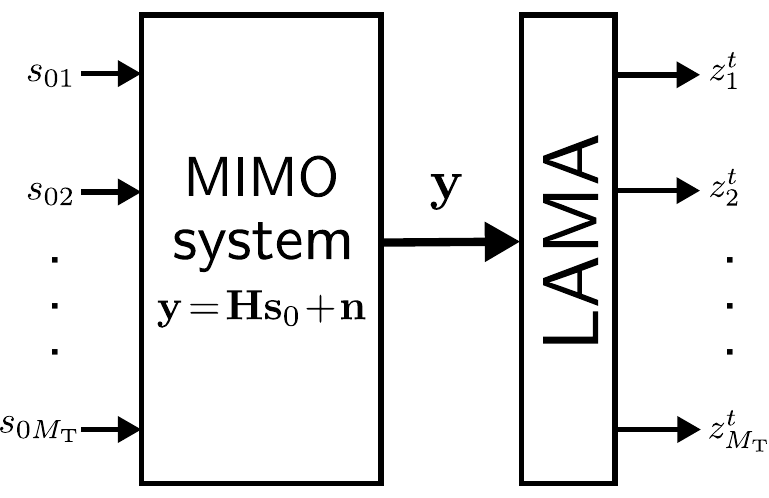}\label{fig:introfigurec}}
\hspace{0.4cm}
\subfigure[]{\includegraphics[width=0.35\textwidth]{./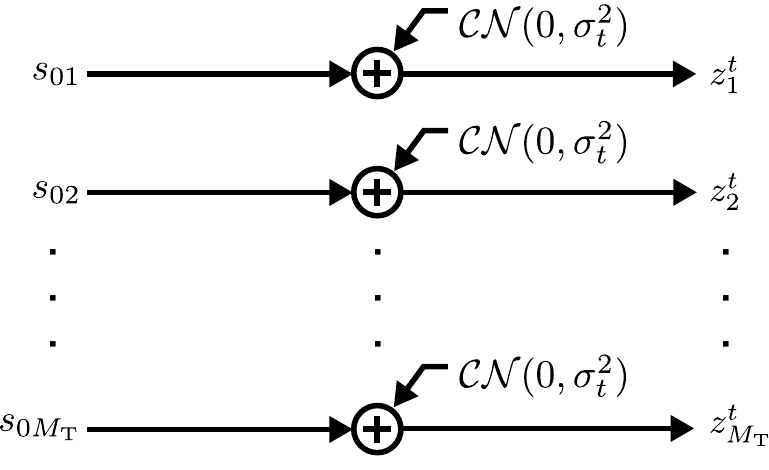}\label{fig:introfigured}}
\caption{Decoupling property of \LAMA. (a) Large MIMO system. (b) \LAMA decouples the system into parallel and independent AWGN channels with equal noise variance in the large-system limit  ($\beta=\MT/\MR$ and $\MT\to\infty)$.
The state-evolution (SE) framework provides exact expressions for the AWGN noise variance $\sigma_t^2$ at iteration~$t$, which enables a precise analysis of LAMA's performance (in terms of achievable rates and error rate) and complexity (in terms of the number of algorithm iterations).} \label{fig:introfigure2}
\end{figure}

\begin{figure}[tp]
\centering
\subfigure[]{\includegraphics[height=0.25\textheight]{./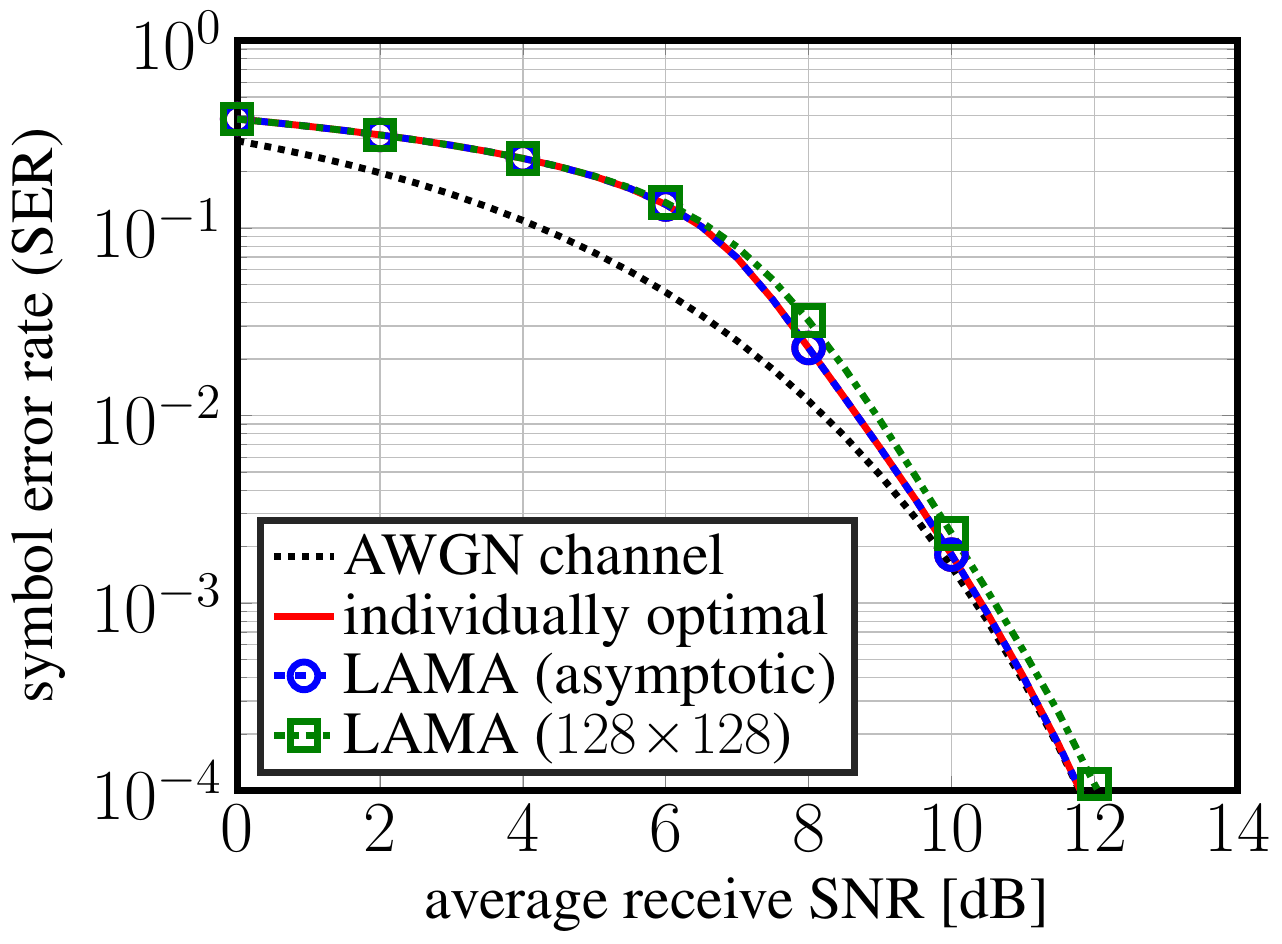}\label{fig:introfigurea}}
\hspace{0.4cm}
\subfigure[]{\includegraphics[height=0.25\textheight]{./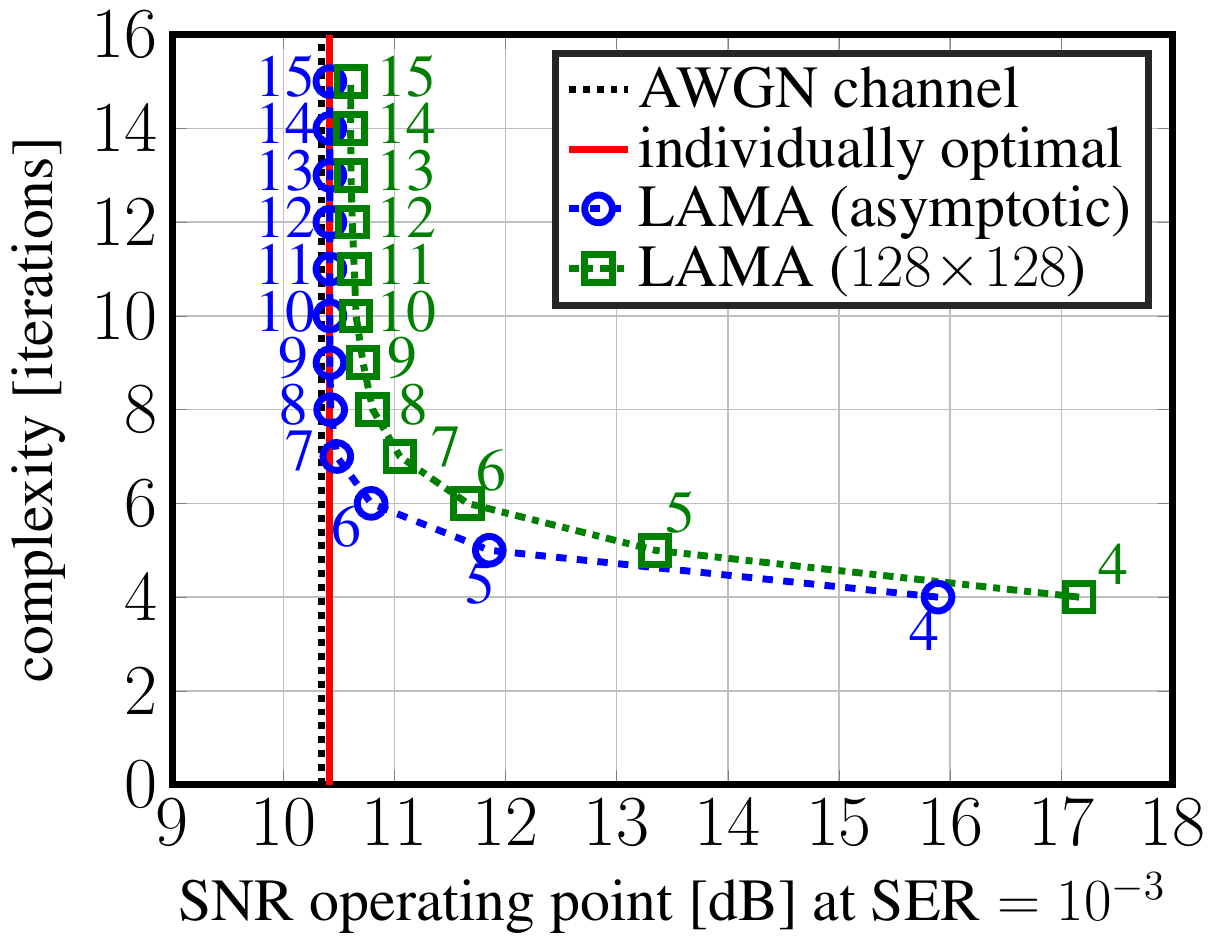}\label{fig:introfigureb}}
\caption{Capabilities of \LAMA in large MIMO with a square i.i.d.\ Gaussian system matrix and QPSK modulation. (a) Symbol error-rate (SER) in the large-system limit ($\beta=\MT/\MR=1$ and $\MT\to\infty$) compared to the optimal SER and the SER of an AWGN channel. 
\LAMA achieves the same error-rate performance as the IO data detector and approaches AWGN performance for sufficiently large SNR values; we also see that \LAMA closely approaches the theoretical performance limits for finite dimensions (i.e., for a $128\times128$ MIMO system). (b) Performance/complexity trade-off in the large-system limit (analytical) and for finite dimensions (simulated); a small number of LAMA iterations is sufficient  to approach the theoretical performance limits.} \label{fig:introfigure}
\end{figure}

\subsection{Application Examples}
The considered MIMO system model covers a variety of applications, including the following examples. 

\subsubsection{Massive Multi-User (MU) MIMO}

Massive MU-MIMO (also known as large-scale or full-dimensional MIMO) will be a key technology to meet the demands for higher spectral efficiency and quality-of-service-in fifth-generation (5G) wireless systems \cite{rusek2013scaling,LETM2014,ABCHLAZ2014}.
Massive MU-MIMO relies on  hundreds of antennas at the base-station (BS) that serve tens of users simultaneously and in the same frequency band. 
This technology
promises significant gains in terms of spectral efficiency as well as lower operational power consumption compared to that of existing, small-scale MIMO systems \cite{LETM2014}. 
In addition, in the large BS-antenna limit, i.e., where $\MR\to\infty$ and the total number $\MT$ of user antennas remains constant, low-complexity data detection and precoding methods (such as the matched filter) turn out to be optimal \cite{Marzetta10}. 
However, as demonstrated in \cite{HBD11,WBVSCJD2013,WYWDCS2014}, practical (finite-dimensional) antenna configurations require more sophisticated data detection algorithms, which entail high computational complexity.
The proposed \LAMA algorithm enables high-performance and low-complexity data detection in practical massive MU-MIMO systems with higher-order modulation schemes, and allows for an accurate prediction of the fundamental performance/complexity trade-offs.

\subsubsection{Code-Division Multiple Access (CDMA)} 

CDMA is a classical transmission technology, in which multiple users simultaneously access a common resource (such as time or frequency) by modulating their individual information signals using spreading sequences \cite{L1991,V1995,GJPVWW1991,HP1997,LV1989}.
A significant portion of the CDMA literature studied the limits (such as the achievable rates for a given modulation scheme) of randomly spread CDMA. 
In the considered system model, the spreading matrix corresponds to $\bH$ with i.i.d.\ zero-mean Gaussian entries, $\MT$ denotes the number of users, and the spreading sequences are of length~$\MR$.
For common constellations (such as PAM, PSK, or QAM), we provide conditions
 that depend on the \emph{system ratio} $\beta=\MT/\MR$ (also known as the loading factor) in the large-system limit for which \LAMA achieves the same error-rate performance of the IO data detector \cite{GV2005}.
Our analysis recovers classical results from the CDMA literature \cite{GV2005,TanakaCDMA,K2003} while providing practical means for closely approaching these limits in finite-dimensional systems at low computational complexity.

\subsubsection{Finding Discrete Solutions to Systems of Linear Equations}
The considered system model also enables one to study the recovery of integer solutions to the (noisy) system of linear equations \mbox{$\vecy=\bH\vecs+\bmn$}. For noiseless observations, i.e., $\vecy=\bH\vecs$, and for the case of $\setO$ being (a subset of) the integers, \LAMA is able to perfectly recover $\vecs\in\setO^\MT$ provided that the  entries of the system matrix $\bH$ are i.i.d.\ zero-mean Gaussian distributed  and the system ratio $\beta=\MT/\MR$ does not exceed a certain \emph{exact recovery threshold (ERT)}. 
This result is relevant for solving systems of linear Diophantine equations, which finds, for example, use in number theory, cryptography, or closest vector problems in lattices; see \cite{Brown1958,contejean1994efficient,fincke1985improved,agrell2002closest} and the references therein. 

\subsection{Relevant Prior Results}
Early results on optimal data detection in large MIMO systems reach back to \cite{VS1999} where Verd\'u and Shamai analyzed the spectral efficiency of multi-user detectors in randomly-spread CDMA systems. 
The authors provided a precise characterization of the achievable rates with optimal data detection and demonstrated that the system's randomness (due to the random spreading sequences) disappears in the large-system limit.  Tanaka \cite{TanakaCDMA} derived analytical expressions for the error-rate performance and the multi-user efficiency (equivalent to the noise variance in a single AWGN channel) for the IO data detector using the {replica method} \cite{MPV1987}; Tanaka's results were obtained for BPSK constellations using the replica method in \cite{TanakaCDMA} and later proven rigorously in \cite{MT06}. 
Guo and Verd\'u provided an extension of these results to arbitrary discrete inputs \cite{GV2005}. Moreover, it was shown that for a certain family of multi-user detectors, referred as posterior mean estimators (PMEs), the communication system decouples into a set of parallel and independent AWGN channels with equal SNR \cite{TanakaCDMA,GV2005,GW2006,GW2007}. 
All of these results study the fundamental performance of IO detection in the large-system limit, i.e., for $\beta=\MT/\MR$ with $\MT\rightarrow\infty$. 
Corresponding practical algorithms have been proposed for  BPSK in real-valued systems \cite{K2003}, \cite{CMT2004}---in contrast, LAMA is a practical  algorithm  for
general constellations and complex-valued systems, and enables a corresponding theoretical performance analysis.

LAMA builds upon approximate message passing (AMP) \cite{donoho2009message,BM2011,Maleki2010phd}, which was initially proposed for sparse signal recovery and compressive sensing~\cite{Donoho06,CSintro_Candes08,MMB2014}.  
In the large-system limit, the estimates obtained by AMP correspond to the true signal perturbed by i.i.d.\ Gaussian noise \cite{andreaGMCS}.
In addition, the variance of the Gaussian random variables in each AMP iteration can be tracked exactly via the state evolution (SE) framework \cite{donoho2009message,BM2011}; this feature enables an exact performance analaysis.
AMP has been generalized to i.i.d. signal priors using the Bayesian AMP framework \cite{DMM10a,DMM10b,Maleki2010phd} and to sparse recovery in complex-valued systems \cite{malekiCAMP}. More recently, AMP and the SE framework have been extended to more general observation models in \cite{GAMP2011,JM2012,RSF2017}.  Within the last few years, AMP has been successfully deployed in a variety of applications~\cite{VS2013,KRFU2014,RSRFC2013,VS2011}, including signal restoration \cite{KGR2012,maechler2012vlsi}, imaging \cite{SS2012}, phase retrieval \cite{SR2012},  and de-noising \cite{MMB2014,MZB2014}.
AMP-related algorithms have also been used for data detection in many different communication systems \cite{NSE14,NL2014,S2011,WKNLHG14,KGR2012}. While these results showcase the potential of AMP for data detection in wireless systems, they lack of a rigorous performance analysis.
In this paper, we focus on a theoretical performance analysis of AMP for data detection and provide conditions for which it achieves IO performance.

\subsection{Contributions}

In this paper, we build upon complex-valued AMP \cite{malekiCAMP} and (real-valued) Bayesian AMP~\cite{Maleki2010phd} in order to develop the complex Bayesian AMP (cB-AMP) algorithm and its complex state evolution (cSE) framework. 
Our derivations incorporate the possibility of having a mismatch between the actual and a postulated noise variance.
We then specialize cB-AMP to data detection in large MIMO, resulting in the  \LAMA algorithm. Our key contributions are as follows.
\begin{itemize}
\item We study \LAMA in the massive MU-MIMO limit, i.e., when $\beta\to0$, and show that for such a scenario, simple low-complexity algorithms achieve IO performance.
\item We demonstrate that the SE recursions of \LAMA are identical to the fixed-point equations that predict the optimal multiuser efficiency developed in \cite{TanakaCDMA,GV2005,GW2006}. 
\item We develop conditions for which \LAMA achieves the same error-rate performance as the IO data detector.
\item We derive exact recovery thresholds (ERTs), for which \LAMA perfectly recovers signals from PAM, PSK, and QAM alphabets in noiseless systems.
\item We investigate the achievable rates and error-rate performance of \LAMA  for PAM, PSK, and QAM constellations, and analyze the impact of the system ratio $\beta$.
\item We characterize the performance/complexity trade-off of  \LAMA and show that only a few algorithm iterations are sufficient to achieve near-IO performance. 
\item We discuss the efficacy and limits of the proposed \LAMA algorithms in practical (finite-dimensional) large-MIMO systems and provide corresponding numerical results. 
\end{itemize}

\subsection{Notation}
Lowercase and uppercase boldface letters represent column vectors and matrices, respectively. For a matrix $\bH$, we define its transpose and Hermitian to be $\bH^\Tran$ and $\bH^\Herm$, respectively. The $\ell$th column and the $k$th row vector of the matrix $\bH$ are denoted by $\bmh_\ell^\col$ and $\bmh_k^\row$ respectively, the entry on the $k$th row and $\ell$th column is $H_{k,\ell}$, and the $k$th entry of a vector $\vecx$ is $x_k$. 
For a $N$-dimensional vector $\bmx$, we define its complex conjugate by $\bmx^*$ and its $k$th entry by~$x_k$. The $M\times M$ identity matrix is denoted by $\bI_M$ and the $M\times N$ all-zeros matrix by $\bm{0}_{M\times N}$. 
The real and imaginary parts of scalars, vectors, and matrices are denoted by $\realpart{\cdot}$ and $\imagpart{\cdot}$, respectively. We use $\left\langle\,\cdot\,\right\rangle$ to represent the averaging operator $\left\langle \bmx \right\rangle = \frac{1}{N}\sum_{k=1}^N x_k$. Multivariate real-valued and complex-valued Gaussian probability density (pdf) functions are denoted by $\setN(\bmm,\bK)$ and $\setC\setN(\bmm,\bK)$, respectively, where~$\bmm$ is the mean vector and $\bK$ the covariance matrix; $\Exop_X\!\left[\,\cdot\,\right]$ denotes expectation and $\Varop_X\!\left[\,\cdot\,\right]$ denotes  variance with respect to the pdf of the random variable~$X$.

\subsection{Paper Outline}

The rest of the paper is organized as follows. \fref{sec:systemmod} introduces the complex Bayesian AMP (cB-AMP) algorithm and the complex state evolution (cSE) framework. \fref{sec:DAMP} derives the LAMA algorithm. \fref{sec:SE_MI_ER} provides optimality conditions. \fref{sec:numerical} presents  corresponding numerical results and discusses practical considerations. \fref{sec:prior_art} summarizes prior art relevant to LAMA and \fref{sec:conclusion} concludes the paper. All proofs are relegated to the appendices.



\section{cB-AMP: Complex Bayesian AMP} \label{sec:systemmod}

We start by developing the complex Bayesian AMP (cB-AMP) framework which builds the foundation of the \LAMA algorithm developed in \fref{sec:DAMP}.
We specify our model assumptions, derive cB-AMP, and detail the complex-valued state-evolution (cSE) framework.

\subsection{System Model and Assumptions}

We estimate the complex-valued {data vector} $\vecs_0\in\complexset^\MT$ with known i.i.d.\ prior distribution $p(\vecs_0)=\prod_{\ell=1}^{\MT}p(s_{0\ell})$ from the following MIMO  input-output relation:
\begin{align} \label{eq:systemmodel}
\vecy = \bH\vecs_0+\vecn.
\end{align}
The number of transmitters and receivers are denoted by $\MT$ and $\MR$, respectively, and we do not impose any assumption on the so-called  \emph{system ratio} (also known as the loading factor in CDMA literature \cite{GV2003}), which we define as $\beta=\MT/\MR$. 
%
We will often use the following definition:

\begin{defi}\label{def:largesystemlimit}
For a MIMO system with $\MT$ and $\MR$ transmitters and receivers respectively, we define the \emph{large-system limit} by fixing the system ratio $\beta=\MT/\MR$ and letting $\MT\to\infty$.
\end{defi}

In what follows, we will consider underdetermined ($\beta\leq1$) as well as  overdetermined ($\beta>1$) systems. 
%
The {receive vector} in \eqref{eq:systemmodel} is given by $\vecy\in\complexset^\MR$ and the entries of the {noise vector} \mbox{$\vecn\in\complexset^\MR$} are assumed to be  i.i.d.\ circularly-symmetric complex Gaussian with variance $\No$ per complex entry.
The MIMO system matrix \mbox{$\bH\in\complexset^{\MR\times\MT}$} is assumed to be perfectly known to the receiver.
We will frequently use of the following assumptions on the MIMO system matrix~$\bH$~\cite{andreaGMCS}:
\sloppy
\begin{description}
\item[(A1)] The entries of $\bH$ are normalized so that the columns are zero mean and have unit $\ell_2$-norm; the real and imaginary parts are independent with identical variance.
Furthermore, all entries have similar magnitude $O(1/\sqrt{\MR})$ and are pairwise independent. 
\item[(A2)] The entries $H_{k,\ell}\sim\setC\setN(0,1/\MR)$ are i.i.d.\ circularly-symmetric complex Gaussian.
\end{description}
\fussy
We note that (A2) implies (A1) in the large-system limit; see \cite{andreaGMCS} for the details. 
Throughout the paper, we define the average receive signal-to-noise-ratio \SNR as:
\begin{align}\label{eq:SNR}
\SNR = \frac{\Exop[\vecnorm{\bH\bms_0}^2_2]}{\Exop[\vecnorm{\bmn}^2_2]} 
= \beta\frac{E_s}{\No},
\end{align}
where $E_s=\Exop[\abs{s_{0\ell}}^2]$ for all $\ell=1,\ldots,\MT$.
\sloppy
We also consider the case in which the receiver assumes the following (possibly) mismatched input-output relation:
\begin{align} \label{eq:systemmodel2}
\vecy = \bH\vecs_0+\vecn^\text{post}.
\end{align}
Here, $\bmn^\text{post}\sim\setC\setN(\mathbf{0}_{\MR\times1}, \Nopost\bI_\MR)$ models noise with postulated noise variance $\Nopost$ (not necessarily equal to $\No$). The model in \eqref{eq:systemmodel2} allows us to analyze a mismatch between the true noise variance $\No$ and the postulated noise variance~$\Nopost$ assumed by the detector. The case $\No = \Nopost$ corresponds to an ideal system with perfect knowledge of the noise variance. 

\fussy

\subsection{Complex Bayesian AMP (cB-AMP)}\label{sec:ampprior}

To arrive at an efficient algorithm that achieves the same error-rate performance as the IO data detector, we start with the Bayesian AMP (B-AMP) algorithmm proposed in \cite{DMM10a,DMM10b,Maleki2010phd} to obtain a marginalized distribution $p(\tilde{s}_{\ell}\vert\bmy,\bH)$ for each stream $\ell$ (also called layer).  
With the marginalized distribution, B-AMP enables the estimation of a vector~$\vecs_0$ from a real-valued version of the system model \fref{eq:systemmodel}. 
While B-AMP can---in certain cases---be applied to complex-valued systems using the well-known real-valued decomposition\footnote{The complex-valued model \fref{eq:systemmodel} can be rewritten as the following real-valued model: 
\begin{align*}
\left[\!\!
\begin{array}{c}
\realpart{\bmy}\\
\imagpart{\bmy}
\end{array}
\!\!
\right]
= 
\left[\!\!
\begin{array}{cr}
\realpart{\bH} \!\!\!\! & -\imagpart{\bH} \\
\imagpart{\bH} \!\!\!\! & \realpart{\bH}
\end{array}\!\!
\right]
\left[\!\!
\begin{array}{c}
\realpart{\bms}\\
\imagpart{\bms}
\end{array}
\!\!
\right]
+
\left[\!\!
\begin{array}{c}
\realpart{\bmn}\\
\imagpart{\bmn}
\end{array}
\!\!
\right].
\end{align*}},
the effective, real-valued system matrix $\overline\bH\in\reals^{2\MR\times2\MT}$ 
(i) violates the independence assumptions on the entires of $\bH$ of (A1),
and (ii) prevents the use of non-separable symbol alphabets, such as phase-shift keying (PSK) constellations. 
To overcome both of these drawbacks, we develop a complex-valued version of B-AMP, which we refer to as cB-AMP.
%
We start with Bayes' rule and factorize  
\begin{align} \label{eq:factoredMAP}
 p\!\left(\bmy \,|\, \bms, \bH\right)p\!\left(\bms\right)
= \prod_{k=1}^{\MR} p\!\left(y_k\,|\,\bms,\vech^\row_k\right) \prod_{\ell=1}^{\MT}p\!\left(s_\ell\right)\!,
\end{align}
where we assume (i) complex Gaussian noise with postulated noise variance $\Nopost$ 
given by
\begin{align*}
p\!\left(y_k\,|\,\bms,\vech^\row_k\right) &= \frac{1}{Z}\exp\!\left(-\frac{1}{\Nopost}\left| y_k - \bmh_k^\row\bms\right|^2\right)\!,
\end{align*}
with the constant $Z$ so that $\int_\complexset p\!\left(y_k \,|\, \bms, \vech^\row_k\right) \text{d}y_k = 1$, and (ii) that the transmitted symbols are i.i.d. 

To arrive at an efficient inference method, we deploy the sum-product message-passing algorithm~\cite{MM2009book}.
However, as noted in \cite{donoho2009message}, a corresponding full-fledged message passing scheme is impractical. Hence, as in \cite{andreaGMCS,Maleki2010phd}, we simplify the algorithm by assuming a Gaussian distribution for the marginal densities of the messages $p(\shate_\ell\,|\,s_\ell,\tau)\sim \setC\setN(s_\ell,\tau)$ so that~\cite{andreaGMCS}
\begin{align}\notag
f\!\left(s_\ell \,|\, \shate_\ell,\tau\right) &= \frac{p(\shate_\ell\,|\,s_\ell,\tau)p(s_\ell)}{p(\shate_\ell,\tau)}\\
&= \frac{1}{Z'} \exp\!\left(-\frac{1}{\tau}\!\left\vert s_\ell - \shate_\ell\right\vert^2 \right)\! p\!\left(s_\ell\right)\!\label{eq:gaussianmeasure},
\end{align}
with the normalization constant $Z'$. We denote the conditional mean $\mathsf{F}\!\left(\shate_\ell,\tau \right)$ and variance $\mathsf{G}\!\left(\shate_\ell,\tau \right)$ of a random variable $S$ distributed according to \fref{eq:gaussianmeasure} as the message mean and message variance; both quantities are defined as follows:
\begin{align}\label{eq:F_compute}\mathsf{F}\!\left(\shate_\ell,\tau \right)&=\Exop_S[S\,|\, \shate_\ell,\tau] \\
\label{eq:G_compute}\mathsf{G}\!\left(\shate_\ell,\tau \right)&=\Varop_S[S\,|\, \shate_\ell,\tau].
\end{align}

With the methods developed in \cite{Maleki2010phd,malekiCAMP}, we can simplify the sum-product message-passing computations for \fref{eq:factoredMAP} which stems from the Gaussian assumption for the marginal densities of the messages.
We refer to the resulting algorithm as complex Bayesian AMP (cB-AMP), which is summarized below  (and derived in detail in \fref{app:damp-complex}):

\newtheorem{alg}{Algorithm}
\begin{alg}
\label{alg:cB-AMP} Suppose that $\bH$ satisfies (A1) and \cite[Lem.~5.56]{Maleki2010phd} holds. Then, the \emph{complex Bayesian AMP (cB-AMP)} algorithm performs the following steps for each iteration $t=1,2,\ldots,$:
\begin{align}\nonumber
\shat^{t+1} &= \, \mathsf{F}\!\left(\shat^{t} + \bH^\Herm\resid^t,\Nopost(1+\tau^t)\right)\\\nonumber
\resid^{t+1} &= \, \bmy - \bH \shat^{t+1}\\\nonumber
&\quad+ \frac{\beta\resid^{t}}{2}\!\left\langle \!\left(\partial_1\mathsf{F}^\textnormal{R} + \partial_2\mathsf{F}^\textnormal{I} \right)\!\left(\shat^{t} + \bH^\Herm\resid^t,\Nopost(1+\tau^t)\right) \right\rangle \\\label{eq:cb_amp_resid}
&\quad- i\frac{\beta\resid^{t}}{2}\!\left\langle \!\left(\partial_2\mathsf{F}^\textnormal{R} - \partial_1\mathsf{F}^\textnormal{I} \right)\!\left(\shat^{t} + \bH^\Herm\resid^t,\Nopost(1+\tau^t)\right) \right\rangle\\\nonumber
\tau^{t+1} &=\, \frac{\beta}{\Nopost}\!\left\langle \mathsf{G}\left( \shat^{t} + \bH^\Herm \resid^t ,\Nopost(1+\tau^t)\right)\right\rangle\!,
\end{align}
where the functions $\partial_{\{1,2\}}\mathsf{F}^{\{\textnormal{R},\textnormal{I}\}}(x+iy,\tau)$ are defined as
\begin{align*}
\partial _1\mathsf{F}^\textnormal{R} &\triangleq \frac{\partial \realpart{\mathsf{F}(x + iy,\tau)}}{\partial x},\quad \partial _2\mathsf{F}^\textnormal{R} \triangleq \frac{\partial \realpart{\mathsf{F}(x + iy,\tau)}}{\partial y},\\
\partial _1\mathsf{F}^\textnormal{I} &\triangleq \frac{\partial \imagpart{\mathsf{F}(x + iy,\tau)}}{\partial x},\quad \partial _2\mathsf{F}^\textnormal{I} \triangleq \frac{\partial \imagpart{\mathsf{F}(x + iy,\tau)}}{\partial y},
\end{align*}
and $\partial_{\{1,2\}}\mathsf{F}^{\{\text{R},\text{I}\}}$, $\mathsf{F}$, as well as $\mathsf{G}$ operate element-wise on vectors.
\end{alg}

We note that $\shat^{t+1}$ in \fref{alg:cB-AMP} corresponds to the (nonlinear) minimum mean-squared error (MMSE) estimate defined in \fref{eq:F_compute}.
For a real-valued system with $\No=\Nopost$, cB-AMP reduces to the real-valued Bayesian AMP (B-AMP) proposed in \cite{Maleki2010phd};  \fref{app:B_AMP_same_cB_AMP} establishes this fact.

\begin{lem}
\label{lem:B_AMP_same_cB_AMP}  Let \mbox{$\Nopost=\No$} and assume~$\bH$ satisfies (A1). If $\bH$, $\bms$, and $\bmn$ are real-valued, then cB-AMP reduces to B-AMP in \cite{Maleki2010phd}.
\end{lem}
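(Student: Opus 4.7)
My plan is to proceed by structural induction: I will first show that if $\bH$, $\bms$, $\bmn$, and the initial iterates are all real, then every cB-AMP iterate stays real-valued, and then verify term-by-term that each update in Algorithm 1 collapses to the corresponding update of B-AMP. The key enabling observation is that when $\bH$ and $\bmy$ are real and the prior $p(s_\ell)$ is supported on $\reals$, both the message-mean function $\mathsf{F}$ and the message-variance function $\mathsf{G}$ become purely real functions of a single real argument.

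First I would unpack $\mathsf{F}$ in the real-prior setting. The Gaussian measure in \eqref{eq:gaussianmeasure} reads $f(s_\ell \mid \shate_\ell,\tau)\propto \exp(-|s_\ell-\shate_\ell|^2/\tau)\,p(s_\ell)$, and for real $s_\ell$ and $\shate_\ell=x+iy$ we have the identity $|s_\ell-(x+iy)|^2=(s_\ell-x)^2+y^2$. Since $y^2$ is independent of $s_\ell$, it cancels between the numerator and the denominator of the posterior mean, so
\bas
\mathsf{F}(x+iy,\tau)=\frac{\int_{\reals} s\,e^{-(s-x)^2/\tau}p(s)\,ds}{\int_{\reals} e^{-(s-x)^2/\tau}p(s)\,ds}
\eas
which is real-valued and independent of $y$. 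The analogous computation applied to $\mathsf{G}$ from \eqref{eq:G_compute} yields a real-valued, $y$-independent posterior variance. In particular, $\mathsf{F}^\mathrm{R}$ coincides pointwise with the real-valued MMSE denoiser that appears in B-AMP (cf.\ \cite{Maleki2010phd}), while $\mathsf{F}^\mathrm{I}\equiv 0$ and hence $\partial_1\mathsf{F}^\mathrm{I}=\partial_2\mathsf{F}^\mathrm{I}=\partial_2\mathsf{F}^\mathrm{R}=0$ identically.

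Next I would run the induction. Starting from $\shat^{0}=\bm{0}$ and $\resid^{0}=\bmy$ (both real), assume $\shat^{t}$ and $\resid^{t}$ are real. Then $\shat^{t}+\bH^\Herm\resid^{t}=\shat^{t}+\bH^\Tran\resid^{t}\in\reals^{\MT}$ since $\bH$ is real; by the previous paragraph applied componentwise, $\mathsf{F}$ and $\mathsf{G}$ output real vectors, so $\shat^{t+1}$ and $\tau^{t+1}$ are real. Substituting the derivative identities into the residual update in \eqref{eq:cb_amp_resid}, the imaginary Onsager correction $\langle\partial_2\mathsf{F}^\mathrm{R}-\partial_1\mathsf{F}^\mathrm{I}\rangle$ vanishes outright, and the real one collapses to $\langle\partial_1\mathsf{F}^\mathrm{R}\rangle=\langle\eta'\rangle$, so $\resid^{t+1}=\bmy-\bH\shat^{t+1}+c\,\beta\resid^{t}\langle\eta'(\shat^{t}+\bH^\Tran\resid^{t},\tau_\mathrm{eff}^t)\rangle$ for a constant $c$ arising from the $\tfrac12$ prefactor. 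Matching this to the B-AMP residual update, together with matching the $\tau$ recursion via $\mathsf{G}=\mathsf{G}_\mathrm{B}$, completes the induction and the proof.

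The main subtlety I anticipate is accounting correctly for the scalar factors: the $\tfrac12$ in the cB-AMP Onsager term, the appearance of $N_0^{\mathrm{post}}(1+\tau^t)$ rather than a bare variance as in B-AMP, and the $\beta/N_0^{\mathrm{post}}$ normalisation of $\tau^{t+1}$ must all line up under the real-to-complex Gaussian variance convention (the real part of a $\setC\setN(0,N_0)$ entry has variance $N_0/2$). I would handle this by tracking a reparametrisation of the effective variance argument passed into $\mathsf{F}$ and $\mathsf{G}$—so that what cB-AMP denotes $N_0^{\mathrm{post}}(1+\tau^t)$ and what B-AMP denotes $\sigma^2+\tau^t_{\mathrm{B}}$ are identified—and then verify that, under this identification, the apparent factor-of-two differences in the Onsager and state-update coefficients compensate one another exactly, leaving the B-AMP recursion of \cite{Maleki2010phd}.
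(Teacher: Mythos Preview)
Your overall skeleton is right---the partial derivatives $\partial_2\mathsf{F}^{\mathrm R}$, $\partial_1\mathsf{F}^{\mathrm I}$, $\partial_2\mathsf{F}^{\mathrm I}$ do vanish for a real-supported prior, and the iterates stay real by induction. The paper's proof uses exactly these facts. The gap is in how you handle the Onsager coefficient.

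You propose to plug the derivative identities into the already-simplified residual update~\eqref{eq:cb_amp_resid} of Algorithm~\ref{alg:cB-AMP} and then absorb the leftover $\tfrac12$ via a noise-variance reparametrisation. That will not close. The $\tfrac12$ in~\eqref{eq:cb_amp_resid} does not come from the noise: it comes from the step in Appendix~\ref{app:damp-complex} where $\sum_b H_{k,b}\,\partial_1\mathsf{F}^{\mathrm R}\,\realpart{H_{k,b}^*\reside_k^t}$ is reduced to $\tfrac{\beta}{2}\reside_k^t\langle\partial_1\mathsf{F}^{\mathrm R}\rangle$ using the \emph{complex} part of (A1), namely that the real and imaginary parts of $H_{k,b}$ are independent with variance $1/(2\MR)$ each. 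When $\bH$ is real that step is different: $H_{k,b}^*=H_{k,b}$, $\realpart{H_{k,b}\reside_k^t}=H_{k,b}\reside_k^t$, and $\sum_b H_{k,b}^2\approx\beta$ (not $\beta/2$), so the correct Onsager term is $\beta\reside_k^t\langle\mathsf{F}'\rangle$ with no $\tfrac12$. No rescaling of the second argument of $\mathsf{F}$ can manufacture this missing factor of two, because $\partial_1\mathsf{F}^{\mathrm R}$ is a derivative in the \emph{first} argument and is unchanged under the identification $\mathsf{F}(x,\tau)=\mathsf{F}_{\reals}(x,\tau/2)$.

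The fix---and this is precisely what the paper does---is to step back one level and start from the unsimplified form~\eqref{eq:cBAMP-residual}, which still carries the explicit sums $\sum_b H_{k,b}(\partial_1\mathsf{F}^{\mathrm R})\realpart{H_{k,b}^*\reside_k^t}$, and redo the (A1)-based reduction with real entries. That immediately gives $\beta\reside_k^t\langle\mathsf{F}'\rangle$, matching \cite[Eq.~5.74]{Maleki2010phd} on the nose. Once you work from~\eqref{eq:cBAMP-residual} rather than~\eqref{eq:cb_amp_resid}, the ``subtlety'' paragraph in your plan becomes unnecessary.
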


\subsection{cSE: Complex State Evolution (with Mismatch)}\label{sec:stateevolution}
Two unique features of AMP-based algorithms are 
(i) the output decouples the system into parallel independent channels with additive Gaussian noise (see \fref{fig:introfigure2} for an illustration), and 
(ii) the noise variance of the decoupled AWGN channel 
can be predicted analytically via fixed-point equations in the large-system limit, which is known as \emph{state evolution} (SE) \cite{andreaGMCS}.
%
 %
The SE framework has been investigated in detail in \cite{Maleki2010phd} for B-AMP and in \cite{malekiCAMP} for CAMP, which is a special case of cB-AMP proposed here\footnote{The SE framework presented \cite{malekiCAMP} focused on sparse signal recovery; we present SE framework for general prior distributions.
}. 
Before we delve into the complex SE (cSE) framework for analysis on the noise variance of the decoupled AWGN channels, we first define the mean-squared error (MSE) of cB-AMP's MMSE output.



\begin{defi}
Suppose that $\bmy = \bH \bms_0 + \bmn$, where the signal $\bms_0$ is distributed according to $\bms_0\sim p(\bms_0)$, $\bmn\sim\setC\setN(\mathbf{0}_{\MR\times1},\No\bI_\MR)$, and the postulated noise variance is $\Nopost$. Let $\shat^{t+1}$ be the MMSE output of cB-AMP after $t$ iterations. We define the  MSE of the MMSE output of cB-AMP after $t$ iterations as follows:
\begin{align}\notag
\text{MSE}_{t}& =\lim_{\MT\rightarrow\infty} \frac{1}{\MT}\vecnorm{\shat^{t+1} - \bms_0}_2^2 
\\&= 
\lim_{\MT\rightarrow\infty} \frac{1}{\MT}
\sum_{\ell=1}^\MT\abs{\mathsf{F}\!\left(\shate_\ell^t+(\bmh_\ell^\col)^\Herm\resid^t,\Nopost(1+\tau^t)\right)\!- s_{0\ell}}^2\!\label{eq:MSE_eq}.
\end{align}
\end{defi}

We now define effective noise variance $\sigma^2_t$, which represents the noise variance of the decoupled AWGN channel after $t$ iterations in the large-system limit (see Figs.~\ref{fig:introfigurec} and~\ref{fig:introfigured} for illustrations).

%
%

\begin{defi}
\label{def:effective_var} The \emph{effective noise variance} for the MMSE estimate of cB-AMP after $t$ iterations is given by 
\begin{align}
\sigma_{t+1}^2 &= 
\lim_{\MR\rightarrow\infty}\frac{1}{\MR}\vecnorm{\bmr^{t+1}}^2_2
= \No + \beta
\text{MSE}_t
.\label{eq:output_noise_var_eq}
\end{align}
\end{defi}

We note that a proof of \fref{eq:output_noise_var_eq} was given in \cite[Lem. 4.1]{BM2012}. While $\sigma_{t+1}^2$ corresponds to the effective noise variance (shown in \fref{fig:introfigured}), the postulated output variance $\gamma^2_{t+1}$ defined below corresponds to the \emph{predicted} value of $\sigma_{t+1}^2$ at iteration $t$ of cB-AMP. 
If there is a mismatch in the noise variance $\Nopost\neq\No$, then the postulated output variance $\gamma^2_t$ differs from the actual noise variance $\sigma^2_t$, i.e., $\gamma^2_t\neq\sigma_t^2$.
\begin{defi}
\label{def:postulated_var} The \emph{postulated output variance} of cB-AMP after $t$ iterations is given by
\begin{align}\notag
\gamma_{t+1}^2 &= 
\lim_{\MT\rightarrow\infty}\Nopost(1+\tau^{t+1})\\
&= \Nopost + \beta\!\!\lim_{\MT\rightarrow\infty}\frac{1}{\MT}\sum_{\ell=1}^\MT \mathsf{G}\!\left(\shate_\ell^t+(\bmh_\ell^\col)^\Herm\resid^t,\Nopost(1+\tau^t)\right)\!\label{eq:input_var_eq}.
\end{align}
\end{defi}

%
We can now formulate the complex SE (cSE) framework with noise variance mismatch for cB-AMP. 
The complex SE framework was proven rigorously in \cite{BM2011}; for completeness, we resort to a heuristic derivation of the proof in \fref{app:statederiv}.

\begin{thm}
\label{thm:CSE} 
Suppose the entries of $\bms_0$ are i.i.d.\ $p(\bms_0)\sim\prod_{\ell=1}^\MT p(s_{0\ell})$ and the entries of the MIMO system matrix $\bH$ satisfy (A2). Let $\bmn\sim\setC\setN(\bm{0}_{\MR\times 1},\No\bI_\MR)$ and $\mathsf{F} : \complexset \rightarrow \complexset$ be a pseudo-Lipschitz function as defined in \cite[Sec. 1.1, Eq. 1.5]{BM2011}. Assume the large-system limit and that the postulated noise variance is $\Nopost$. Then, the effective noise variance $\sigma^2_{t+1}$ in \fref{eq:output_noise_var_eq} and postulated output variance $\gamma_{t+1}^2$ in \fref{eq:input_var_eq} of cB-AMP in iteration $t$ are given by the following coupled recursion:
\begin{align}
\sigma_{t+1}^2 &  
 = \No +\beta\Psi(\sigma_t^2,\gamma_t^2), \label{eq:SErecursion} \\
\gamma_{t+1}^2 & = \Nopost + \beta \Phi(\sigma_t^2,\gamma_t^2).\label{eq:SErecursion2}
\end{align}
The MSE function $\Psi$ and variance function $\Phi$ are defined by
\begin{align}
\label{eq:Psi}
\Psi(\sigma_t^2,\gamma_t^2) &= \Exop_{\srv,Z}\!\left[\abs{ \mathsf{F}\!\left(\srv + \sigma_t Z,\gamma_t^2\right) - \srv}^2 \right]\!,
\\
\label{eq:Phi}
\Phi(\sigma_t^2,\gamma_t^2) &= \Exop_{\srv,Z}\!\left[\mathsf{G}\!\left(\srv + \sigma_t Z,\gamma_t^2\right)\right]\!,
\end{align}
respectively, with $\srv\sim p(\srv)$, $Z \sim\setC\setN(0,1)$. The recursion is initialized at $t=1$ with
\begin{align*}
\sigma_1^2 = \No + \beta \Varop_S[S] \quad \text{and} \quad \gamma_1^2 = \Nopost + \beta\Varop_S[S].
\end{align*}
\end{thm}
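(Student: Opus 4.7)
My plan is to verify \fref{thm:CSE} via the standard state-evolution argument, adapted to the complex-valued setting with possible noise-variance mismatch. The cornerstone is the \emph{decoupling property} of AMP: in each iteration $t$, the pre-denoiser quantity $\shat^{t}+\bH^\Herm\bmr^{t}$ behaves, component-wise in the large-system limit, as $\bms_0+\sigma_t\bmz$ with $z_\ell\sim\setC\setN(0,1)$ i.i.d.\ and independent of $\bms_0$, while the denoiser is evaluated at the postulated variance $\gamma_t^2$. Once this is granted, both recursions \fref{eq:SErecursion}--\fref{eq:SErecursion2} follow directly from the law of large numbers applied to the empirical averages in \fref{def:effective_var} and \fref{def:postulated_var}.

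First I would examine the residual update in \fref{alg:cB-AMP} and verify that the Onsager correction (the last two summands of \fref{eq:cb_amp_resid}) is the precise term needed to cancel the cross-correlation between $\bmr^t$ and the previous estimate $\shat^t$. In the complex case, the divergence of the denoiser splits into the four Wirtinger-type partials $\partial_1\mathsf{F}^\textnormal{R}$, $\partial_2\mathsf{F}^\textnormal{I}$, $\partial_2\mathsf{F}^\textnormal{R}$, $\partial_1\mathsf{F}^\textnormal{I}$, and a Stein-type identity for circularly-symmetric complex Gaussians shows that the empirical averages of these quantities are exactly the coefficients required to eliminate the induced dependency. Under (A2) and the pseudo-Lipschitz assumption on $\mathsf{F}$, this cancellation implies that the entries of $\bH^\Herm\bmr^t$ behave as i.i.d.\ complex-Gaussian perturbations of $\bms_0-\shat^t$ with common variance $\sigma_t^2=\lim_{\MR\to\infty}\vecnorm{\bmr^t}^2_2/\MR$, which establishes the decoupling.

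With the decoupling in hand, the rest is bookkeeping. The $\ell$th summand in \fref{eq:MSE_eq} has the form $|\mathsf{F}(s_{0\ell}+\sigma_t z_\ell,\gamma_t^2)-s_{0\ell}|^2$ with $z_\ell\sim\setC\setN(0,1)$; averaging over $\ell=1,\ldots,\MT$ and invoking the law of large numbers yields $\textnormal{MSE}_t\to\Psi(\sigma_t^2,\gamma_t^2)$, and \fref{eq:output_noise_var_eq} then delivers $\sigma_{t+1}^2=\No+\beta\Psi(\sigma_t^2,\gamma_t^2)$. An identical computation with $\mathsf{G}$ in place of the squared error, combined with \fref{eq:input_var_eq}, produces $\gamma_{t+1}^2=\Nopost+\beta\Phi(\sigma_t^2,\gamma_t^2)$; the additive term is $\Nopost$ rather than $\No$ because the denoiser inside cB-AMP is constructed using the postulated noise level. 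The initialization follows from $\shat^{0}=\bm{0}$ and $\bmr^{0}=\bmy$, which give $\sigma_1^2=\No+\beta\Varop_S[S]$ and $\gamma_1^2=\Nopost+\beta\Varop_S[S]$ by direct substitution.

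The hard part is the rigorous justification of the decoupling step: the entries of $\bH^\Herm\bmr^t$ are not independent because both $\bH$ and $\bmr^t$ depend on the algorithm history, so a naive appeal to the central limit theorem is inadmissible. I would handle this by invoking the conditioning-based moment method of \cite{BM2011}, which was developed precisely to treat this dependency in the real-valued case. Since (A2) makes the real and imaginary parts of $\bH$ i.i.d.\ real-Gaussian, the complex result can be obtained either by repeating that argument using Wirtinger calculus for the Gaussian expectations, or by applying the real-valued statement to the real decomposition introduced after \fref{eq:systemmodel2}. Because the paper explicitly announces that \fref{thm:CSE} is a complex-valued restatement of a result already proven rigorously in \cite{BM2011}, my final write-up would present only the heuristic derivation in \fref{app:statederiv} and cite \cite{BM2011} for the formal proof.
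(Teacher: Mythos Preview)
Your proposal is correct and lands on the same overall strategy as the paper: give a heuristic derivation of the cSE recursion and defer the rigorous justification to \cite{BM2011}. The bookkeeping portion (law of large numbers on the empirical MSE and on $\langle\mathsf{G}\rangle$, yielding $\Psi$ and $\Phi$) matches the paper exactly.

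The one notable difference is in how the heuristic handles the dependency problem. You argue directly with the Onsager term: the Wirtinger-type divergence in \fref{eq:cb_amp_resid} is what decorrelates $\bH^\Herm\bmr^t$ from the algorithm history, and a complex Stein identity makes this precise. The paper's heuristic in \fref{app:statederiv} is cruder but simpler: it replaces $\bH$ by a fresh i.i.d.\ draw $\bH(t)$ in every iteration, so that the recursion \fref{eq:statez}--\fref{eq:statetau} has \emph{no} Onsager term at all, and independence of $(\bI_{\MT}-\bH^\Herm(t)\bH(t))(\shat^t-\bms_0)$ from $\shat^t$ is immediate by construction. Your route is more faithful to what the actual algorithm does and explains \emph{why} the Onsager correction is there; the paper's route is shorter and is the classical ``fresh-channel'' heuristic from the CDMA literature. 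Either is acceptable since both explicitly cite \cite{BM2011} for the rigorous version. One small slip: you write $\shat^{0}=\bm{0}$ for the initialization, but \fref{alg:LAMA} initializes at $\shat^{1}=\Exop_S[S]\,\bm{1}_{\MT\times1}$, which need not be zero; the stated $\sigma_1^2$ and $\gamma_1^2$ still follow because $\Varop_S[S]$ appears either way.
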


We note that the MSE function $\Psi(\sigma^2,\sigma^2)$ is identical to the ``$\mathsf{mmse}\!\left(\mathsf{snr}\right)$'' function  in \cite{DSV2005,GV2003,GV2005,WV2010} with the relation $\mathsf{snr}=1/\sigma^2$ used to derive the relationship between the mutual information and the MSE function.
We also note that the MSE of cB-AMP at iteration $t$ as defined in \fref{eq:MSE_eq} is equivalent to $\Psi(\sigma_t^2,\gamma_t^2)$ in the large-system limit. \fref{thm:CSE} implies that the effective noise variance of cB-AMP $\sigma^2$ can be predicted exactly by the variance of a \emph{single} random variable mixed with additive Gaussian noise in the large-system limit.
If $\No=\Nopost$, then we obtain the following result.

\begin{cor}
\label{cor:cSE_IO} Let $\Nopost=\No$ in \fref{thm:CSE}. Then \fref{eq:SErecursion} is identical to \fref{eq:SErecursion2}, and the cSE reduces to the following recursion:
\begin{align}\label{eq:SE_IO}
\sigma_{t+1}^2 = \No +\beta\Psi(\sigma_t^2,\sigma_t^2).
\end{align}
\end{cor}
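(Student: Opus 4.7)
The plan is to prove by induction on the iteration index $t$ that, under $\Nopost=\No$, the effective noise variance $\sigma_t^2$ and the postulated output variance $\gamma_t^2$ coincide at every iteration; once this is established, the two recursions \eqref{eq:SErecursion} and \eqref{eq:SErecursion2} collapse into the single scalar recursion \eqref{eq:SE_IO}. The base case is immediate from the initializations $\sigma_1^2=\No+\beta\Varop_S[S]$ and $\gamma_1^2=\Nopost+\beta\Varop_S[S]$ stated in \fref{thm:CSE}.

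For the inductive step, the hypothesis $\sigma_t^2=\gamma_t^2=:\sigma^2$ together with $\Nopost=\No$ reduces the claim to the pointwise identity $\Psi(\sigma^2,\sigma^2)=\Phi(\sigma^2,\sigma^2)$. I would establish this by observing that, whenever the postulated variance parameter of $\mathsf{F}$ and $\mathsf{G}$ agrees with the actual channel noise variance, the Gaussian kernel in \eqref{eq:gaussianmeasure} is precisely the true likelihood for the scalar problem of estimating $S\sim p(S)$ from $\widehat{S}=S+\sigma Z$ with $Z\sim\setC\setN(0,1)$. Consequently, $\mathsf{F}(\widehat{S},\sigma^2)$ is the genuine conditional mean $\Exop[S\mid \widehat{S}]$ and $\mathsf{G}(\widehat{S},\sigma^2)$ is the genuine conditional variance $\Varop[S\mid \widehat{S}]$, so the standard MMSE identity---the expected squared error of the conditional-mean estimator equals the expected conditional variance, by the law of total variance---yields $\Psi(\sigma^2,\sigma^2)=\Phi(\sigma^2,\sigma^2)$.

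Plugging this identity back into \eqref{eq:SErecursion}--\eqref{eq:SErecursion2} together with $\Nopost=\No$ gives $\sigma_{t+1}^2=\gamma_{t+1}^2$ and the common value $\No+\beta\Psi(\sigma_t^2,\sigma_t^2)$, closing the induction and yielding \eqref{eq:SE_IO}. No substantive obstacle is anticipated: the argument is essentially bookkeeping, and the only conceptual point is the recognition that, in the matched-noise case, the cB-AMP denoiser is exactly the Bayes-optimal scalar estimator for the decoupled AWGN channel, which is precisely the setting in which the orthogonality principle identifies MSE with expected posterior variance.
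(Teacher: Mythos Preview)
Your proposal is correct and follows essentially the same approach as the paper: the paper's one-line justification is precisely that ``the MSE equals the conditional variance, i.e., $\Phi(\sigma_t^2,\sigma_t^2)=\Psi(\sigma_t^2,\sigma_t^2)$,'' which is exactly the MMSE/orthogonality identity you invoke. Your explicit induction on $t$ with the base case from the initializations is a welcome spelling-out of what the paper leaves implicit, but the substantive step is identical.
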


The proof of \fref{cor:cSE_IO} follows from the fact that the MSE equals to the conditional variance, i.e., $\Phi(\sigma_t^2,\sigma_t^2)=\Psi(\sigma_t^2,\sigma_t^2)$. 
We note that \fref{cor:cSE_IO} corresponds to the cSE derived originally in \cite{malekiCAMP} in absence of noise-variance mismatch. Furthermore, for real-valued systems, \fref{cor:cSE_IO} coincides with the original SE framework in \cite{donoho2009message,Maleki2010phd}. 
In \fref{sec:ERTs}, we will rely on cSE to analyze the performance and complexity of \LAMA. 
In what follows, we assume that there is no mismatch in the prior distribution---this case was studied in \cite{JMS2016}.

\section{LAMA: Large-MIMO Detection using cB-AMP}\label{sec:DAMP}

We now derive the \LAMA algorithm.
We specify the missing aspects of the large-MIMO system model and detail the \LAMA algorithm along with the corresponding cSE framework.

\subsection{Large MIMO and Optimal Data Detection}

We consider a communication system in which the entries $s_\ell$, $\ell=1,\ldots,\MT$, of the transmit data vector $\bms$ are taken from a finite constellation set $\setO=\{a_j : j=1,\ldots,|\setO|\}$ with points~$a_j$ chosen, from e.g., a pulse amplitude modulation (PAM), phase-shift keying (PSK), or quadrature amplitude modulation (QAM) alphabet.
We assume i.i.d.\ priors $p(\vecs)=\prod_{\ell=1}^{\MT}p(s_\ell)$, with the following distribution for each transmit symbol $s_\ell$:
\begin{align} \label{eq:prior}
p\!\left(s_\ell\right) = \sum_{a\in \setO}p_a\delta\!\left(s_\ell - a\right)\!.
\end{align}
Here, $p_{a}$ is the (known) prior probability of each constellation point $a\in\setO$ and $\delta(\cdot)$ is the Dirac delta distribution; for uniform priors we have $p_{a}=1/|\setO|$.

The vector $\bms_0$ is transmitted through a MIMO channel as in~\fref{eq:systemmodel}. We assume perfect knowledge of the MIMO system matrix $\bH$ at the receiver and the noise vector $\vecn$ to be i.i.d.\ circularly complex Gaussian with variance $\No$ per complex entry.
For these assumptions, the individually optimal (IO) data-detection problem in~\cite{guo2003multiuser,GV2005} is given by
\begin{align}\label{eq:IO}
s_\ell^\IO = \argmax_{\tilde s_\ell\in\setO} \sum_{\tilde\vecs_\ell\in\setO^{(\MT-1)}_\ell} \exp\!\left(-\frac{\!\left\Vert \bmy - \bH\tilde\bms \right\Vert^2}{\No}+ \log p(\tilde\bms)\right)\!,
\end{align}
where $\setO^{(\MT-1)}_\ell$ stands for the subset of $\setO^\MT$ that excludes the $\ell$th entry and $\tilde\vecs_\ell\in \setO^{(\MT-1)}_\ell$ is a $\MT-1$ dimensional vector from this subset. 

The detection problem in \fref{eq:IO} is of combinatorial nature and requires prohibitive complexity in systems with large $\MT$ \cite{V1998,JO05,SJSB11}. 
We note that IO data detection achieves the minimum probability of symbol errors (see \cite[Sec. 4.1]{V1998} for a detailed discussion).
While computationally efficient algorithms exist for small-scale MIMO systems (up to about eight transmit streams), such as sphere-decoding (SD) based  methods \cite{HT2003,studer2008soft,studer2010soft}, their average computational complexity still scales exponentially in $\MT$ \cite{JO05,SJSB11}.\footnote{In the case of BPSK transmission, soft-input soft-output MAP detectors, such as the one in \cite{studer2010soft}, can exactly solve the IO problem at low average computational complexity for a small number of transmit streams $\MT$.
For higher-order modulation schemes, no known method exists to solve \fref{eq:IO} at low complexity.
} Consequently,  such methods are not suitable for large MIMO systems. 
In order to enable data detection for such systems, a variety of sub-optimal algorithms have been proposed in the past; see, e.g.,  \cite{indiachemp,ASW2014,CL2014,SKIM2012,CLSK2013,WYWDCS2014} and the references therein.
%

Instead of solving the 
IO problem in \fref{eq:IO} directly, we first compute the marginalized distribution $p(s_\ell\vert\bmy,\bH)$ using cB-AMP as in \fref{alg:cB-AMP}. 
Once we obtain the marginalized distribution $p(s_\ell\vert\bmy,\bH)$, $\ell=1,\ldots,\MT$, the IO data-detection problem is transformed in an entry-wise data detection problem that can be solved at low complexity. 
%


%
\subsection{Derivation of the \LAMA Algorithm}\label{sec:DAMP-real}

With the prior distribution in \fref{eq:prior}, we can write the posterior distribution \fref{eq:gaussianmeasure} for the transmit symbol $s_\ell$ as
\begin{align}\label{eq:prob_dist}
f\!\left(s_\ell \,|\, \shate_\ell,\tau\right)
=
\frac{1}{Z\!\left(\shate_\ell,\tau\right)}\exp\!\left( -\frac{\abs{s_\ell-\shate_\ell}^2}{\tau}\right) \sum_{a\in\setO}p_{a}\delta\!\left(s_\ell - a\right)\!.
\end{align}
Since the normalization constant $Z\!\left(\shate_\ell,\tau \right)$ is chosen so that $\int_\complexset f\!\left(s_\ell\mid \shate_\ell,\tau \right) \text{d}s_\ell=1$, we have 
\begin{align*}
Z\!\left(\shate_{\ell},\tau \right)=\sum_{a\in\setO}p_{a}\exp\!\left( -\frac{1}{\tau}\abs{\shate_\ell-a}^2\right)\!,
\end{align*}
which enables us to write the message mean in \fref{eq:F_compute} as follows
\begin{align}\notag
\mathsf{F}\!\left(\shate_{\ell},\tau \right) & =\int_{\complexset} s_{\ell}f\!\left(s_{\ell}\vert \shate_{\ell},\tau\right)\mathrm{d}s_{\ell} 
\\
&= \frac{\sum_{a\in\setO}a p_{a}\exp\!\left( -\frac{1}{\tau} \abs{\shate_{\ell}-a}^2\right)}
{\sum_{a'\in\setO}p_{a'}\exp\!\left( -\frac{1}{\tau}\abs{\shate_{\ell}-a'}^{2}\right)}
= \sum_{a\in\setO}w_a(\shate_\ell,\tau) a,\label{eq:DAMP_F}
\end{align}
where we use the shorthand notation
\begin{align*}
w_a(\shate_\ell,\tau) = \frac{p_{a}\exp\!\left(-\frac{1}{\tau} \abs{\shate_\ell -a}^2\right)}{\sum_{a'\in\setO}p_{a'}\exp\!\left( -\frac{1}{\tau} \abs{\shate_{\ell}-a'}^2\right)}.
\end{align*}
The message variance $\mathsf{G}$ defined in \fref{eq:G_compute} is given by
\begin{align*}
\mathsf{G}\!\left(\shate_\ell,\tau \right) &= \int_{\complexset} \abs{s_\ell}^2 f\!\left(s_{\ell}\vert \shate_{\ell},\tau \right)\mathrm{d}s_{\ell} - \abs{\mathsf{F}(\shate_\ell,\tau)}^2\!,
\end{align*}
which can be simplified to
\begin{align} \label{eq:DAMPrealG}
\mathsf{G}\!\left(\shate_\ell,\tau \right) 
= \sum_{a\in\setO} w_a(\shate_\ell,\tau)  \!\left\vert a - \mathsf{F}(\shate_\ell,\tau)\right\vert^2\!.
\end{align}
The final step in the derivation of \LAMA involves a simplification of the partial derivatives of~\fref{eq:cb_amp_resid} in \fref{alg:cB-AMP}. 
The result is summarized by \fref{lem:LAMA_alg} with proof given in \fref{app:LAMA_alg}. 

\begin{lem}[Message variance of the \LAMA algorithm]\label{lem:LAMA_alg} Suppose that the assumptions of \fref{alg:cB-AMP} hold, and the mean $\mathsf{F}(\shate_\ell,\tau)$ as well as the variance $\mathsf{G}(\shate_\ell,\tau)$ functions are given by~\fref{eq:DAMP_F} and \fref{eq:DAMPrealG}, respectively. Then, the message variance is given by:
\begin{align*}
\mathsf{G}(\shate_\ell,\tau) = \frac{\tau}{2} \!\left[\partial_1\mathsf{F}^R + \partial_2\mathsf{F}^I \right]\!\left(\shate_\ell,\tau\right)
\end{align*}
and cB-AMP leads to \fref{alg:LAMA}.
\end{lem}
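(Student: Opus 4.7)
The plan is to verify the claimed identity by direct differentiation of the closed-form expression for $\mathsf{F}$, and then to plug the identity into the residual update of \fref{alg:cB-AMP} to obtain \LAMA. Because the weights $w_a(\shate_\ell,\tau)$ have a Gibbs form, the derivatives collapse into moments of the discrete law $\{w_a\}_{a\in\setO}$, making the whole argument a bookkeeping exercise with no analytic difficulty.

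First, write $\shate_\ell=x+iy$ and $a=a_R+ia_I$, so that $\abs{\shate_\ell-a}^2=(x-a_R)^2+(y-a_I)^2$. Applying the log-derivative trick to $w_a=p_a\exp(-\abs{\shate_\ell-a}^2/\tau)/Z(\shate_\ell,\tau)$ gives
\begin{align*}
\frac{\partial w_a}{\partial x} = \frac{2}{\tau}\,w_a\!\left(a_R-\realpart{\mathsf{F}}\right)\!,\qquad
\frac{\partial w_a}{\partial y} = \frac{2}{\tau}\,w_a\!\left(a_I-\imagpart{\mathsf{F}}\right)\!,
\end{align*}
since $\sum_{a} w_a(x-a_R)=x-\realpart{\mathsf{F}}$ and similarly in $y$. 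Using $\mathsf{F}^\textnormal{R}=\sum_a w_a a_R$ and $\mathsf{F}^\textnormal{I}=\sum_a w_a a_I$, the four partials in \fref{alg:cB-AMP} become
\begin{align*}
\partial_1\mathsf{F}^\textnormal{R} &= \tfrac{2}{\tau}\textstyle\sum_a w_a(a_R-\mathsf{F}^\textnormal{R})^2, &
\partial_2\mathsf{F}^\textnormal{I} &= \tfrac{2}{\tau}\textstyle\sum_a w_a(a_I-\mathsf{F}^\textnormal{I})^2, \\
\partial_2\mathsf{F}^\textnormal{R} &= \tfrac{2}{\tau}\textstyle\sum_a w_a(a_R-\mathsf{F}^\textnormal{R})(a_I-\mathsf{F}^\textnormal{I}), &
\partial_1\mathsf{F}^\textnormal{I} &= \tfrac{2}{\tau}\textstyle\sum_a w_a(a_I-\mathsf{F}^\textnormal{I})(a_R-\mathsf{F}^\textnormal{R}).
\end{align*}
The first two sum to $\tfrac{2}{\tau}\sum_a w_a\abs{a-\mathsf{F}}^2=\tfrac{2}{\tau}\mathsf{G}(\shate_\ell,\tau)$ by \fref{eq:DAMPrealG}, which is exactly the claimed identity $\mathsf{G}=\tfrac{\tau}{2}(\partial_1\mathsf{F}^\textnormal{R}+\partial_2\mathsf{F}^\textnormal{I})$.

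Next, observe that the third and fourth expressions are literally equal (they are the same ``cross-covariance'' of $a_R$ and $a_I$ under $\{w_a\}$), so $\partial_2\mathsf{F}^\textnormal{R}-\partial_1\mathsf{F}^\textnormal{I}\equiv 0$. This kills the imaginary-part correction term in the residual update \fref{eq:cb_amp_resid} of \fref{alg:cB-AMP}. Substituting the identity into the real-part correction term with $\tau=\Nopost(1+\tau^t)$ and using the $\tau^{t+1}$-update from \fref{alg:cB-AMP} rewrites the residual recursion in terms of $\mathsf{G}$ (equivalently, in terms of the Onsager scalar $\tau^{t+1}/(1+\tau^t)$), giving the compact \LAMA update stated as \fref{alg:LAMA}.

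The only step that requires any attention is verifying the vanishing of $\partial_2\mathsf{F}^\textnormal{R}-\partial_1\mathsf{F}^\textnormal{I}$, since algorithmically this is what allows the complex-valued Onsager correction to simplify to a single real scalar; everything else is routine differentiation of a normalized exponential family.
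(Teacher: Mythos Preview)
Your proof is correct and follows essentially the same approach as the paper: compute the four partials of $\mathsf{F}$ via differentiation of the Gibbs weights, recognize $\partial_1\mathsf{F}^\textnormal{R}+\partial_2\mathsf{F}^\textnormal{I}$ as $\tfrac{2}{\tau}\mathsf{G}$ and $\partial_2\mathsf{F}^\textnormal{R}=\partial_1\mathsf{F}^\textnormal{I}$ as the cross-covariance, and then collapse the Onsager term in \fref{eq:cb_amp_resid} into $\tfrac{\tau^{t+1}}{1+\tau^t}\resid^t$. The only cosmetic difference is that you write the derivatives in centered-moment form via the log-derivative trick, whereas the paper expands them in raw-moment form; the content is identical.
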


With \fref{lem:LAMA_alg} and \fref{alg:cB-AMP}, we arrive at the LAMA algorithm summarized next.

\begin{alg}[LAMA]
\label{alg:LAMA} 
Suppose that $\bH$ satisfies (A1) and \cite[Lem.~5.56]{Maleki2010phd} holds. Then, the \emph{LAMA} algorithm is given by following procedure 
%
\begin{align}\nonumber
\bmz^{t} &=\hat\bms^t+\bH^\Herm\bmr^t\\
\label{eq:LAMA_Fstep}
\shat^{t+1} &= \mathsf{F}\!\left(\bmz^{t},\Nopost(1+\tau^t)\right)\\
\label{eq:LAMA_Gstep}
\tau^{t+1} &= \frac{\beta}{\Nopost}\!\left\langle\mathsf{G}\!\left(\bmz^{t},\Nopost(1+\tau^t)\right)\right\rangle\\\label{eq:LAMA_resid}
\resid^{t+1}  &= \bmy-\bH\shat^{t+1}+\frac{\tau^{t+1}}{1+\tau^t}\resid^{t}
\end{align}
for each iteration $t=1,2,\ldots$.
The LAMA algorithm is initialized at iteration $t=1$ with $\shat^t=\Exop_S[S]\bm1_{\MT\times1}$, $S\sim p(S)$, $\resid^t= \bmy-\bH\shat^t$, and $\tau^t=\beta\Varop[S]/\Nopost$.
\end{alg}

The main difference between the cB-AMP in \fref{alg:cB-AMP} and LAMA in \fref{alg:LAMA} is that the update in \fref{eq:cb_amp_resid} for cB-AMP is simplified to \fref{eq:LAMA_resid} for LAMA and we utilize the prior distribution $p(S)$ to initialize the algorithm. 
We note that \LAMA as summarized in \fref{alg:LAMA} makes use of the postulated noise variance~$\Nopost$; this allows us not only to model a mismatch in the noise variance, but also enables us to perform IO detection and matched filter (MF) data detection solely by selecting appropriate values for $\Nopost$; see \fref{sec:lama-iojo}.


\subsection{\LAMA Decouples Large-MIMO Systems}\label{sec:LAMAdecouple}

We now show that LAMA decouples a MIMO system into a set of parallel and independent AWGN channels with identical noise variance  in the large system limit (cf. Figs.~\ref{fig:introfigurec} and~\ref{fig:introfigured}). 
%
%
%
%
First, we discuss the outputs of \LAMA: (i) the Gaussian output vector $\vecz^{t}$, (ii) the postulated variance $\Nopost(1+\tau^t)$, and (iii) the non-linear MMSE output vector $\hat\vecs^{t}$. 

\subsubsection*{(i) Gaussian output vector $\vecz^t$}
In each iteration $t$, cB-AMP computes the marginal distribution for $s_\ell$ for $\ell=1,\ldots,\MT$, which corresponds to a Gaussian distribution centered around the original signal $s_{0\ell}$ with variance $\sigma_{t+1}^2$.
These properties on $\bmz^t$ follow from 
\fref{thm:CSE}, which shows that $\bmz^t=\shat^t+\bH^\Herm\bmr^t$ is distributed according to $\setC\setN(\bms_0,\sigma_t^2\bI_\MT)$ in the large-system limit~\cite{BM2011,BM2012}.
Therefore, the input--output relation for each transmit stream $z_\ell^t = \shate_\ell^t+ (\bmh_\ell^\col)^\Herm\bmr_\ell^t$ is equivalent to the following single-input single-output AWGN channel:
\begin{align} \label{eq:decoupledAWGNsystem}
z_\ell^t = s_{0\ell} + n_\ell^t.
\end{align}
Here, $s_{0\ell}$ is the transmitted signal and $n_\ell^t\sim\setC\setN(0,\sigma^2_t)$ is AWGN with effective noise variance $\sigma^2_t$ per complex entry. 
Since $p(z_\ell^t\,\vert\,s_{0\ell})\sim\setC\setN(s_{0\ell},\sigma_t^2)$, the posterior distribution of \fref{eq:decoupledAWGNsystem} as defined in \fref{eq:prob_dist} is given by
$f(s_{0\ell}\,\vert\,z_\ell^t,\sigma_t^2)$.
An immediate consequence of these properties is the fact that LAMA decouples the MIMO system (cf.~\fref{fig:introfigured}). 
We note that the decoupling behavior of \LAMA was observed for posterior mean estimators (PMEs) in randomly spread CDMA systems~\cite{GV2005,V1998} for which no practical data detection algorithm was given.



\subsubsection*{(ii) Postulated output variance $\Nopost(1+\tau^t)$}
In the large-system limit, there exist two noise variances: effective noise variance $\sigma_t^2$ from \fref{def:effective_var} and the postulated output variance~$\gamma_t^2$ from~\fref{def:postulated_var}. We note that $\Nopost(1+\tau^t)\to\gamma_t^2$ in the large-system limit.
We clarify the difference between the two quantities below.

The effective noise variance $\sigma_t^2$ is the \emph{true} noise variance in \fref{eq:decoupledAWGNsystem}, whereas the postulated output variance $\gamma_t^2$ is the estimate for $\sigma_t^2$ each iteration $t$.
The postulated output variance $\gamma_t^2$ is used as an input to the posterior mean function $\mathsf{F}$ (see \fref{eq:LAMA_Fstep} and Figs.~\ref{fig:retro1}~and~\ref{fig:retro2}) to the Gaussian vector $\bmz^t$ to obtain the MMSE estimate $\hat\vecs^{t+1}$.
Therefore, when the exact value of $\sigma_t^2$ is unknown at the receiver, a possible performance mismatch can result in using an incorrect value  $\gamma_t^2$ for obtaining the MMSE estimate. 
The cSE framework shown in \fref{thm:CSE} enables us to analyze the performance loss due to such a (possible) mismatch in the noise variance $\Nopost$ \emph{exactly}.

If there is no mismatch in the postulated noise variance, we have $\sigma_t^2=\gamma_t^2$ by \fref{cor:cSE_IO} and hence, the correct noise variance statistic is used for the MMSE estimate in \fref{eq:LAMA_Fstep} every iteration.
However, if $\Nopost\neq\No$, then $\sigma_t^2\neq\gamma_t^2$, and therefore, \LAMA applies the MMSE estimate on the Gaussian vector $\bmz^t$ according to an incorrect statistic, which may cause \LAMA to converge to an incorrect solution.
To illustrate how \LAMA may converge to an incorrect solution, consider the case where $\No=0$, and $\Nopost\rightarrow\infty$. In this case, $\bmz^t$ corresponds to the MF detector; see \fref{sec:lama-iojo} for more details.

\subsubsection*{(iii) Non-linear MMSE output vector $\hat\vecs^{t+1}$}
The non-linear MMSE output vector $\hat\vecs^{t+1}$ is given by $\hat\vecs^{t+1}=\mathsf{F}(\vecz^{t},\Nopost(1+\tau^{t}))$ in \fref{eq:DAMP_F}, which can be seen as a conditional mean of the Gaussian output vector $\vecz^{t}$ for the postulated output variance $\Nopost(1+\tau^{t})$. 
The non-linear MMSE output vector $\hat\vecs^{t+1}$ is identical to the PME \cite{GV2005}, 
where each $\ell$th output of PME is obtained by the expectation with respect to the conditional distribution $f\!\left(s_\ell\,\vert\,z^{t}_\ell,\Nopost(1+\tau^{t})\right)$ in~\fref{eq:prob_dist}.\footnote{The conditional distribution $f\!\left(s_\ell\,\vert\,z^{t}_\ell,\Nopost(1+\tau^{t})\right)$ is called ``retro-channel'' in~\cite{GV2005}.}
%
%
The equivalence of \LAMA and the equivalent AWGN relation for the non-linear MMSE estimate is shown in \fref{fig:retro}. In \fref{fig:retro1}, the quantity $\hat\bms^{t+1}$ is the non-linear MMSE estimate with the postulated noise variance $\Nopost(1+\tau^t)$. In the large system limit, the input-output relation for each stream $\ell$ is an AWGN channel in \fref{fig:retro2} with equivalent variance $\sigma_t^2$ and the postulated variance $\gamma_t^2$.

\begin{figure}[tp]
\centering
\hspace*{0.4cm}
\subfigure[]{\includegraphics[height=0.1\textheight]{./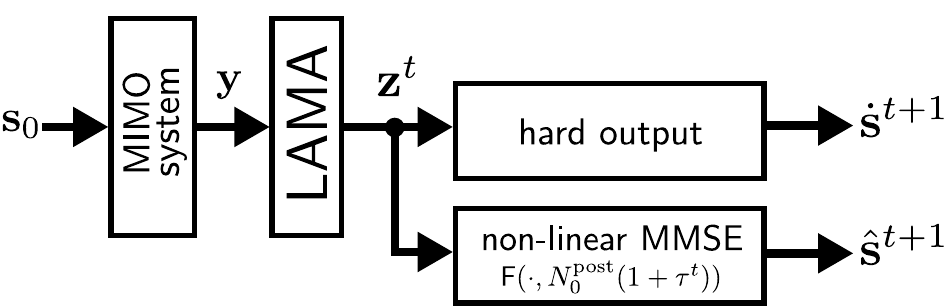}\label{fig:retro1}}
\hspace*{0.4cm}
\subfigure[]{\includegraphics[height=0.1\textheight]{./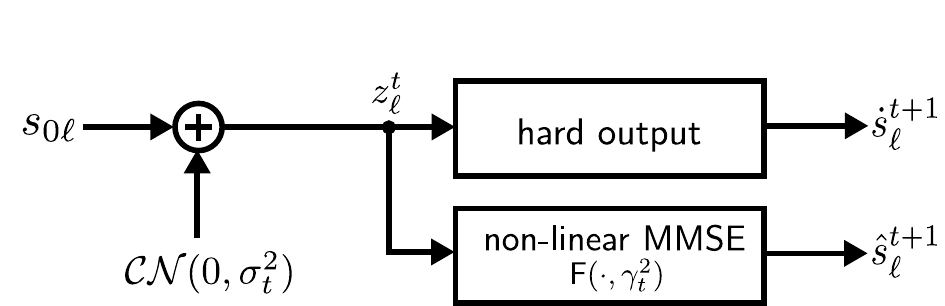}\label{fig:retro2}}
\caption{The system with \LAMA and its outputs (a) and the statistically-equivalent decoupled AWGN system as seen at the output of \LAMA (b). 
\LAMA generates a Gaussian output $\bmz^t$ and a non-linear MMSE estimator output $\shat^{t+1}$. Hard-output estimates~$\dot{s}^{t+1}$ are generated via \fref{eq:hard_decision}. 
%
In the large system limit, \LAMA decouples the MIMO system into independent, parallel AWGN channels with equivalent output noise variance $\sigma_t^2$.  
%
}
\label{fig:retro}
\end{figure}

\subsection{\LAMA 
and MF Data Detection}\label{sec:lama-iojo}
Since LAMA decouples the MIMO system, data detection reduces to element-wise hard decisions for each entry in $\bmz^t$ subject to the postulated output variance $\Nopost(1+\tau^{t})$ as 
%
\begin{align}\label{eq:hard_decision}
\dot{s}_\ell^t=\argmax_{s_\ell\in\setO} f(s_\ell\,\vert\,z_\ell^t,\Nopost(1+\tau^{t})).
 \end{align}

By setting the postulated noise variance $\Nopost$, \LAMA can perform IO and MF data detection.
In particular, (i) for $\Nopost=\No$, \LAMA corresponds to the IO detector
and (ii) for $\Nopost\rightarrow\infty$, \LAMA corresponds to the MF detector. These two ``operation modes''  are detailed next.

\subsubsection*{(i)}
Consider $\Nopost=\No$. From \fref{cor:cSE_IO}, we have that the equivalent output noise variance and the postulated noise variance are equal, which implies $\sigma_t^2=\gamma_t^2$ in the large-system limit.
Since there is no noise variance mismatch, the output \fref{eq:hard_decision} achieves the same error-rate performance as the IO data detector which in \fref{eq:IO} given certain conditions are met;
see \fref{sec:LAMA_optimal} for precise optimality conditions.
%

%
%
%
%
%
%
%
%
%
%
%

\subsubsection*{(ii)}
By letting $\Nopost\rightarrow\infty$, it was shown in \cite[Eq. (12)]{GV2005} that the output of the non-linear MMSE estimator \fref{eq:DAMP_F} corresponds to the MF output for real-valued signals with $\Exop[S]=0$. We now provide conditions for which \LAMA with $\Nopost\rightarrow\infty$ performs MF data detection for arbitrary system ratios $\beta$.
%
The proof of the following Lemma is given in \fref{app:LAMA_MF}.

\begin{lem}
\label{lem:LAMAtoMF} 
Fix the constellation set $\setO$, and let $S\sim p(S)$. If $\Exop_S[S]=0$, then as $\Nopost\rightarrow\infty$, the Gaussian output at every iteration $t=1,2,\ldots,$ from \LAMA corresponds to the MF output:
\begin{align*}
\lim_{\Nopost\rightarrow\infty}\bmz^t = \bH^\Herm\bmy.
\end{align*}
If $\Exop_S[\,\realpart{S}\imagpart{S}\,] = \Exop_S[S\abs{S}^2]=0$, 
then, as $\Nopost\rightarrow\infty$, the  scaled version of the non-linear MMSE estimate also corresponds to the MF output:
\begin{align*}
\lim_{\Nopost\rightarrow\infty} \frac{\Nopost}{E_s}\shat^t = \bH^\Herm \bmy.
\end{align*}
\end{lem}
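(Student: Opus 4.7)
My plan is to prove both statements by induction on the LAMA iteration index $t$, using carefully controlled asymptotic expansions of the posterior mean $\mathsf{F}$ and variance $\mathsf{G}$ from \fref{eq:DAMP_F} and \fref{eq:DAMPrealG} as their second argument diverges. Throughout, I would track the orders of $\hat{\bms}^t$, $\tau^t$, $\bmr^t$, and $\bmz^t$ in $1/\Nopost$.

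First I would derive the leading-order behavior of $\mathsf{F}(z,\tau)$ and $\mathsf{G}(z,\tau)$ as $\tau\to\infty$ by expanding $\exp(-\abs{z-a}^2/\tau) = 1 - \abs{z-a}^2/\tau + O(1/\tau^2)$ inside the weights $w_a(z,\tau)$ and the normalization. Using $\Exop_S[S]=0$, the $O(1)$ term of the numerator of $\mathsf{F}$ vanishes, while the denominator tends to $1$, giving
\begin{align*}
\mathsf{F}(z,\tau) &= \frac{1}{\tau}\Bigl(2\,\Exop_S\!\left[S\,\realpart{z^* S}\right] - \Exop_S\!\left[S\abs{S}^2\right]\!\Bigr) + O(1/\tau^2),\\
\mathsf{G}(z,\tau) &= \Varop_S[S] + O(1/\tau) = E_s + O(1/\tau).
\end{align*}
In particular, both $\mathsf{F}$ and the deviation of $\mathsf{G}$ from $E_s$ are $O(1/\tau)$ uniformly over bounded $z$.

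For the first claim, the base case is immediate: $\hat{\bms}^1 = \Exop_S[S]\mathbf{1} = \mathbf{0}$, so $\bmr^1 = \bmy$ and $\bmz^1 = \bH^\Herm\bmy$ for every $\Nopost$, while $\tau^1 = \beta E_s/\Nopost \to 0$. For the inductive step, suppose $\bmz^t \to \bH^\Herm\bmy$ and $\tau^t \to 0$ as $\Nopost\to\infty$; since $\Nopost(1+\tau^t)\to\infty$, the $\mathsf{F}$-expansion gives $\hat{\bms}^{t+1} = \mathsf{F}(\bmz^t,\Nopost(1+\tau^t)) = O(1/\Nopost)\to\bmzero$ element-wise, while the $\mathsf{G}$-expansion yields $\tau^{t+1} = (\beta/\Nopost)\langle \mathsf{G}(\cdot,\cdot)\rangle \to \beta E_s/\Nopost\to 0$. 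The residual update
\begin{align*}
\bmr^{t+1} = \bmy - \bH\hat{\bms}^{t+1} + \frac{\tau^{t+1}}{1+\tau^t}\bmr^t
\end{align*}
therefore converges to $\bmy$ because both $\bH\hat{\bms}^{t+1}$ and $\tau^{t+1}/(1+\tau^t)$ vanish. Consequently $\bmz^{t+1} = \hat{\bms}^{t+1} + \bH^\Herm\bmr^{t+1} \to \bH^\Herm\bmy$, closing the induction.

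For the second claim, the extra assumptions $\Exop_S[\realpart{S}\imagpart{S}]=0$ and $\Exop_S[S\abs{S}^2]=0$ (together with the per-dimension symmetry $\Exop[\realpart{S}^2]=\Exop[\imagpart{S}^2]=E_s/2$ satisfied by the constellations of interest) reduce $\Exop_S[S\,\realpart{z^*S}]$ to $(E_s/2)z$, whence
\begin{align*}
\mathsf{F}(z,\tau) = \frac{E_s}{\tau}\,z + O(1/\tau^2).
\end{align*}
Substituting $\tau = \Nopost(1+\tau^t)$ and multiplying by $\Nopost/E_s$ gives $(\Nopost/E_s)\,\hat{\bms}^{t+1} = \bmz^t/(1+\tau^t) + O(1/\Nopost)$, which converges to $\bH^\Herm\bmy$ by the first claim. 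The main obstacle is the bookkeeping of the induction: one must verify that $\hat{\bms}^{t+1}$ and $\tau^{t+1}$ decay at the matching rate $O(1/\Nopost)$ uniformly across iterations so that the Onsager correction $\tau^{t+1}\bmr^t/(1+\tau^t)$ does not accumulate and prevent $\bmr^{t+1}$ from relaxing back to $\bmy$ at every step.
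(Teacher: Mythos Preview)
Your proposal is correct and follows essentially the same line as the paper: both arguments reduce to the first-order expansion of $\mathsf{F}(z,\tau)$ in $1/\tau$, conclude $\hat{\bms}^t\to\bmzero$ so that $\bmz^t\to\bH^\Herm\bmy$, and then read off the scaled limit of $\hat{\bms}^{t+1}$ from that same expansion. The paper's version is terser---it drops the Onsager term by simply writing the recursion with $\Nopost$ in place of $\Nopost(1+\tau^t)$ and never returns to it---whereas you explicitly track $\tau^t=O(1/\Nopost)$ through the induction and verify the correction $\tau^{t+1}\bmr^t/(1+\tau^t)$ vanishes; this is a genuine improvement in rigor. You also correctly flag that the final step needs $\Exop[\realpart{S}^2]=\Exop[\imagpart{S}^2]=E_s/2$ in addition to the stated hypotheses; the paper's last displayed equality silently uses the same fact.
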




\subsection{\LAMA in the Massive MU-MIMO Limit}\label{sec:massiveMIMO}

We now study the properties of \LAMA in the massive MU-MIMO limit, where we fix the number of streams (or layers) $\MT$ and let the number of BS antennas $\MR\to\infty$.
%
As shown in~\cite{rusek2013scaling,LETM2014}, MF data detection is optimal in such scenarios. The following Lemma reveals that LAMA corresponds to the MF detector in the massive MU-MIMO limit; a proof is given in \fref{app:LAMAbeta0}.
%

\begin{lem}
\label{lem:LAMAbeta0} Assume that $\setO$ is fixed and let $\Nopost\geq 0$. Then, for $\beta\to0$, the Gaussian output $\bmz^t$ of \LAMA corresponds to the MF data detector, i.e., $\bmz^t = \bH^\Herm\bmy$, for all $t\geq1$.
Furthermore, the effective noise variance is $\sigma_t^2 = \No$ for all $t\geq1$.
\end{lem}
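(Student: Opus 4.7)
The plan is to establish both claims by induction on the iteration index $t$, exploiting two facts: (i) under assumption (A1), as $\MR\to\infty$ with $\MT$ fixed, the Gram matrix $\bH^\Herm\bH$ converges entrywise to $\bI_\MT$, and (ii) since $\setO$ is a fixed finite alphabet, the message variance $\mathsf{G}(\cdot,\cdot)$ is uniformly bounded, so the damping parameter $\tau^t$ vanishes as $\beta\to0$.

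For the base case $t=1$, I would use the initialization $\shat^1=\Exop_S[S]\bm{1}$, $\resid^1=\bmy-\bH\shat^1$, and $\tau^1=\beta\Varop[S]/\Nopost$. Clearly $\tau^1\to0$ as $\beta\to0$. Substituting into the Gaussian-output step of Algorithm 2 yields
\begin{align*}
\bmz^1 = \shat^1+\bH^\Herm(\bmy-\bH\shat^1) = \bH^\Herm\bmy + (\bI_\MT-\bH^\Herm\bH)\shat^1,
\end{align*}
which converges to $\bH^\Herm\bmy$ as $\beta\to0$ because $\bH^\Herm\bH\to\bI_\MT$ under (A1). For the inductive step, assume $\bmz^t\to\bH^\Herm\bmy$ and $\tau^t\to0$. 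Since $\setO$ is finite, $\mathsf{G}$ is bounded by $\max_{a\in\setO}|a|^2$, so
$\tau^{t+1} = (\beta/\Nopost)\langle\mathsf{G}(\bmz^t,\Nopost(1+\tau^t))\rangle\to0$ as $\beta\to0$, and consequently $\tau^{t+1}/(1+\tau^t)\to0$. The residual update then reduces to $\resid^{t+1}\to\bmy-\bH\shat^{t+1}$, and $\shat^{t+1}=\mathsf{F}(\bmz^t,\Nopost(1+\tau^t))$ remains bounded (its entries are convex combinations of points in $\setO$). Therefore
\begin{align*}
\bmz^{t+1} = \shat^{t+1}+\bH^\Herm(\bmy-\bH\shat^{t+1}) = \bH^\Herm\bmy + (\bI_\MT-\bH^\Herm\bH)\shat^{t+1} \to \bH^\Herm\bmy,
\end{align*}
completing the induction.

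For the effective noise variance claim, I would invoke the cSE recursion of Theorem 1: $\sigma_{t+1}^2=\No+\beta\,\Psi(\sigma_t^2,\gamma_t^2)$, initialized at $\sigma_1^2=\No+\beta\Varop[S]$. Because $\setO$ is finite, the MSE function $\Psi$ is uniformly bounded by $\max_{a\in\setO}|a|^2$, and hence each $\beta$-weighted term vanishes as $\beta\to0$. A simple induction then gives $\sigma_t^2\to\No$ for all $t\geq1$.

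The main subtlety, and arguably the only real obstacle, is the order of limits. The cSE framework is established in the large-system limit with $\beta$ fixed, whereas the massive MU-MIMO regime corresponds to $\MT$ fixed and $\MR\to\infty$. I would therefore formulate the argument by first passing to the large-system limit at fixed $\beta$ (to apply Theorem 1 for the noise-variance statement and to invoke $\bH^\Herm\bH\to\bI_\MT$ entrywise via the law of large numbers under (A1)), and then letting $\beta\to0$. Since each $t$ is fixed and finite, and all quantities involved are bounded on $\setO$, the induction passes to the limit without further technicalities.
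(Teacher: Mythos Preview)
Your proposal is correct and follows essentially the same route as the paper: show $\tau^t\to0$, rewrite $\bmz^t=\bH^\Herm\bmy+(\bI_\MT-\bH^\Herm\bH)\shat^t$ (after dropping the Onsager term), argue the second summand vanishes as $\beta\to0$, and then read off $\sigma_t^2=\No$ from the cSE recursion. The paper does this directly rather than by induction, but the content is the same.

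One small imprecision worth tightening: in your final paragraph you say you would ``invoke $\bH^\Herm\bH\to\bI_\MT$ entrywise'' \emph{after} passing to the large-system limit at fixed $\beta$. In that regime this is not the right justification---the paper instead observes (via the heuristic cSE derivation) that each entry of $(\bI_\MT-\bH^\Herm\bH)\shat^t$ is asymptotically $\setC\setN(0,\beta\tilde\sigma_t^2)$ with $\tilde\sigma_t^2=\lim_{\MT\to\infty}\frac{1}{\MT}\|\shat^t\|^2$ finite, and it is the subsequent limit $\beta\to0$ that kills this term. Your massive-MU-MIMO phrasing ($\MT$ fixed, $\MR\to\infty$, LLN on the Gram matrix) is valid for the algorithmic claim but is a different limit than the one needed for Theorem~1; the paper's proof lives entirely in the large-system limit followed by $\beta\to0$, which keeps both parts of the lemma in the same regime.
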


This result is in accordance with \cite{rusek2013scaling,LETM2014} and implies that a simple one-shot algorithm (performing a single iteration) is sufficient to perform IO data detection in the massive MU-MIMO limit. Furthermore, \LAMA decouples the MIMO system into parallel and independent AWGN channels with variance $\sigma_t^2=\No$ (see \fref{fig:introfigurec} and \fref{fig:introfigured}) in every iteration. 
We emphasize that \LAMA can be used in more-realistic massive MU-MIMO systems, i.e., where 
the number of BS antennas is finite. As we will show in \fref{sec:performance_tradeoff}, \LAMA quickly converges and provides near-optimal performance for realistic massive MU-MIMO antenna configurations. 




\section{Optimality of LAMA}\label{sec:SE_MI_ER}

We now provide exact conditions for which \LAMA achieves the performance of the IO data detector. We furthermore study the noiseless case in which \LAMA is able to perform error-free data recovery. 

\subsection{Existing Results of IO and Multiuser Detection in Large MIMO Systems}\label{sec:GV2005_discussion}
An spectral efficiency analysis of IO data detection in large systems with BPSK was presented by Tanaka in \cite{TanakaCDMA}. 
%
%
These results were generalized to arbitrary constellation sets in~\cite{GV2005}. 
Under the assumption that replica method is correct, Guo and Verd\'u showed in  \cite{GV2005} that by using 
PMEs,
 the multi-user channel in the large-system limit decouples into an AWGN channel for each transmit stream, where the noise is amplified by a factor $\eta^{-1}$ due to the interference of other streams.
The factor $\eta\in(0,1)$, known as the multi-user efficiency, can be computed exactly by solving the following coupled equations for $\eta$ and $\xi$:
\begin{align}\label{eq:GV_psi}
\No/\eta &= \No + \beta\Exop_{\srv,Z}\!\left[\abs{\mathsf{F}\!\left(\srv + \sqrt{\No/\eta}Z,\No/\xi \right) - \srv}^2 \right]\!,\\\label{eq:GV_phi}
\No /\xi &= \Nopost + \beta\Exop_{\srv,Z}\!\left[\mathsf{G}\!\left(\srv + \sqrt{\No/\eta}Z,\No/\xi\right)\right]\!.
\end{align}
Here, the functions $\mathsf{F}$ and $\mathsf{G}$ depend on constellation set $\setO$ as in \fref{eq:F_compute} and \fref{eq:G_compute}, respectively.

We note that the performance of IO data detection corresponds to the case with $\No=\Nopost$. In this case, the right-hand side of \fref{eq:GV_phi} is equal to \fref{eq:GV_phi}, and therefore, $\eta=\xi$.
Thus, the multi-user efficiency $\eta$ is given by a  
single fixed-point equation
%
%
%
\begin{align}\label{eq:GV_IO_fp}
\No /\eta &= \No + \beta\Exop_{\srv,Z}\!\left[\abs{\mathsf{F}\!\left(\srv + \sqrt{\No/\eta}Z,\No /\eta \right) - \srv}^2 \right]\!.
\end{align}
If there exist multiple fixed points to \fref{eq:GV_psi} and \fref{eq:GV_phi}, we pick the tuple $(\eta,\xi)$ that minimizes the so-called ``free energy'' (as done in  \cite[Sec. 2-D]{GV2005}) given by:
\begin{align}\notag
\mathcal{F} &=\int_\complexset p(z,\No/\eta) \log_2 p(z,\No/\xi) \dd z
\\\notag
 &\quad+ 
\frac{1}{\beta}\!\left( (\xi -1) \log_2 e - \log_2\xi
\right) + \log_2 \frac{\xi}{\pi} - \frac{\xi}{\eta}\log_2 e\\
&\quad+ \frac{\Nopost}{\beta\No}\frac{\xi}{\eta}(\eta-\xi)\log_2 e + \frac{1}{\beta}\log_2(2\pi) + \frac{\xi}{\eta\beta}\log_2e,
\label{eq:GV_free_energy}
\end{align}
where the term  $p(z,x)$ in \fref{eq:GV_free_energy} is obtained by marginalizing the joint distribution \mbox{$p(z,x\vert s)\sim\setC\setN(s,x)$} with respect to the prior distribution \mbox{$s\sim p(s)$}, i.e., \mbox{$p(z,x)=\int_\complexset p(z,x\vert s) p(s)\dd s$}.
%


We note that the aforementioned results rely on the replica method, which build on the replica assumptions \cite{TanakaCDMA,GV2005}.
Montanari and Tse in \cite{MT06} proposed an alternative approach to prove Tanaka's results in \cite{TanakaCDMA} up to certain system ratios $\beta$ for BPSK systems. 
Instead of directly analyzing a dense MIMO system matrix, Montanari and Tse first introduce a ``sparse signature'' scheme, in which only a sparse subset of the channel matrix is active.
For this system, the performance of belief propagation (BP) can be analyzed via density evolution. Once the density evolution expressions were established in the large-system limit, one can ``densify'' the MIMO system matrix to ensure that the each entry is distributed (A1);
we shall refer to this setup as \emph{large-sparse} limit~\cite{WG2006}.
%
By doing so, one recovers Tanaka's results derived under the replica method without relying on the replica assumptions.
%
%
The analysis of BPSK systems using this sparse signature scheme has been generalized to arbitrary prior input distributions in \cite{GW2006,GW2007,WG2006}. 
Not surprisingly, these results 
agree with the replica results \cite{GV2005} 
when the fixed-point $\eta$ to \fref{eq:GV_IO_fp} is unique.
In addition, in \cite{WG2006}, Wang and Guo showed that BP is equivalent to element-wise MAP estimation, and the detection performance of the BP is identical to that given by a AWGN system with noise amplified by $\eta^{-1}$ obtained in \fref{eq:GV_IO_fp}.
%

\subsection{Fixed Points of \LAMA}

Before we provide exact optimality conditions for \LAMA, we highlight that under \fref{thm:CSE}, as $t\to\infty$ the cSE converges to the following fixed-point equations: for $\No=\Nopost$, we have
\begin{align}\label{eq:fixed_pt}
\sigma_\IO^2 = \No + \beta \Psi(\sigma_\IO^2,\sigma_\IO^2),
\end{align}
whereas for $\Nopost\neq\No$, we have
\begin{align}
\label{eq:fixed_pt_JO}
\sigma_{\text{m}}^2 = \No + \beta \Psi\left(\sigma_{\text{m}}^2,\gamma_{\text{m}}^2\right) \quad \text{and} \quad
\gamma_{\text{m}}^2 = \Nopost+&\beta\Phi\left(\sigma_{\text{m}}^2,\gamma_{\text{m}}^2\right).
\end{align}

As mentioned above, the fixed-point equation for LAMA in \fref{eq:fixed_pt} and \fref{eq:fixed_pt_JO} corresponds to the fixed-point equations for IO data detection in \fref{eq:GV_IO_fp}, and \fref{eq:GV_psi} and \fref{eq:GV_phi}, respectively, with $\sigma^2_{\text{m}}=\No/\eta$ and $\gamma^2_{\text{m}}=\No/\xi$. 
%

%
In general, the above fixed-point equations may have multiple solutions.
In the case of a unique fixed point, then \LAMA \emph{always} recovers the solution with the minimal effective noise variance $\sigma^2$ regardless of initialization, and thus, achieves the same error-rate performance as IO data detection (see \fref{sec:optimality} for the details).
%
%
In the case of such non-unique fixed points, Guo and Verd\'u choose the solution that minimizes free-energy\footnote{The solution that minimizes the free energy in \fref{eq:GV_free_energy} is equivalent to the thermodynamically dominant solution in statistical physics \cite{TanakaCDMA,GV2005}.} 
given in~\fref{eq:GV_free_energy}, whereas the fixed point obtained by \LAMA depends on the initialization\footnote{Convergence to another fixed-point solution is possible if LAMA is initialized sufficiently close to such a fixed point \cite{ZMWL2015}.} 
of the algorithm and thus, we cannot expect it to converge to the same fixed point that minimizes the free-energy~\fref{eq:GV_free_energy}. 
We note that depending on the initialization of \LAMA presented in \fref{alg:LAMA}, \LAMA converges to the fixed-point solution with the largest effective noise variance $\sigma^2$ in \fref{eq:fixed_pt} and \fref{eq:fixed_pt_JO}, respectively. 
Therefore, if there are multiple fixed points to \fref{eq:fixed_pt}, then \LAMA is, in general, sub-optimal and does not necessarily converge to the fixed-point solution with minimal free-energy.

Before we delve into the optimality analysis of \LAMA, we note that the fixed-point analysis for \LAMA with noise mismatch is more involved as it requires finding fixed points for the coupled fixed-point equations in \fref{eq:fixed_pt_JO}. 
Hence, we focus  on the case $\No=\Nopost$.

\subsection{When Does \LAMA Achieve the Same Performance as IO Data Detector?}
\label{sec:optimality}
We note that the performance of \LAMA (in the large-system limit) is fully described by the SE framework. However, characterizing the performance of the IO data detector is a non-trivial task.
%
Although an analysis via the replica method \cite{GV2005} was recently proved to be correct under mild assumptions \cite{RP2016},  a verification of the assumptions still requires extensive work for each prior distribution.
Therefore, to establish optimality of \LAMA, we first introduce an additional assumption to characterize the performance of the IO data detector, and then show that under this assumption, \LAMA achieves the same data detection performance as the IO data detector. 
%

%
We define a specific example of a large-sparse limit that will be used for our analysis of \LAMA. The general definition of large-sparse limit is provided in \cite{WG2006}.

\begin{defi}
\label{def:LarseSparselimit}
We define the large-sparse limit as the following procedure:
%
First, start by defining a binary-valued matrix $\bB \in \{0,1\}^{\MR\times\MT}$. 
Pick a constant $\Gamma \leq \MT$ and generate each entry $B_{k,\ell}$ as an i.i.d. Bernoulli random variable with probability $\Gamma/\MT$.
Define a normalization constant $\Gamma_\ell = \sum_{k=1}^\MR B_{k,\ell}$ for each $\ell=1,\ldots,\MT$. 
Then, generate the channel matrix $\bH$ with each entry being i.i.d. $H_{k,\ell}\in \setC\setN(0,1/\Gamma_\ell)$ if $B_{k,\ell}=1$ and 0 otherwise.
Based on this construction of $\bH$ for a fixed $\Gamma$, 
we define the large-sparse limit when we first let $\MR,\MT\to\infty$ with $\MT/\MR=\beta$. Then, we let $\Gamma\to\infty$.
\end{defi}
We note that the large-system limit corresponds to the case when we first set $\Gamma = \MT$ and then let $\MR,\MT \to\infty$ with $\MT/\MR = \beta$. 
However, we will assume that we first fix a constant $\Gamma < \MT$, and then let $\MR,\MT\to\infty$; 
this formulation of the large-sparse limit is needed to prevent the factor graph for the input-output relation in \fref{eq:systemmodel} from having short cycles~\cite{WG2006}. 
We need an additional assumption to establish optimality of \LAMA. 
We assume that exchanging the order of the large-system limit still holds true for cSE:

\begin{description}
\item[(A3)]
We assume that cSE for LAMA remains valid in the large-sparse limit.
\end{description}

With \fref{def:LarseSparselimit} and (A3), we will now establish optimality of \LAMA in two parts.
First, we show that in the large-sparse limit, BP achieves the same performance as the IO data detector and the input-output relation is asymptotically decoupled into AWGN channels with equal decoupled noise variance. 
Second, we show that \LAMA achieves the same noise variance as that given by BP using state evolution. 
Since the input-output relation is decoupled into AWGN channels and \LAMA achieves the lowest (unique) decoupled noise variance, \LAMA achieves the same detection performance as the IO data detector.
We show the first part by \cite[Thm. 4]{WG2006}:

\begin{thm}\label{thm:wg2006_BP} Assume the large-sparse limit and the system ratio $\beta_\textnormal{BP}$ is chosen such that the fixed-point solution of BP $\eta_\textnormal{BP}$ to \fref{eq:GV_IO_fp} is unique.
Then, BP achieves the same performance as the IO data detector. 
In addition, the posterior distribution of each user after BP converges to that given by an AWGN channel with variance $\No/\eta_\textnormal{BP}$.
\end{thm}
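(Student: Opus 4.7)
The plan is to work directly on the factor graph associated with the sparse matrix $\bB$ and exploit two classical ingredients: that sparse random bipartite graphs are \emph{locally tree-like}, and that belief propagation (BP) computes exact marginals on trees. Associate with $\bB$ a bipartite graph with $\MT$ variable nodes (one per symbol $s_\ell$) and $\MR$ factor nodes (one per observation $y_k$), with an edge between $\ell$ and $k$ iff $B_{k,\ell}=1$. For fixed average degree $\Gamma$, Bernoulli sparsification ensures that a neighborhood of radius $R$ around a typical node is a tree with probability tending to $1$ as $\MR,\MT\to\infty$; this is the standard configuration-model argument and it is precisely what motivates the order of limits prescribed in \fref{def:LarseSparselimit}.

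Next, I would write down the BP update rules using the Gaussian observation kernels $p(y_k\mid\bms,\bmh_k^\row)$ at factor nodes and the prior $p(s_\ell)$ at variable nodes, and track the distributions of the cavity messages across iterations. Because the effective signal arriving at a factor node is a sum of $\Gamma_k$ independent contributions, a central-limit argument (in $\Gamma$) makes the cavity messages asymptotically Gaussian, centered on the transmitted symbol and with variance $v_t$ depending only on the iteration index. The recursion governing $v_t$ is the density-evolution recursion whose fixed-point equation is exactly \fref{eq:GV_IO_fp} with $v=\No/\eta$; identifying $\eta_\textnormal{BP}$ as the fixed point then gives $v_\infty=\No/\eta_\textnormal{BP}$.

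The uniqueness hypothesis now does the heavy lifting: with the natural uninformative initialization, the monotone density-evolution operator converges to the unique fixed point $v_\infty=\No/\eta_\textnormal{BP}$ regardless of the realization, so BP messages concentrate on this Gaussian limit. Because the graph is locally tree-like up to any constant depth $R$, the BP marginal at variable $\ell$ after $R$ iterations is the \emph{exact} posterior on the depth-$R$ neighborhood; letting $R\to\infty$ and combining tree-correctness of BP with the decoupling just established shows that this limit coincides with the true posterior $p(s_\ell\mid\bmy,\bH)$. Consequently, the BP hard decision asymptotically minimizes the symbol-error probability and matches the IO detector, while the limiting marginal is exactly the posterior obtained from the single-user AWGN channel with variance $\No/\eta_\textnormal{BP}$, establishing both claims in the theorem.

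The main obstacle will be the central-limit-plus-concentration step that turns sum-product messages into asymptotically Gaussian densities as $\Gamma\to\infty$, while the prescribed order of limits (first $\MR,\MT\to\infty$ with $\beta$ fixed, then $\Gamma\to\infty$) is respected and the tree-approximation error is controlled uniformly across $R$. This is precisely where the technical machinery of \cite{WG2006} must be imported: the large-sparse construction in \fref{def:LarseSparselimit} is specifically designed to suppress short cycles that would otherwise spoil the tree-like approximation, and the commutation of densification with density evolution is the delicate ingredient that \cite[Thm.~4]{WG2006} supplies, which I would invoke rather than re-derive.
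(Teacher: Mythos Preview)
The paper does not supply its own proof of this statement; it is quoted verbatim from \cite[Thm.~4]{WG2006}, and the only proof-level remark the paper offers is that \cite[Sec.~V]{WG2006} establishes it via a \emph{sandwiching argument between genie-aided BP and classical BP}. Your outline captures the locally-tree-like heuristic and the density-evolution recursion correctly, but it diverges from the cited proof at the crucial step, and that divergence is a genuine gap.

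The problematic sentence is: ``the BP marginal at variable $\ell$ after $R$ iterations is the exact posterior on the depth-$R$ neighborhood; letting $R\to\infty$ \ldots\ shows that this limit coincides with the true posterior $p(s_\ell\mid\bmy,\bH)$.'' Tree-correctness of BP tells you only that the depth-$R$ BP marginal equals the posterior \emph{conditioned on the observations inside the depth-$R$ tree}. The IO posterior, however, conditions on \emph{all} of $\bmy$, including observations outside that neighborhood, and the full graph has cycles. Locally-tree-like structure alone does not imply that this extra conditioning is asymptotically irrelevant; one needs an argument that the influence of the boundary vanishes. The proof in \cite{WG2006} closes exactly this gap by introducing a genie-aided BP in which the interfering symbols on the boundary of the computation tree are revealed as side information. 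Genie-aided BP upper-bounds IO performance, ordinary BP lower-bounds it, and under uniqueness of the density-evolution fixed point the two sandwich bounds coincide, forcing BP to match IO. Your outline does not contain this sandwiching mechanism, and without it the passage from ``BP is exact on the tree'' to ``BP matches the full-graph IO detector'' is unjustified.

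Separately, your final paragraph proposes to ``invoke rather than re-derive'' \cite[Thm.~4]{WG2006} for the delicate commutation of limits. But \cite[Thm.~4]{WG2006} \emph{is} the statement you are proving, so invoking it is circular.
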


\fref{thm:wg2006_BP} shows that in the large-sparse limit and for unique fixed points, one can use BP to achieve the same performance as IO data detector. 
The proof in \cite[Sec. V]{WG2006} uses a sandwiching argument between genie-aided BP and classical BP to achieve IO performance.
Interestingly, the posterior distribution of each transmit stream after BP converges to that given by an AWGN channel. 
In addition, the noise variance of the equivalent AWGN channel can be characterized by solving a fixed-point equation \fref{eq:GV_IO_fp}; this fixed-point equation coincides exactly to that given by the replica method shown in \cite{GV2005}. 
Now that we have shown that BP achieves IO performance and characterized the decoupling of AWGN, we now establish optimality of \LAMA.
\begin{cor}
Assume the large-system limit and $\beta_\textnormal{LAMA}=\beta_\textnormal{BP}$ from \fref{thm:wg2006_BP}.
\label{cor:LAMA_opti}
Then,  \LAMA decouples the MIMO system into parallel AWGN channels with variance $\sigma_\textnormal{IO}^2$, which is a unique fixed-point solution to \fref{eq:fixed_pt} with $\sigma_\textnormal{IO}^2=\No/\eta_\textnormal{BP}$ from \fref{thm:wg2006_BP}.
\end{cor}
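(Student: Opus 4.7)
The plan is to leverage the identification between LAMA's complex state-evolution (cSE) fixed-point equation in \fref{eq:fixed_pt} and the BP fixed-point equation in \fref{eq:GV_IO_fp}, combined with \fref{thm:wg2006_BP} and assumption (A3), to transfer the optimality of BP in the large-sparse limit over to LAMA in the large-system limit.

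First, I would observe that setting $\sigma_\text{IO}^2 = \No/\eta_\text{BP}$ in the BP fixed-point equation \fref{eq:GV_IO_fp} (with $\No=\Nopost$) yields exactly the LAMA cSE fixed-point equation \fref{eq:fixed_pt}, because the expectation in \fref{eq:GV_IO_fp} is precisely $\Psi(\sigma_\text{IO}^2,\sigma_\text{IO}^2)$ as defined in \fref{eq:Psi}. Under the hypothesis $\beta_\text{LAMA}=\beta_\text{BP}$, \fref{thm:wg2006_BP} guarantees uniqueness of $\eta_\text{BP}$, so this induces a unique fixed point $\sigma_\text{IO}^2$ for \fref{eq:fixed_pt} as well.

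Next, I would invoke (A3) to assert that the cSE recursion \fref{eq:SE_IO} describes LAMA's effective noise variance not only in the large-system limit but also in the large-sparse limit. Starting from the LAMA initialization ($\sigma_1^2 = \No + \beta\Varop_S[S]$ via \fref{def:effective_var} and the initial residual) and iterating \fref{eq:SE_IO}, the sequence $\{\sigma_t^2\}$ is a deterministic recursion whose only accumulation point is the unique fixed point $\sigma_\text{IO}^2 = \No/\eta_\text{BP}$; hence $\sigma_t^2 \to \sigma_\text{IO}^2$ as $t\to\infty$. Combining this with the decoupling property established in \fref{sec:LAMAdecouple} (in particular the equivalent scalar AWGN relation \fref{eq:decoupledAWGNsystem}) shows that the per-stream input-output relation at the output of LAMA converges to an AWGN channel with noise variance $\sigma_\text{IO}^2$.

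Finally, to upgrade this from decoupling to optimality, I would compare LAMA's decoupled posterior $f(s_\ell \mid z_\ell^t, \sigma_t^2)$ with the posterior distribution delivered by BP in the large-sparse limit, which by \fref{thm:wg2006_BP} is exactly the posterior of an AWGN channel with variance $\No/\eta_\text{BP} = \sigma_\text{IO}^2$. Since both posteriors coincide and BP attains IO performance, the element-wise hard decisions \fref{eq:hard_decision} produced by LAMA must attain the same symbol-error rate as the IO detector. The main obstacle is the justification of (A3), i.e., exchanging the large-system and large-sparse limits so that the cSE expressions continue to describe LAMA's behavior in the setting of \fref{def:LarseSparselimit}; everything else then follows from the uniqueness of the fixed point and the monotone one-dimensional dynamics of \fref{eq:SE_IO}.
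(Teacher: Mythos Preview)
Your proposal is correct and follows essentially the same route as the paper: identify the BP fixed-point equation \fref{eq:GV_IO_fp} with LAMA's cSE fixed-point equation \fref{eq:fixed_pt}, transfer uniqueness of the fixed point via $\beta_\text{LAMA}=\beta_\text{BP}$, and then appeal to LAMA's decoupling property to conclude that the per-stream AWGN variance equals $\sigma_\text{IO}^2=\No/\eta_\text{BP}$. The paper's proof is terser---it omits your explicit convergence argument for $\{\sigma_t^2\}$ and your comparison of posteriors, and it does not explicitly invoke (A3) in the proof itself (it simply cites \cite{BM2011} for decoupling in the large-system limit and lets the identification of fixed-point equations do the rest)---but the logical skeleton is the same.
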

The proof of \fref{cor:LAMA_opti} follows from first noting that \fref{eq:GV_IO_fp} and \fref{eq:fixed_pt} are equal. Hence, since $\beta_\textnormal{LAMA}=\beta_\textnormal{BP}$, \LAMA has a unique fixed-point solution to \fref{eq:fixed_pt} given by $\sigma_\textnormal{IO}^2$ which is equivalent to $\No/\eta_\textnormal{BP}$.
Since \LAMA decouples the MIMO system into parallel AWGN channels \cite{BM2011} and the decoupled variances are equal, \LAMA achieves the same performance as the IO data detector.
%
In \fref{sec:LAMA_optimal}, we provide conditions for which there is exactly one (unique) fixed point with minimum effective noise variance $\sigma^2$.

\subsection{Exact Recovery Thresholds (ERTs)}\label{sec:ERTs}

We start by analyzing \LAMA in a noiseless setting and for $\No=\Nopost=0$.
We provide sharp bounds on the system ratio $\beta=\MT/\MR$, which guarantee exact recovery of an unknown transmit signal $\vecs_0$ in the large-system limit. 
We show that if $\beta<\betamax$, where $\betamax$ is the so-called \emph{exact recovery threshold (ERT)}, then \LAMA perfectly recovers $\vecs_0$.
Note that the ERT depends on the constellation $\setO$ and resembles to the phase-transition behavior observed in sparse signal recovery~\cite{SR2012,DT2011,DT2010}; the key difference is that \LAMA operates with dense vectors.

We will show in \fref{thm:recovery} that if $\beta<\betamax$, there exists a unique fixed point at $\sigma^2=0$ to the fixed-point equation in \fref{eq:fixed_pt}. The unique fixed point at $\sigma^2=0$ implies that the effective noise variance output for the decoupled AWGN channel will be zero. Therefore, the output from the non-linear MMSE estimate from \LAMA will be $\mathsf{F}(\bms_0,\sigma^2) = \bms_0$ from \fref{eq:DAMP_F}, and hence \LAMA perfectly recovers $\bms_0$.
%
%
%
For $\beta\geq\betamax$, perfect recovery cannot be guaranteed.\footnote{We assume the initialization as given in \fref{alg:LAMA}. LAMA may recover the original signal for $\beta\geq\betamax$ if initialized sufficiently close to the optimal fixed point; see \cite{ZMWL2015} for a discussion.} To make this behavior explicit, we need the following technical result with proof in \fref{app:DAMPsolvability}.

\begin{lem}\label{lem:DAMPsolvability} Fix the constellation set $\setO$ and let $\Varop_S[S]$ be finite. Then, there exists a non-negative gap $\sigma^2 - \Psi(\sigma^2,\sigma^2)\geq 0 $ with equality if and only if $\sigma^2 = 0$. As $\sigma^2\rightarrow 0$, we have MSE $\Psi(\sigma^2,\sigma^2)\rightarrow 0$; as $\sigma^2\rightarrow\infty$, we have the the MSE $\Psi(\sigma^2,\sigma^2)\rightarrow \Varop_S[S]$.
\end{lem}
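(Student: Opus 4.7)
The key observation is that $\Psi(\sigma^2,\sigma^2)$ is exactly the minimum mean-squared error for estimating $S\sim p(S)$ from the single-sample complex AWGN observation $Y = S + \sigma Z$ with $Z \sim \jpg(0,1)$, since $\mathsf{F}(Y,\sigma^2)=\Exop[S \mid Y]$ is the posterior mean computed under the \emph{correct} noise variance. My plan is to use the orthogonality principle to rewrite $\sigma^2-\Psi(\sigma^2,\sigma^2)$ in a manifestly non-negative form and settle the equality case, and then to use the discrete, bounded structure of $\setO$ to handle the two limits via dominated convergence.

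For non-negativity, I invoke the orthogonality relation $\Exop[(S-\mathsf{F}(Y,\sigma^2))\,g(Y)^*]=0$, valid for any measurable $g\colon\complexset\to\complexset$ with finite second moment, applied with $g(Y)=Y-\mathsf{F}(Y,\sigma^2)$. Decomposing $Y-S=(Y-\mathsf{F})-(S-\mathsf{F})$, taking squared magnitudes, and killing the cross terms by orthogonality yields
\begin{align*}
\sigma^2 \;=\; \Exop[|Y-S|^2] \;=\; \Exop\!\left[|Y-\mathsf{F}(Y,\sigma^2)|^2\right] + \Psi(\sigma^2,\sigma^2),
\end{align*}
so that $\sigma^2-\Psi(\sigma^2,\sigma^2)=\Exop[|Y-\mathsf{F}(Y,\sigma^2)|^2]\geq 0$. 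Equality forces $Y=\mathsf{F}(Y,\sigma^2)$ almost surely; but $\mathsf{F}(Y,\sigma^2)=\sum_{a\in\setO}w_a(Y,\sigma^2)\,a$ always lies in $\chull(\setO)$, a bounded subset of $\complexset$, whereas $Y=S+\sigma Z$ has full support on $\complexset$ whenever $\sigma^2>0$. Hence the equality can hold only at $\sigma^2=0$, in which case $Y=S$ and $\Psi=0$ trivially.

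For the two limits I plan to invoke dominated convergence, using the uniform bound $|\mathsf{F}|\leq \max_{a\in\setO}|a|$. Writing $|S+\sigma Z-a|^2/\sigma^2 = |Z+(S-a)/\sigma|^2$, as $\sigma\to 0$ the exponent corresponding to $a=S$ dominates all others, so that $w_S\to 1$ and $w_a\to 0$ for $a\neq S$; therefore $\mathsf{F}(S+\sigma Z,\sigma^2)\to S$ pointwise and $\Psi(\sigma^2,\sigma^2)\to 0$. As $\sigma\to\infty$, $|Z+(S-a)/\sigma|^2\to|Z|^2$ for every $a$, so the common factor $e^{-|Z|^2}$ cancels in the ratio defining $w_a$, giving $w_a\to p_a$ and $\mathsf{F}\to\sum_a p_a a=\Exop[S]$; dominated convergence then yields $\Psi(\sigma^2,\sigma^2)\to \Exop[|S-\Exop[S]|^2]=\Varop_S[S]$.

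The most delicate step is the equality analysis, which rests on separating the image of the posterior mean from the support of $Y$; this is straightforward here because $\setO$ is finite and hence $\chull(\setO)$ is bounded, but would require more care under a continuous prior. The finiteness of $\Varop_S[S]$ is used only to guarantee that the envelope for dominated convergence is integrable.
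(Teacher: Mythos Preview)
Your proof is correct but proceeds along a genuinely different path from the paper's. The paper dispatches the non-negativity and strict-inequality part in one line by invoking the linear MMSE upper bound $\Psi(\sigma^2,\sigma^2)\le \frac{\sigma_s^2}{\sigma_s^2+\sigma^2}\,\sigma^2$ (cited from an external reference), which is strictly below $\sigma^2$ whenever $\sigma^2>0$; the limit $\sigma^2\to 0$ then drops out of the same bound. You instead derive the exact Pythagorean identity $\sigma^2-\Psi(\sigma^2,\sigma^2)=\Exop\bigl[|Y-\mathsf{F}(Y,\sigma^2)|^2\bigr]$ via the orthogonality principle, and settle the equality case by a support argument (the posterior mean is confined to the bounded set $\chull(\setO)$, whereas $Y$ has full support on $\complexset$ for $\sigma^2>0$). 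Your route is fully self-contained and yields more information, namely the precise form of the gap; the paper's route is shorter but leans on an outside citation. For the large-$\sigma^2$ limit the two arguments coincide in spirit, showing $w_a\to p_a$ and hence $\mathsf{F}\to\Exop[S]$; for the small-$\sigma^2$ limit you give a direct pointwise-convergence-plus-dominated-convergence argument, while the paper simply reads it off the linear MMSE bound.
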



For a finite value of $\Varop_S[S]$, \fref{lem:DAMPsolvability} shows that we have $\Psi(\sigma^2,\sigma^2)<\sigma^2$ for all $\sigma^2 > 0$. 
Now, suppose that for some $\beta>1$, $\beta\Psi(\sigma^2,\sigma^2)<\sigma^2$ also holds for all $\sigma^2>0$.
Then, as long as $\beta>1$ is not too large to also ensure $\beta\Psi(\sigma^2,\sigma^2)<\sigma^2$, for all $\sigma^2>0$, there will only be a \emph{single} fixed point at $\sigma^2=0$.
Therefore, LAMA is able to perfectly recover the original signal $\bms_0$ by \fref{thm:CSE} since the unique fixed point at $\sigma^2=0$ implies that $\Psi(\sigma^2,\sigma^2)=0$. 
%
%
Leveraging the gap between $\Psi(\sigma^2,\sigma^2)$ and $\sigma^2$ will allow us to find the exact recovery threshold (ERT) of \LAMA for values of $\beta>1$. 
For the fixed (discrete) constellation $\setO$, the largest value of $\beta$ that ensures $\beta\Psi(\sigma^2,\sigma^2)<\sigma^2$ is precisely the ERT. 
%

%
%
%
%
%
\begin{defi}
\label{def:ERT}Fix $\setO$ and let $\No=\Nopost=0$. Then, the exact recovery threshold (ERT) that enables perfect recovery by \LAMA is defined by
\begin{align}\label{eq:beta_recover}
\betamax &= \min_{\sigma^2\geq0}\!\left\{\!\left(
\frac{\Psi(\sigma^2,\sigma^2)}{\sigma^2}\right)^{\!\!-1}
\right\}\!.
\end{align}
\end{defi}
We are now ready to  establish perfect recovery with $\beta^\text{max}_\setO$; the proof is given in \fref{app:recovery}. 


\begin{thm} \label{thm:recovery} Let $\No=\Nopost=0$ and $\bH$ satisfy (A2). Fix the constellation $\setO$. 
If $\beta<\betamax$, then \LAMA perfectly recovers $\bms_0$ in \fref{eq:systemmodel} in the large-system limit.
\end{thm}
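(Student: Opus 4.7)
The plan is to invoke the complex state evolution (cSE) framework together with the structural properties of $\Psi(\sigma^2,\sigma^2)$ established in \fref{lem:DAMPsolvability}, and show that under $\No=\Nopost=0$ and $\beta<\betamax$, the cSE recursion forces the effective noise variance to the unique fixed point $\sigma^2=0$.

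First, since $\bH$ satisfies (A2) and $\No=\Nopost=0$, \fref{cor:cSE_IO} applies and the effective noise variance of \LAMA evolves, in the large-system limit, according to
\begin{align*}
\sigma_{t+1}^2 \;=\; \beta\,\Psi(\sigma_t^2,\sigma_t^2),
\end{align*}
initialized at $\sigma_1^2=\beta\,\Varop_S[S]>0$. From the definition of $\betamax$ in \fref{def:ERT}, for every $\sigma^2>0$ we have $\beta\Psi(\sigma^2,\sigma^2)/\sigma^2<1$ whenever $\beta<\betamax$, because $\betamax$ is precisely the infimum of $\sigma^2/\Psi(\sigma^2,\sigma^2)$ over $\sigma^2\ge 0$. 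Combined with \fref{lem:DAMPsolvability}, which ensures $\Psi(\sigma^2,\sigma^2)\ge 0$ and $\Psi(\sigma^2,\sigma^2)=0$ iff $\sigma^2=0$, this means that $\sigma^2=0$ is the unique non-negative fixed point of the scalar map $\sigma^2\mapsto \beta\,\Psi(\sigma^2,\sigma^2)$ when $\beta<\betamax$.

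Next, I would argue that the cSE iterates converge to this unique fixed point. Because $\beta\Psi(\sigma^2,\sigma^2)<\sigma^2$ for all $\sigma^2>0$, the sequence $\{\sigma_t^2\}_{t\ge 1}$ is strictly decreasing as long as it is positive, and it is bounded below by $0$. Hence it converges to some limit $\sigma_\infty^2\ge 0$. Continuity of $\Psi$ in its first argument (which follows from the dominated convergence theorem applied to \fref{eq:Psi} using the fact that $|\mathsf{F}|$ is uniformly bounded by $\max_{a\in\setO}|a|$) implies that $\sigma_\infty^2$ is a fixed point, and by the uniqueness argument above we conclude $\sigma_\infty^2=0$.

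Finally, I would translate $\sigma_t^2\to 0$ into perfect recovery of $\bms_0$. By the decoupling property of \LAMA (cf.~\fref{sec:LAMAdecouple} and \fref{fig:retro}), the Gaussian output behaves as $\bmz^t=\bms_0+\bmn^t$ with $\bmn^t$ of per-entry variance $\sigma_t^2$, and the non-linear estimate is $\shat^{t+1}=\mathsf{F}(\bmz^t,\Nopost(1+\tau^t))$ whose per-entry MSE in the large-system limit equals $\Psi(\sigma_t^2,\sigma_t^2)\to 0$ by \fref{lem:DAMPsolvability}. Therefore $\tfrac{1}{\MT}\vecnorm{\shat^{t+1}-\bms_0}_2^2\to 0$ as $t\to\infty$, which is the desired perfect-recovery conclusion.

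The main obstacle I anticipate is the subtle issue at $\sigma^2\downarrow 0$ in the definition of $\betamax$: the ratio $\Psi(\sigma^2,\sigma^2)/\sigma^2$ is of indeterminate form there, and one must verify that for discrete constellations $\Psi$ vanishes faster than $\sigma^2$ so that this endpoint does not interfere with the infimum (equivalently, $(\Psi/\sigma^2)^{-1}\to\infty$ as $\sigma^2\downarrow 0$). Handling this carefully, together with a clean monotonicity-plus-continuity argument on the scalar recursion, is where the proof will spend most of its technical effort; everything else is routine.
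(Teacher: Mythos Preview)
Your approach is essentially the same as the paper's: both hinge on the observation that, with $\No=\Nopost=0$ and $\beta<\betamax$, the cSE fixed-point equation $\sigma^2=\beta\Psi(\sigma^2,\sigma^2)$ has $\sigma^2=0$ as its unique non-negative solution, so \LAMA's effective noise variance is driven to zero. The paper's proof is extremely terse (it essentially just invokes \fref{def:ERT} to assert uniqueness of the fixed point at zero), whereas you supply the additional monotonicity-plus-continuity convergence argument and flag the $\sigma^2\downarrow 0$ endpoint issue; these are genuine details the paper leaves implicit, but the underlying strategy is identical.
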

We emphasize that for a given constellation $\setO$, the ERT $\betamax$ can be computed numerically from \fref{eq:beta_recover}, where $\Psi(\sigma^2,\sigma^2)$ is given by \fref{thm:CSE}.
We emphasize that the signal variance, $\Varop_S[S]$ does not have an impact on the ERT as the MSE function $\Psi(\sigma^2,\sigma^2)$ and $\sigma^2$ both scale linearly with $\Varop_S[S]$.
In \fref{sec:LAMA_optimal}, we extend our analysis to the noisy case.

\subsection{Optimality Conditions for \LAMA With Noise}\label{sec:LAMA_optimal}

We develop optimality conditions of \LAMA in the presence of noise, and we focus on mismatch-free case as the associated optimality conditions allow for an elegant analysis.\footnote{
The mismatch-free case requires us to identify all fixed points of \fref{eq:fixed_pt}, whereas mismatch case requires the identification of all fixed points to the \emph{coupled} fixed-point equations in \fref{eq:fixed_pt_JO}.
%
A detailed analysis of optimality conditions for \LAMA with noise variance mismatch is left for future work.}

In the presence of noise ($\No>0$), exact recovery is no longer guaranteed. Nevertheless, if \LAMA converges to a unique fixed-point, then we obtain the same error-rate performance as the IO data detector.
%
In such situations, we call \LAMA to be optimal.
%
%
Furthermore, if  multiple fixed-points exist, we call the fixed-point with minimum effective noise variance the \emph{optimal fixed point}, whereas all other fixed points are called \emph{suboptimal fixed points}. 

\begin{table}
\centering
\caption{Summary of (Sub-)Optimality Regimes of \LAMA}
\begin{tabular}{@{}cccc@{}}
\toprule
&$\beta<\betamin$\hspace*{-1mm}&$\betamin\!\leq\!\beta\!\leq\!\betamax$\hspace*{-1mm}&$\betamax<\beta$\\
\midrule
\hspace{15mm}$\No<\Nomin$\hspace*{-3mm}&\em optimal&\em optimal& suboptimal\\
$\Nomin\leq\No\leq\Nomax$\hspace*{-3mm}
&\em optimal& (sub-)optimal\tablefootnote{For some constellations, there may exist intervals in $[\Nomin,\Nomax]$ where \LAMA is still optimal; an example is shown in \fref{fig:SE_16PSK}.} &suboptimal\\
\hspace{-12mm}$\Nomax<\No$&\em optimal&\em optimal&\em optimal\\
\bottomrule
\end{tabular} 
\label{tbl:IOLAMAoptimal_reg}
\end{table}

In essence, there exist three different regimes for \LAMA (see \fref{tbl:IOLAMAoptimal_reg}), which depend on the system ratio $\beta$: 
(i) if $\beta$ is smaller than the so-called \emph{minimum recovery threshold} (MRT) $\betamin$, then \LAMA is \emph{always} guaranteed to converge to the unique fixed point (with minimal $\sigma^2$), i.e., the LAMA delivers IO data detection performance irrespective of the noise variance $\No$ 
(ii) if $\beta$ is larger or equal to than the MRT, but smaller than or equal to the ERT, then multiple fixed points exist. In this case, optimality of \LAMA depends on the noise variance $\No$. If the noise variance $\No$ is \emph{larger} than the so-called \emph{maximum guaranteed noise variance} $\Nomax$, then \LAMA converges to the unique fixed point. 
Similarly, if the noise variance $\No$ is strictly \emph{smaller} than the so-called \emph{minimum critical noise $\Nomin$}, then \LAMA converges to the optimal fixed point. However, if $\No\in[\Nomin,\Nomax]$, then \LAMA converges, in general, to a sub-optimal fixed point\footnote{We note that \LAMA can still be optimal if it was initialized close to the optimal fixed point \cite{Maleki2010phd}, but we exclude this case from our analysis.}.
We also note that for some constellations, there may exist intervals in $[\Nomin,\Nomax]$ in which \LAMA remains to be optimal. This behavior is shown in \fref{fig:SE_16PSK}.
Furthermore, as $\beta\rightarrow\betamax$, the minimum critical noise $\Nomin\rightarrow0$, which implies that \LAMA is optimal when 
$\No>\Nomax$.
(iii) If $\beta$ exceeds the ERT, then \LAMA is optimal if $\No>\Nomax$. For all other values of $\No$, \LAMA converges, in general, to a sub-optimal fixed point.
In order to make these three regimes more explicit, we require the following definition.
\begin{defi}\label{def:MRT}Fix the constellation $\setO$ and let $\Nopost = \No$. Then, the minimum recovery threshold (MRT) $\betamin$ is defined as follows:
\begin{align}\label{eq:beta_badstate}
\betamin=\min_{\sigma^2\geq0}\!\left\{\!\left(\frac{\textnormal{d}\Psi(\sigma^2,\sigma^2)}{\textnormal{d}\sigma^2}\right)^{\!\!-1}\right\}\!.
\end{align}
\end{defi}

By the definition of the MRT, it is easy to observe that the fixed point of \fref{eq:fixed_pt} is unique for all system ratios $\beta<\betamin$, as $\beta \frac{\textnormal{d}\Psi(\sigma^2,\sigma^2)}{\textnormal{d}\sigma^2} < 1$ for all values of $\sigma^2$. 
The following lemma establishes an intuitive relationship between MRT and ERT; the proof is given in \fref{app:MRTERT}.
\begin{lem}The MRT never exceeds the ERT.\label{lem:MRTERT}
\end{lem}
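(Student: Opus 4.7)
My plan is to prove the inequality $\betamin \leq \betamax$ directly from the definitions by comparing the two ratios that appear in \eqref{eq:beta_recover} and \eqref{eq:beta_badstate} via the mean value theorem. Define the one-variable function $g(\sigma^2) \triangleq \Psi(\sigma^2,\sigma^2)$ restricted to the diagonal; then the ERT involves the secant slope $g(\sigma^2)/\sigma^2$ (with respect to the origin), while the MRT involves the tangent slope $g'(\sigma^2)$. The claim that MRT $\leq$ ERT is equivalent, after inverting, to
\begin{align*}
\sup_{\sigma^2 \geq 0} \frac{g(\sigma^2)}{\sigma^2} \;\leq\; \sup_{\sigma^2 \geq 0} g'(\sigma^2),
\end{align*}
which is a classical ``secant below max-tangent'' statement once we know $g(0)=0$.

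The key ingredient, already supplied by \fref{lem:DAMPsolvability}, is that $g(0) = \lim_{\sigma^2\downto 0}\Psi(\sigma^2,\sigma^2) = 0$. Assuming the mild smoothness of $g$ needed to differentiate $\Psi$ along the diagonal (which follows from the smoothness of $\mathsf{F}$ and the Gaussian averaging in \fref{eq:Psi}), the standard mean value theorem then yields, for every $\sigma^2>0$, some $\xi \in (0,\sigma^2)$ with
\begin{align*}
\frac{\Psi(\sigma^2,\sigma^2)}{\sigma^2}
\;=\; \frac{g(\sigma^2)-g(0)}{\sigma^2-0}
\;=\; g'(\xi)
\;\leq\; \sup_{u \geq 0} g'(u).
\end{align*}
Taking the supremum over $\sigma^2\geq 0$ on the left-hand side preserves the inequality, and then inverting both sides (the quantities are positive by \fref{lem:DAMPsolvability}, which guarantees $\Psi(\sigma^2,\sigma^2) > 0$ for $\sigma^2>0$) gives exactly \fref{eq:beta_badstate} $\leq$ \fref{eq:beta_recover}, i.e., $\betamin \leq \betamax$.

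The main obstacle I foresee is not conceptual but technical: one must verify that the suprema in \fref{eq:beta_recover} and \fref{eq:beta_badstate} (written as minima of reciprocals) are attained, that $g$ is differentiable on $(0,\infty)$, and that the limit $g(0)=0$ can legitimately be used as the lower endpoint in the mean value theorem. These follow from dominated convergence applied to the expectation in \fref{eq:Psi} together with the fact that $\mathsf{F}(\,\cdot\,,\tau)$ is smooth in its first argument for $\tau>0$; one may also need to handle the boundary $\sigma^2\to 0$ by a continuity argument. Modulo these standard regularity checks, the proof is a one-line consequence of the mean value theorem applied to $g$ on $[0,\sigma^2]$.
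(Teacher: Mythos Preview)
Your proof is correct and arrives at the same conclusion, but via a genuinely different (and arguably cleaner) route than the paper.

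The paper's argument is pointwise at the ERT optimizer: it lets $\sigma^2_\star$ be the point where the ERT in \fref{eq:beta_recover} is attained, so $\betamax\Psi(\sigma^2_\star,\sigma^2_\star)=\sigma^2_\star$. Since $\sigma^2_\star$ maximizes the secant ratio $\Psi(\sigma^2,\sigma^2)/\sigma^2$, the first-order optimality condition forces the tangent slope there to equal the secant slope, i.e., $\left.\tfrac{\dd}{\dd\sigma^2}\Psi(\sigma^2,\sigma^2)\right|_{\sigma^2_\star}=1/\betamax$. Then $\betamin$, being the infimum over $\sigma^2$ of the reciprocal derivative, is at most the reciprocal derivative evaluated at $\sigma^2_\star$, which is exactly $\betamax$.

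Your approach instead uses the mean value theorem globally: since $g(0)=0$ (from \fref{lem:DAMPsolvability}), every secant slope $g(\sigma^2)/\sigma^2$ equals $g'(\xi)$ for some intermediate $\xi$, whence $\sup_{\sigma^2}g(\sigma^2)/\sigma^2\le\sup_{\sigma^2}g'(\sigma^2)$ and the result follows by inversion. This avoids having to assume the ERT is attained at an interior point where the first-order condition applies, and it makes the ``secant $\leq$ max tangent'' intuition completely explicit. The paper's version is slightly shorter once one accepts the tangency at $\sigma^2_\star$; yours is more self-contained and robust to boundary issues.
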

\fref{lem:MRTERT} shows that if the system ratio $\beta$ is less than MRT, i.e., $\beta<\betamin$, then LAMA is not only optimal but also perfect recovery is possible in noiseless settings. 
We next define the minimum critical and maximum guaranteed noise variance, $\Nomin$ and $\Nomax$, that determine boundaries for the optimality regimes when $\beta\geq\betamin$. 

\begin{defi}\label{def:Nomin}Fix the system ratio $\beta\in[\betamin,\betamax]$. Then, the \emph{minimum critical} noise variance $\Nomin$ that ensures convergence to the optimal fixed-point is defined by
\begin{align*}
\Nomin &= \min_{\sigma^2\geq0}\!\left\{\sigma^2 - \beta\Psi\!\left(\sigma^2,\sigma^2\right)\!:\beta \frac{\textnormal{d}\Psi(\sigma^2,\sigma^2)}{\textnormal{d}\sigma^2} = 1
\right\}\!.
\end{align*}
\end{defi}
\begin{defi}\label{def:Nomax}Fix the system ratio $\beta\geq\betamin$. Then, the \emph{maximum guaranteed} noise variance $\Nomax$ that ensures convergence to the optimal fixed-point is defined by
\begin{align*}
\Nomax &= \max_{\sigma^2\geq0}\!\left\{\sigma^2 - \beta\Psi\!\left(\sigma^2,\sigma^2\right)\!:\beta \frac{\textnormal{d}\Psi(\sigma^2,\sigma^2)}{\textnormal{d}\sigma^2} = 1
\right\}\!.
\end{align*}
\end{defi}
Note that as $\beta\rightarrow\beta^\text{max}_\setO$, the minimum critical noise decreases to $\Nomin\rightarrow0$. 
To see this, consider the case when $\beta=\betamax$, so that there exists a $\sigma^2_\star>0$ such that $\beta^\text{max}_\setO\Psi(\sigma^2_\star,\sigma^2_\star)=\sigma^2_\star$. 
It is clear that $\betamax\!\left.\frac{\textnormal{d}\Psi(\sigma^2,\sigma^2)}{\textnormal{d}\sigma^2}\right\vert_{\sigma^2=\sigma_\star^2}\!=1$ and hence $\Nominnobeta(\betamax)=\sigma^2_\star-\betamax\Psi(\sigma_\star^2,\sigma_\star^2)=0$.  

Before we proceed with the analysis for optimality regimes of \LAMA, we present \fref{lem:fp_IOLAMA_0} (with proof in \fref{app:fp_IOLAMA_0}) that shows how the fixed-point $\sigma^2$ decreases  with $\No$ as $\No\to0$.

\begin{lem}
Fix the constellation $\setO$ and let $\beta<\betamax$. Denote $\sigma^2$ as the largest fixed-point solution of  \LAMA with noise variance $\No$. Then, as $\No\to0$, we have $\sigma^2\to0$. In addition, we have $\lim_{\No\to0}\frac{\sigma^2}{\No}=1$.\label{lem:fp_IOLAMA_0}
\end{lem}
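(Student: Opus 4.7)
The argument hinges on the scalar fixed-point equation $g(\sigma^2)=\No$, where I introduce $g(x)\triangleq x-\beta\Psi(x,x)$ on $[0,\infty)$. By \fref{lem:DAMPsolvability} together with $\beta<\betamax$ (which via \fref{def:ERT} is equivalent to $\beta\Psi(x,x)<x$ for all $x>0$), the function $g$ is continuous with $g(0)=0$ and $g(x)>0$ for every $x>0$, and $g(x)\to\infty$ as $x\to\infty$ because $\Psi(x,x)$ is bounded by $\Varop_S[S]$. The largest fixed point $\sigma^2(\No)$ produced by cSE is the largest solution to $g(\sigma^2)=\No$.

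For the first claim I would argue by contradiction. Suppose along some sequence $\No_n\downarrow 0$ the largest roots $\sigma^2_n=\sigma^2(\No_n)$ do not tend to zero. Because $g(x)\to\infty$ and $g$ is continuous, the sublevel set $\{x\geq 0:g(x)\leq\No_1\}$ is compact, so $\{\sigma^2_n\}$ admits a subsequential limit $c>0$. Passing to the limit in $g(\sigma^2_{n_k})=\No_{n_k}$ via continuity of $g$ yields $g(c)=0$, which forces $c=0$ by the strict positivity of $g$ on $(0,\infty)$, a contradiction. Hence $\sigma^2(\No)\to 0$ as $\No\to 0$.

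For the second claim I would rearrange the fixed-point equation as
\begin{equation*}
\frac{\sigma^2}{\No}=\frac{1}{1-\beta\,\Psi(\sigma^2,\sigma^2)/\sigma^2},
\end{equation*}
reducing the assertion to showing that $\Psi(\sigma^2,\sigma^2)/\sigma^2\to 0$ as $\sigma^2\to 0$. This is where discreteness of $\setO$ is essential: using the closed-form posterior weights that precede \fref{eq:DAMP_F}, for small $\sigma^2$ the mass $w_a(S+\sigma Z,\sigma^2)$ on any wrong symbol is dominated by a Gaussian-tail factor of order $\exp(-d_{\min}^2/(c\sigma^2))$, where $d_{\min}=\min_{a\neq a'\in\setO}|a-a'|>0$. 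Inserting this into $\Psi(\sigma^2,\sigma^2)=\Exop_{S,Z}[|\mathsf{F}(S+\sigma Z,\sigma^2)-S|^2]$ and summing over the finitely many pairs of symbols in $\setO$ gives $\Psi(\sigma^2,\sigma^2)=O(\exp(-d_{\min}^2/(c\sigma^2)))$, which is super-polynomial in $\sigma^2$ and in particular $o(\sigma^2)$. Combined with the first part, this delivers $\sigma^2/\No\to 1$.

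\textbf{Main obstacle.} The principal technical work is converting the informal statement ``MMSE errors decay exponentially for finite alphabets'' into a rigorous bound on $\Psi(\sigma^2,\sigma^2)/\sigma^2$. I would split the expectation over $Z$ into (i) a bulk region $\sigma|Z|<d_{\min}/2$, where the softmax-type mean $\mathsf{F}$ concentrates on the true symbol at a rate exponential in $1/\sigma^2$, and (ii) a tail region $\sigma|Z|\geq d_{\min}/2$, controlled by the Gaussian tail estimate $\Prob(|Z|\geq d_{\min}/(2\sigma))\leq 2\exp(-d_{\min}^2/(8\sigma^2))$. The finiteness of $|\setO|$, the positivity of $d_{\min}$, and the boundedness of $|a|^2$ on $\setO$ are what make both contributions super-polynomially small in $\sigma^2$ and hence comfortably $o(\sigma^2)$, closing the argument.
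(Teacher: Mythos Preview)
Your argument is correct, and it follows a genuinely different route from the paper's own proof. For the first claim, the paper proceeds constructively: it invokes \cite[Thm.~10]{WV2010} to get that $\Psi(\sigma^2,\sigma^2)$ has a continuous derivative with $\lim_{\sigma^2\to 0}\frac{\dd}{\dd\sigma^2}\Psi(\sigma^2,\sigma^2)=0$, then uses the mean value theorem to obtain the sandwich $\No\leq\sigma^2\leq 2\No$ for all sufficiently small $\No$ (relying on uniqueness of the fixed point in that regime). Your compactness-plus-contradiction argument is more topological and, arguably, cleaner here: it never needs uniqueness of the fixed point, and the only structural input is that $g(x)=x-\beta\Psi(x,x)$ is strictly positive on $(0,\infty)$, which is exactly the content of $\beta<\betamax$ via \fref{def:ERT}.

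For the second claim, both proofs reduce to $\Psi(\sigma^2,\sigma^2)/\sigma^2\to 0$. The paper dispatches this in one line by identifying the limit with $\frac{\dd}{\dd\sigma^2}\Psi(\sigma^2,\sigma^2)\big|_{\sigma^2=0}=0$, again citing \cite{WV2010}. You instead carry out the bulk/tail split directly and obtain the stronger statement $\Psi(\sigma^2,\sigma^2)=O(\exp(-c/\sigma^2))$ for finite $\setO$. The trade-off is clear: the paper's route is shorter because it outsources the analytic work to a known MMSE result, while your route is self-contained and yields a sharper decay rate as a byproduct.
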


\fref{lem:fp_IOLAMA_0} shows that not only the fixed-point solution $\sigma^2$ of LAMA goes to 0 as $\No\to0$, but also decreases linearly as $\lim_{\No\to0}\frac{\sigma^2}{\No}=1$. We now proceed to the optimality regime analysis.
We recall that all the zero-crossing points of the function
\begin{align}\label{eq:plotfixedfunction}
g(\sigma^2,\beta,\No,\setO)=\No+\beta\Psi(\sigma^2,\sigma^2)-\sigma^2
\end{align}
correspond to all the fixed points of the cSE of \LAMA. We will frequently refer to the function in \fref{eq:plotfixedfunction} for our optimality analysis of \LAMA.

Figure \ref{fig:SE_QPSK} illustrates our optimality analysis for a large MIMO system with QPSK. 
We plot the function \fref{eq:plotfixedfunction} depending on the effective noise variance $\sigma^2$ and for different system ratios $\beta$. The cases $\beta<\betamin$, $\beta\in[\betamin,\betamax]$, and $\beta>\betamax$ are shown in \fref{fig:SE_QPSK1}, \fref{fig:SE_QPSK2}, and \fref{fig:SE_QPSK3}, respectively.
The special case of $\beta=1$ in the noiseless setting $\No=0$ for \fref{eq:plotfixedfunction} corresponds to the solid blue line, along with the corresponding (unique) fixed point at the origin. In the following three paragraphs, we discuss the three operation regimes of \LAMA. 

\begin{figure*}[tp]
\centering
\subfigure[]{\includegraphics[height=0.25\textheight]{./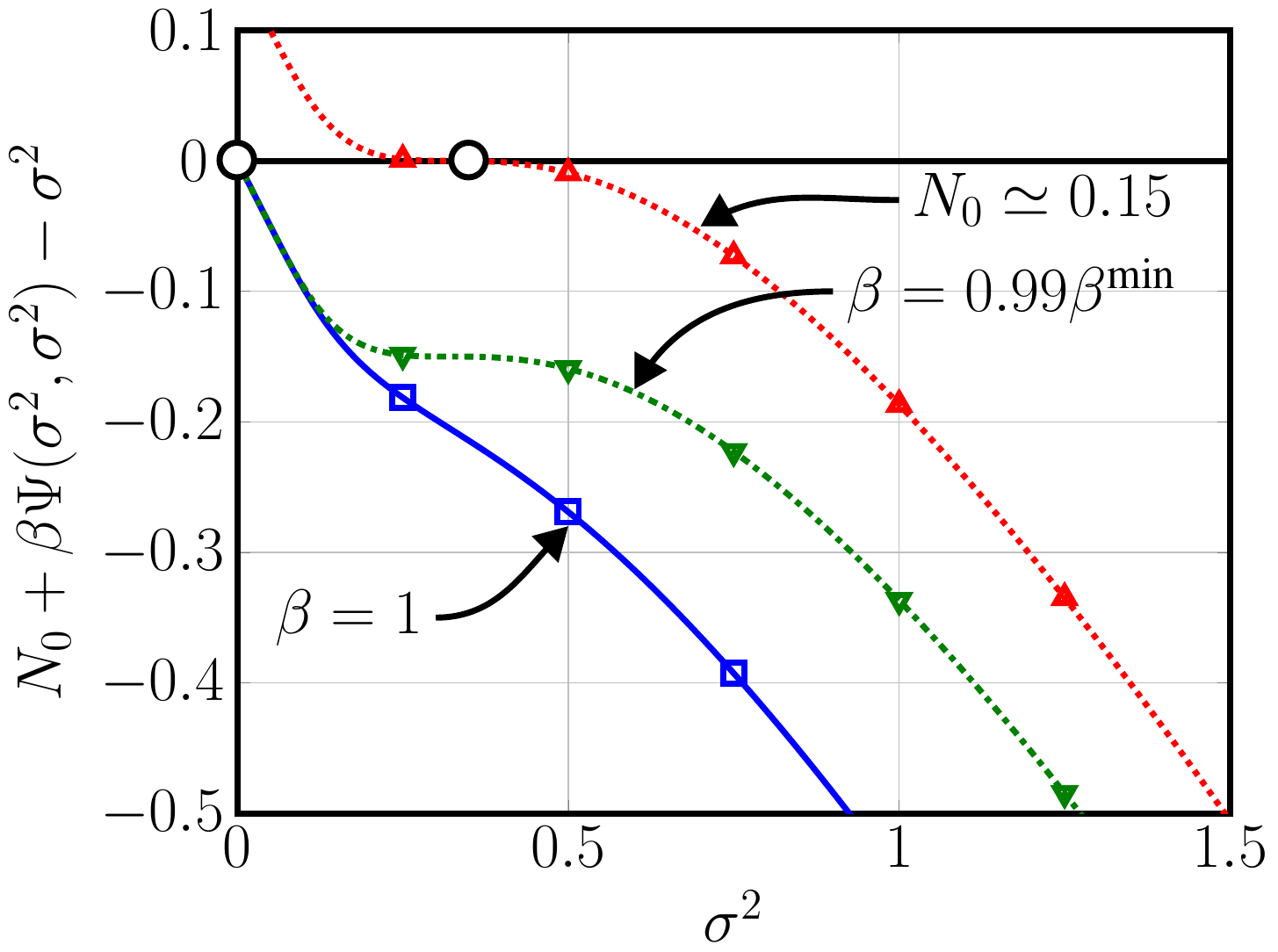}\label{fig:SE_QPSK1}}
\hspace{0.2cm}
\subfigure[]{\includegraphics[height=0.25\textheight]{./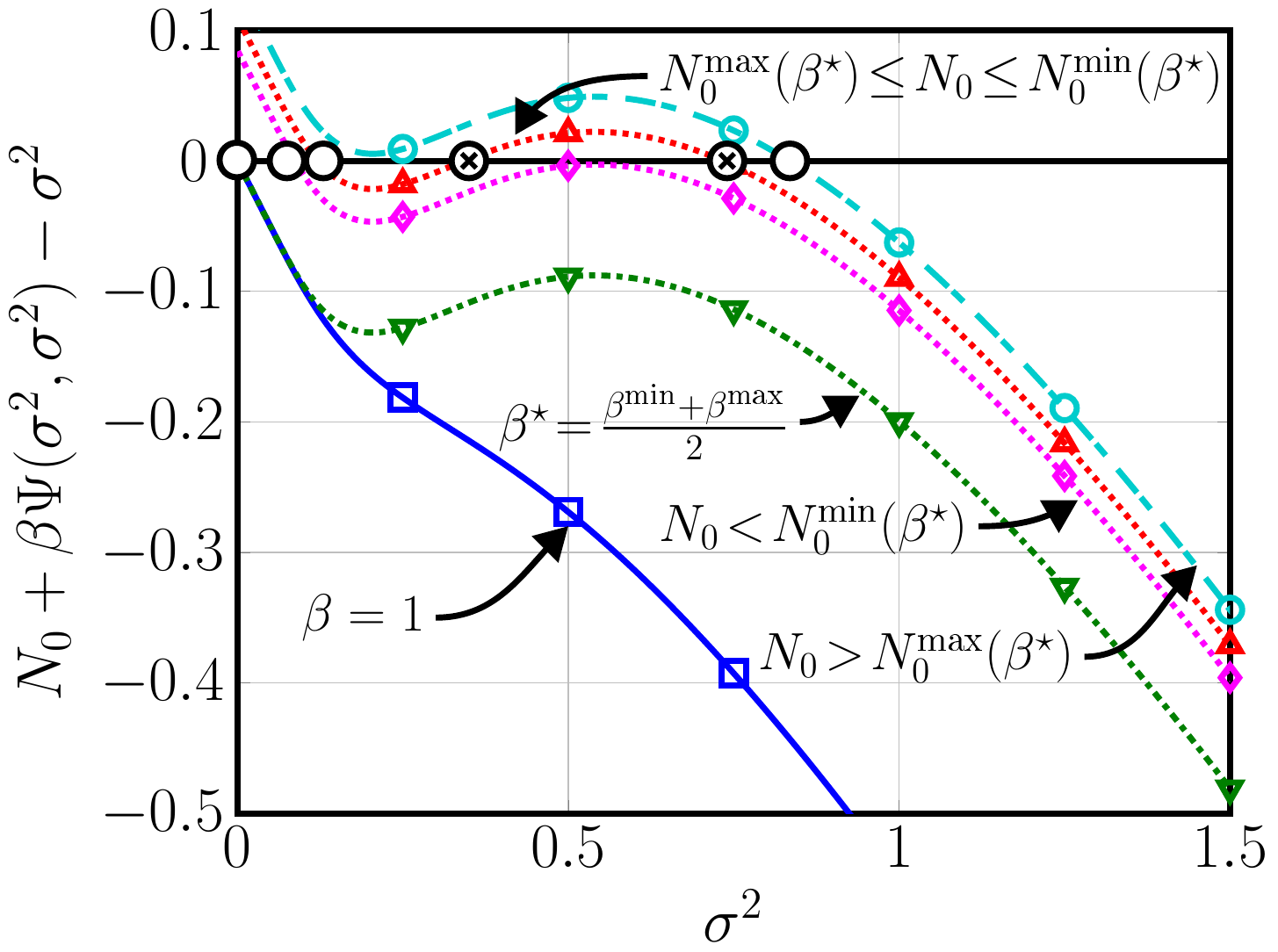}\label{fig:SE_QPSK2}}\\
\subfigure[]{\includegraphics[height=0.25\textheight]{./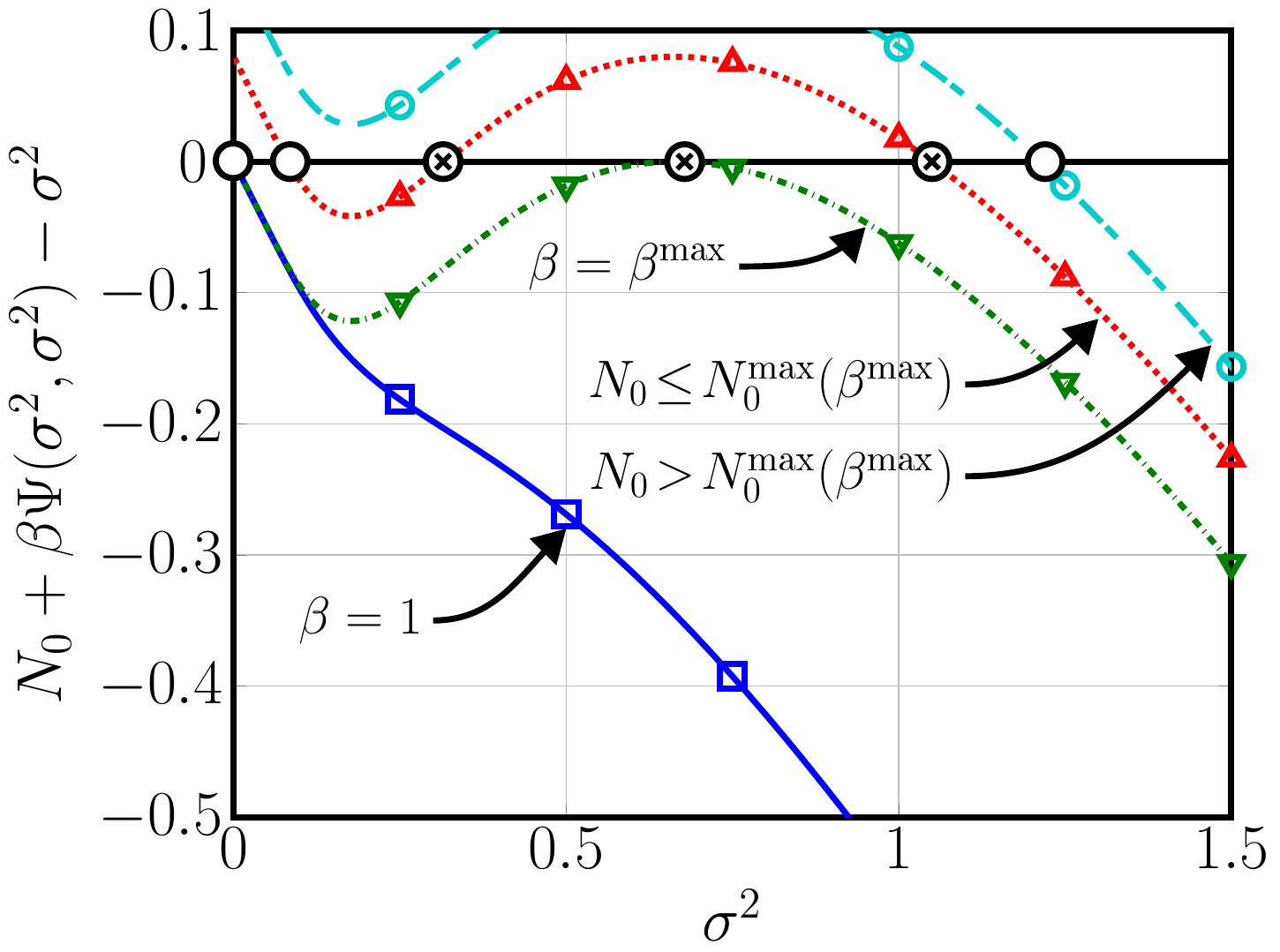}\label{fig:SE_QPSK3}}
\hspace{0.2cm}
\subfigure[]{\includegraphics[height=0.25\textheight]{./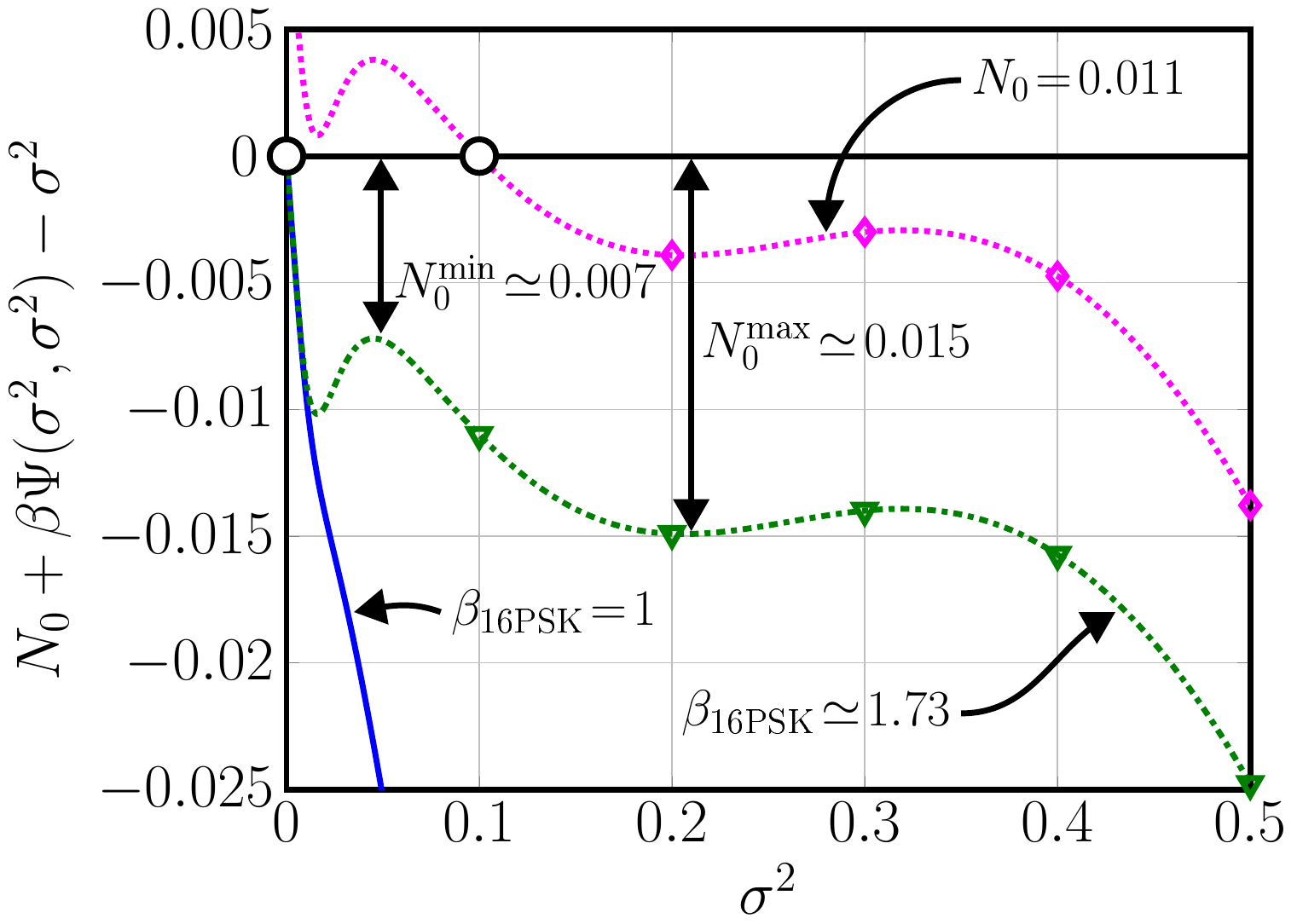}\label{fig:SE_16PSK}}

\caption{Function \fref{eq:plotfixedfunction} for three cases: (a) $\beta<\betamin$, (b) $\beta\in[\betamin,\betamax]$, and (c) $\beta>\betamax$ for QPSK normalized to $\Es=1$;  
(d) is for 16-PSK and $\beta=1.73\in[\betamin,\betamax]$. 
Optimal fixed points are indicated by {\large $\circ$}; suboptimal fixed points by $\otimes$.
(a) For $\beta<\betamin$, \LAMA always converges to the unique, optimal fixed point, irrespective of the noise variance $\No$. 
%
%
(b) For $\beta\in[\betamin,\betamax]$, we have two regimes for which \LAMA converges to an optimal fixed point: (i) $\No<\Nomin$ and (ii) $\No>\Nomax$. The situation $\beta^\star=\frac{\betamin+\betamax}{2}$ with (i) $\No<\Nominnobeta(\beta^\star)$ is shown with a purple dotted curve and (ii) $\No>\Nomaxnobeta(\beta^\star)$ is shown shown with a cyan dashed curve; we see that \LAMA exhibits a single (and hence, optimal) fixed point. However, if $\No\in[\Nomin,\Nomax]$, which is shown with a red dotted curve, the cSE of \LAMA exhibits multiple fixed points and hence, \LAMA is no longer IO. 
(c) For $\beta>\betamax$, the cSE of \LAMA converges to a suboptimal fixed point in the noiseless case $\No=0$, which is shown in green. However, when $\No>\Nomax$, the cSE of \LAMA converges to the optimal fixed point, which can be seen in the cyan dashed curve. If $\No\leq\Nomin$, then the cSE of \LAMA, shown in red dotted curve, has multiple fixed points and thus, is no longer IO.
(d) For 16-PSK and $\beta=1.73\in[\betamin,\betamax]$, $\Nomin$ and $\Nomax$ is computed to be $0.007$ and $0.015$ respectively. For 16-PSK, there exists regions where $\No\in[\Nomin,\Nomax]$ and \LAMA still achieves IO performance. 
}
\label{fig:SE_QPSK}
\end{figure*}

\subsubsection*{(i) $\beta<\betamin$} In this region, the cSE of \LAMA always converges to the unique, optimal fixed point.
For $\beta<\betamin$, the slope of \fref{eq:plotfixedfunction} is strictly-negative. Hence, as \fref{eq:plotfixedfunction} is always decreasing, there exists exactly one unique fixed point for the cSE of \LAMA regardless of the noise variance~\No. Thus,  \LAMA  achieves  IO  performance.
The green dash-dotted and red dotted line in \fref{fig:SE_QPSK1} show \fref{eq:plotfixedfunction} for \mbox{$\beta<\betamin$} with \mbox{$\No=0$} and \mbox{$\No\simeq 0.15$}, respectively. In both cases, we see that the cSE of \LAMA converges to the unique fixed point. 

\subsubsection*{(ii) \mbox{$\betamin\leq\beta\leq\betamax$}}
In this region, the cSE of \LAMA converges to the unique, optimal fixed point if \mbox{$\No<\Nomin$} or if \mbox{$\No>\Nomax$} and consequently, \LAMA achieves IO performance in both of these regimes. 
%
%
The green dash-dotted line, cyan dashed line, and magenta dotted line in \fref{fig:SE_QPSK2} show \fref{eq:plotfixedfunction} for $\beta^\star=(\betamin+\betamax)/{2}$ with \mbox{$\No=0$}, \mbox{$\No>\Nomaxnobeta(\beta^\star)$} and \mbox{$\No<\Nominnobeta(\beta^\star)$}, respectively. We note that for the three cases the fixed point is unique, labeled in \fref{fig:SE_QPSK2} by a circle. 
The red, dotted line in \fref{fig:SE_QPSK2} shows \fref{eq:plotfixedfunction} with~$\beta^\star$ for noise $\No\in[\Nominnobeta(\beta^\star),\Nomaxnobeta(\beta^\star)]$.  
In this case, however, we observe that the cSE of \LAMA converges to the rightmost suboptimal fixed point labeled by the crossed circle $\otimes$. Hence, \LAMA is able to achieve IO performance if $\Nomin\leq\No\leq\Nomax$.

\subsubsection*{(iii) \mbox{$\beta>\betamax$}}
In this region, the cSE of \LAMA converges to the unique, optimal fixed point when \mbox{$\No>\Nomax$} and consequently, achieves IO performance. 
Unlike the previous case for $\betamin\leq\beta\leq\betamax$, for which \LAMA has \emph{two} regions of optimality, $\No>\Nomax$ and $\No<\Nomin$, for $\beta>\betamax$, \LAMA has only one optimal region: $\No>\Nomax$. 
As $\beta\rightarrow\betamax$, the low noise \mbox{$\No<\Nomin$} (or high \SNR) region of optimality disappears because $\Nomin\rightarrow0$ as $\beta\rightarrow\betamax$ from \fref{eq:beta_recover}.
The green, dash-dotted line and red, dotted lines in \fref{fig:SE_QPSK3} show \fref{eq:plotfixedfunction} for $\beta=\betamax$ with $\No=0$ and $0<\No\leq\Nomax$, respectively. 
We observe that the cSE of \LAMA converges to the suboptimal fixed point when $\beta=\betamax$ even with $\No=0$. The cyan, dashed line refers to $\beta=\betamax$ with $\No>\Nomax$. 
While the noiseless case enables the cSE of \LAMA to converge to the suboptimal fixed point, we observe that for high noise (or equivalently low \SNR), the cSE of \LAMA is able to achieve IO performance.
Therefore, if \mbox{$\beta>\betamax$}, then \LAMA achieves IO performance whenever the noise variance  exceeds the maximum guaranteed~noise~variance~$\Nomax$.

We also note that for some constellations, the cSE of \LAMA may recover the optimal fixed point for $\beta\in[\betamin,\betamax]$ in some noise variance intervals $[\Nomin,\Nomax]$. 
An example case for $\beta=1.73$ with 16-PSK is shown in \fref{fig:SE_16PSK}, where cSE of \LAMA recovers the unique fixed-point with $\No=1.1\cdot 10^{-2}\in[\Nomin,\Nomax]$. 
These intervals exist for some constellations because in addition to $\sigma^2$ that result $\Nomin$ and $\Nomax$, there are multiple values of $\sigma^2$ that satisfy \mbox{$\frac{\textnormal{d}}{\textnormal{d}\sigma^2}g(\sigma^2,\beta,\No,\setO)=0$}, where $g(\sigma^2,\beta,\No,\setO)$ is defined in \fref{eq:plotfixedfunction}. 
As a result, there exist intervals between $\Nomin$ and $\Nomax$ that the cSE of \LAMA has one (optimal) fixed point. In such regions, LAMA enables IO performance.
%
%
We finally note that the MRT $\betamin$ and ERT $\betamax$ do not depend on the signal variance $\Varop_S[S]$. 
In contrast, the critical noise levels $\Nomin$ and $\Nomax$ depend on $\Varop_S[S]$.
%


\subsection{Decomposing Complex-Valued Systems}\label{sec:highersystemSE}

We now analyze whether the cSE of LAMA with complex-valued constellations can equivalently be characterized by a real-valued SE with a real-valued constellation.
We note that while the loading factor limits were given in \cite{T2000} and \cite{TanakaCDMA} respectively, these results were pertinent to BPSK with real-valued systems and no results were given for other constellations.

We note that the standard way of dealing with complex-valued systems is via the real-valued decomposition (see footnote~1).
This approach, however,  violates the independent assumption on the MIMO channel. 
Since \LAMA operates directly on the complex plane, no transformation into the real-valued domain is required.
Nevertheless, we now provide conditions for which the complex-valued problem can be exactly characterized by a corresponding real-valued problem. 
For our analysis, we require the following definition.

%

\begin{defi}
\label{def:separable}
For all $s\in\setO$, express $s$ as $s = a + ib$,  where $a\in\realpart{\setO}$, $b\in\imagpart{\setO}$. Then, the constellation $\setO$ is called \emph{separable} if $p(s) = p(a)p(b)$ holds for all $s\in\setO$ and $\realpart{\setO}=\imagpart{\setO}$.
\end{defi}

For example, $M^2$-QAM with equally likely symbols is separable.
In contrast, $M^2$-PSK is not separable (except for QPSK) as the real and imaginary parts dependent. 
We now present a result that allows us to transform the complex-valued cSE equations in \fref{eq:SErecursion} and \fref{eq:SErecursion2} into equivalent real-valued SE equations; the proof is given in \fref{app:separability}.

\begin{lem}\label{lem:separability}
Let the constellation $\setO$ be separable. Define $\Sr = \realpart{S}$ and denote the real-part of $\setO$ as $\setO^\textnormal{R}$. 
Define $\mathsf{F}^\textnormal{R}$ and $\mathsf{G}^\textnormal{R}$ as the message mean and variance function, respectively, with $\Sr \sim p(\realpart{S})$. Also define the MSE function $\Psi$ and the variance function $\Phi$ for the real-valued prior $\Sr$ as:
\begin{align*}
\Psi^\textnormal{R}(\sigma^2,\gamma^2) &= \Exop_{\Sr,Z_\textnormal{R}}\!\left[\!\left(\mathsf{F}^\textnormal{R}(\Sr + \sigma Z_\textnormal{R},\gamma^2) - \Sr\right)^{\!2}\right]\!,
\\
\Phi^\textnormal{R}(\sigma^2,\gamma^2) &= \Exop_{\Sr,Z_\textnormal{R}}\!\left[
\mathsf{G}^\textnormal{R}(\Sr + \sigma Z_\textnormal{R},\gamma^2)\right]\!,
\end{align*}
where $Z_\textnormal{R}\sim\setN(0,1)$. 
Then we have the following relation for $\Psi$ and $\Phi$ between the complex-valued constellation $\setO$ and the real-valued constellation $\setO^\textnormal{R}$:
\begin{align*}
\Psi(\sigma^2,\gamma^2) = 2 \Psi^\textnormal{R}\!\left(\frac{\sigma^2}{2},\frac{\gamma^2}{2}\right),
\quad
\Phi(\sigma^2,\gamma^2) = 2 \Phi^\textnormal{R}\!\left(\frac{\sigma^2}{2},\frac{\gamma^2}{2}\right)\!.
\end{align*}
Therefore, the cSE recursions in \fref{eq:SErecursion} and \fref{eq:SErecursion2} are given by:
\begin{align}
\label{eq:psi_seprable}
\sigma_t^2 &= \No + \beta\Psi(\sigma^2_t,\gamma^2_t) = \No+2\beta \Psi^\textnormal{R}\!\left(\frac{\sigma^2_t}{2},\frac{\gamma^2_t}{2}\right)\!,
\\
\label{eq:phi_seprable}
\gamma_t^2 &= \Nopost + \beta\Phi(\sigma^2_t,\gamma^2_t) = \Nopost+2\beta \Phi^\textnormal{R}\!\left(\frac{\sigma^2_t}{2},\frac{\gamma^2_t}{2}\right)\!.
\end{align}
\end{lem}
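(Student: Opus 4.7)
The plan is to exploit separability to factor the complex-valued posterior in \fref{eq:gaussianmeasure} as a product of two identically-distributed real-valued posteriors, and then reduce the bivariate Gaussian expectations in \fref{eq:Psi} and \fref{eq:Phi} to twice their univariate counterparts.

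First, I would decompose the posterior. Writing $s = s_R + i s_I$ and $\shate_\ell = \hat s_R + i \hat s_I$, the quadratic in the exponent of \fref{eq:gaussianmeasure} splits as $|s-\shate_\ell|^2 = (s_R-\hat s_R)^2 + (s_I-\hat s_I)^2$, and separability with $\realpart{\setO} = \imagpart{\setO}$ yields $p(s) = p_{\Sr}(s_R)\,p_{\Sr}(s_I)$, where $p_{\Sr}$ denotes the common marginal prior on $\Sr = \realpart{S}$. Consequently
\begin{equation*}
f(s\mid \shate_\ell,\tau) \;=\; f^{\textnormal{R}}(s_R\mid \hat s_R,\tau/2)\,f^{\textnormal{R}}(s_I\mid \hat s_I,\tau/2),
\end{equation*}
where $f^{\textnormal{R}}$ denotes the real-valued analogue of \fref{eq:gaussianmeasure} with prior $p_{\Sr}$. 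The key bookkeeping step is that the complex-Gaussian kernel $\exp(-(s_R-\hat s_R)^2/\tau)$ matches the standard real-Gaussian kernel $\exp(-(s_R-\hat s_R)^2/(2\tilde\tau))$ precisely when $\tilde\tau=\tau/2$. By linearity of the conditional expectation, this factorization lifts to
\begin{equation*}
\realpart{\mathsf{F}(\shate_\ell,\tau)} = \mathsf{F}^{\textnormal{R}}(\hat s_R,\tau/2), \qquad \imagpart{\mathsf{F}(\shate_\ell,\tau)} = \mathsf{F}^{\textnormal{R}}(\hat s_I,\tau/2),
\end{equation*}
and additivity of variance under independence yields $\mathsf{G}(\shate_\ell,\tau) = \mathsf{G}^{\textnormal{R}}(\hat s_R,\tau/2) + \mathsf{G}^{\textnormal{R}}(\hat s_I,\tau/2)$.

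Second, I would evaluate the Gaussian expectations. Since $Z \sim \setC\setN(0,1)$ has $\realpart{Z},\imagpart{Z}$ iid $\setN(0,1/2)$, I would write $\realpart{S+\sigma Z} = \Sr + (\sigma/\sqrt{2})\Zr$ with $\Zr \sim \setN(0,1)$, and analogously for the imaginary component with an independent standard normal $Z_\textnormal{I}$. Substituting the decomposition of $\mathsf{F}$ into $\Psi(\sigma^2,\gamma^2)=\Exop[|\mathsf{F}-S|^2]$ and expanding the modulus produces two terms; the change of variable absorbing the $1/\sqrt{2}$ shows that each term equals $\Psi^{\textnormal{R}}(\sigma^2/2,\gamma^2/2)$, and by $\realpart{\setO}=\imagpart{\setO}$ they coincide, giving $\Psi(\sigma^2,\gamma^2) = 2\Psi^{\textnormal{R}}(\sigma^2/2,\gamma^2/2)$. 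The identical argument applied to the decomposition of $\mathsf{G}$ yields $\Phi(\sigma^2,\gamma^2) = 2\Phi^{\textnormal{R}}(\sigma^2/2,\gamma^2/2)$. Substituting both identities into \fref{eq:SErecursion} and \fref{eq:SErecursion2} immediately produces \fref{eq:psi_seprable} and \fref{eq:phi_seprable}.

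The main obstacle is the variance-halving bookkeeping: the convention $Z\sim\setC\setN(0,1)$ places variance $1/2$ on each real coordinate, so the $\tau/2$ and $\sigma^2/2,\gamma^2/2$ factors must propagate consistently from the posterior through $\mathsf{F}^{\textnormal{R}}$ and $\mathsf{G}^{\textnormal{R}}$ into the arguments of $\Psi^{\textnormal{R}}$ and $\Phi^{\textnormal{R}}$. Once the conventions are fixed, the remainder is essentially a Fubini-type computation along orthogonal axes of $\complexset=\reals\times\reals$.
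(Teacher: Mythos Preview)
Your proposal is correct and follows essentially the same approach as the paper's proof: both factor the complex posterior using separability (the paper does this at the level of the weight scalars $w_a$, you at the level of the density $f$), identify the $\tau\to\tau/2$ rescaling that matches the complex kernel $\exp(-|\cdot|^2/\tau)$ to the real kernel $\exp(-(\cdot)^2/(2\tilde\tau))$, and then split $Z\sim\setC\setN(0,1)$ into independent $\setN(0,1/2)$ real and imaginary parts to reduce $\Psi$ and $\Phi$ to twice their real counterparts evaluated at halved arguments.
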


We  note that LAMA operates \emph{simultaneously} on complex-valued signals by reducing $\sigma^2_t$ each iteration in both real and imaginary parts independently; this can be seen by noting that since $\setO$ is separable, $\Psi^\textnormal{R}$ is identical for both the real and imaginary parts of $\setO$. 
In addition, \fref{lem:separability} shows that if $\setO$ is separable, then the cSE can be transformed into a real-valued SE, hence validating the relation between the complex-valued constellation and the equivalent real-valued representation.  
This transformation implies that for certain constellations, the message mean $\mathsf{F}$ and variance function $\mathsf{G}$ can be computed (often more efficiently) in parallel for real and imaginary dimensions.


We note that in \cite{GV2005} Guo and Verd\'u used  a real-valued decomposition and the replica method for analyzing the performance of complex-valued signals for separable constellations and concluded that the error performance for complex signals is exactly same as that of real-valued system with transmit energy halved. 
\fref{lem:separability} supports this conclusion. 
Moreover, we emphasize that the cSE holds for \emph{general} constellations, such as higher-order PSK constellations, and LAMA can be used for data detection in such cases.

\subsection{ERT, MRT, and Critical Noise Levels}\label{sec:ERTMRT}

\begin{table}
\centering
\caption{ERTs $\beta_\setO^\text{max}$, MRTs $\beta_\setO^\text{min}$ and the critical noise levels $\Nominnobeta(\betamin)$ and $\Nomaxnobeta(\betamax)$
for \LAMA with common PSK, PAM, and QAM constellations}
\begin{tabular}{llcccc}
\toprule
\multicolumn{2}{c}{Constellation} & \multirow{2}{*}{$\betamin$} & \multirow{2}{*}{$\Nominnobeta(\betamin)$}
& \multirow{2}{*}{$\betamax$}
& \multirow{2}{*}{$\Nomaxnobeta(\betamax)$}
\\
$\complexset$ system & $\reals$ system \\
\midrule
BPSK & -- & 2.951 & $3.00\cdot 10^{-1}$ & 4.171& $2.43\cdot 10^{-1}$ \\
QPSK & BPSK & 1.475 & $1.50\cdot 10^{-1}$ & 2.086& $1.22\cdot 10^{-1}$\\
16-QAM & 4-PAM & 0.983 & $3.00\cdot 10^{-2}$ & 1.363 & $2.45\cdot 10^{-2}$\\
64-QAM & 8-PAM & 0.842 & $7.14\cdot 10^{-3}$& 1.157 & $5.87\cdot 10^{-3}$\\
256-QAM& 16-PAM & 0.786 & $1.77\cdot 10^{-3}$ & 1.075 & $1.45\cdot 10^{-3}$ \\
8-PSK & -- & 1.458 & $4.44\cdot 10^{-2}$ & 1.804 & $3.83\cdot 10^{-2}$\\
16-PSK & -- & 1.473 & $1.14\cdot 10^{-2}$ & 1.801& $9.95\cdot 10^{-3}$\\
64-PSK & -- & 1.474 & $7.23\cdot 10^{-4}$ & 1.801& $8.39\cdot 10^{-3}$ \\
256-PSK & -- & 1.474 & $4.52\cdot 10^{-5}$& 1.801 & $8.39\cdot 10^{-3}$ \\
\bottomrule
\end{tabular}
\label{tbl:exact_recovery}
\end{table}

The ERT, MRT, as well as the critical noise levels $\Nomin$ and $\Nomax$ for common constellations and for real-valued as well as complex-valued systems are summarized in \fref{tbl:exact_recovery}.
We assume equally likely priors with the constellation sets normalized to $\Varop_S[S]=E_s=1$.
%
We note that the calculations of ERT and MRT for the simplest case with BPSK involve computations of logistic-normal integrals for which no closed-form expressions are known \cite{P2013} but approximations exist \cite{P2013,C2013,D2005}.
The results in \fref{tbl:exact_recovery} were obtained via numerical integration to compute the MSE function $\Psi(\sigma^2,\sigma^2)$.\footnote{We used MATLAB's {\texttt{integral}} and {\texttt{integral2}} commands with {$\texttt{AbsTol}=\texttt{RelTol}=10^{-12}$.}}
Next Lemma
shows that for real- and separable complex-valued constellations, the ERT and MRT are identical for real- and complex-valued systems, respectively; a short proof is  given in \fref{app:separable_const_same}.
For an example, BPSK for real-valued systems and QPSK for complex-valued systems have identical ERT and MRT of 1.475 and 2.086, respectively.

\begin{lem}
\label{lem:separable_const_same}Fix a separable constellation $\setO$ and denote $\betaminno_\complexset$, $\betamaxno_\complexset$ and $\betaminno_\reals$, $\betamaxno_\reals$ as MRT and ERT of the complex and real-valued constellation, respectively. Also, denote the critical noise levels $N_{0,\complexset}^\text{max}(\beta)$, $N_{0,\complexset}^\text{min}(\beta)$, $N_{0,\reals}^\text{max}(\beta)$, and $N_{0,\reals}^\text{max}(\beta)$ for the complex and real-valued constellation, respectively.
Then, $\betaminno_\complexset=\betaminno_\reals$, $\betamaxno_\complexset=\betamaxno_\reals$, $N_{0,\complexset}^\text{max}(\beta)=2N_{0,\reals}^\text{max}(\beta)$, and $N_{0,\complexset}^\text{min}(\beta)=2N_{0,\reals}^\text{min}(\beta)$.
\end{lem}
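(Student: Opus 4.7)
The plan is to deduce all four equalities as direct consequences of \fref{lem:separability}, which gives the scaling identity $\Psi(\sigma^2,\gamma^2)=2\Psi^\textnormal{R}(\sigma^2/2,\gamma^2/2)$ (and the analogous identity for $\Phi$) whenever $\setO$ is separable. Since \fref{def:ERT} and \fref{def:MRT} only involve the on-diagonal restriction $\Psi(\sigma^2,\sigma^2)$, and \fref{def:Nomin}--\fref{def:Nomax} involve that same expression together with its derivative, the proof reduces to a one-line change of variables. The key abbreviation I would use is $h(x)\triangleq\Psi(x,x)$ and $h^\textnormal{R}(x)\triangleq\Psi^\textnormal{R}(x,x)$, so that \fref{lem:separability} gives $h(x)=2h^\textnormal{R}(x/2)$ and, by the chain rule, $h'(x)=h^{\textnormal{R}\prime}(x/2)$.

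With these identities in hand, I would first handle the ERT equality. Substituting $h(x)=2h^\textnormal{R}(x/2)$ into \fref{eq:beta_recover} and then letting $u=x/2$ gives
\begin{align*}
\betamaxno_\complexset
=\min_{x\geq 0}\!\left\{\frac{x}{2h^\textnormal{R}(x/2)}\right\}
=\min_{u\geq 0}\!\left\{\frac{u}{h^\textnormal{R}(u)}\right\}
=\betamaxno_\reals.
\end{align*}
The MRT equality is then the same argument applied to \fref{eq:beta_badstate}, using $h'(x)=h^{\textnormal{R}\prime}(x/2)$ in place of $h(x)/x$; the change of variable $u=x/2$ is a bijection of $[0,\infty)$ onto itself, so the infima coincide.

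For the critical-noise equalities I would next substitute into \fref{def:Nomin} and \fref{def:Nomax}. The constraint $\beta\,\tfrac{\textnormal{d}}{\textnormal{d}\sigma^2}h(\sigma^2)=1$ becomes $\beta\,h^{\textnormal{R}\prime}(u)=1$ with $u=\sigma^2/2$, i.e.\ \emph{exactly} the constraint appearing in the real-valued definitions. The objective function transforms as
\begin{align*}
\sigma^2-\beta\Psi(\sigma^2,\sigma^2)
=2u-2\beta h^\textnormal{R}(u)
=2\!\left(u-\beta\Psi^\textnormal{R}(u,u)\right)\!,
\end{align*}
so the max (resp.\ min) over $\sigma^2\geq 0$ equals twice the max (resp.\ min) over $u\geq 0$, yielding $N_{0,\complexset}^\textnormal{max}(\beta)=2N_{0,\reals}^\textnormal{max}(\beta)$ and $N_{0,\complexset}^\textnormal{min}(\beta)=2N_{0,\reals}^\textnormal{min}(\beta)$.

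I do not expect a serious obstacle here; the lemma is essentially a corollary of \fref{lem:separability} combined with the rescaling $u=\sigma^2/2$. The only things worth being careful about are (i) making sure the change of variable covers the full optimization domain (it does, since $\sigma^2\mapsto\sigma^2/2$ is a bijection on $[0,\infty)$), and (ii) confirming that the side constraint in the definitions of $\Nomin$ and $\Nomax$ transforms cleanly, which it does because the derivative identity eliminates the factor of $2$ that appears in the function value. No new computation of $\Psi^\textnormal{R}$ or $\Psi$ is required beyond what is already provided by \fref{lem:separability}.
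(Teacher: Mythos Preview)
Your proposal is correct and takes essentially the same approach as the paper: both rely on the scaling identity $\Psi(\sigma^2,\sigma^2)=2\Psi^{\textnormal{R}}(\sigma^2/2,\sigma^2/2)$ from \fref{lem:separability} together with the change of variable $u=\sigma^2/2$. The paper only writes out the MRT case and declares the rest ``similar,'' whereas you carry out all four; your derivative identity $h'(x)=(h^{\textnormal{R}})'(x/2)$ is exactly the content of the paper's one displayed chain of equalities.
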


\fref{lem:separable_const_same} implies that optimality results for $M^2$-QAM in a complex system with equally likely transmit symbols (shown in \fref{tbl:exact_recovery}) are the same for a real-valued $M$-PAM system.
Moreover, between BPSK  and QPSK in a complex system, we observe that all the thresholds differ by a factor of $2$, which is  expected.
As shown in the second row of \fref{tbl:exact_recovery} for QPSK with complex noise, or a real-valued BSPK system (with real noise) the ERT is $\betamaxno_\text{QPSK}\approx2.0855$, which corresponds exactly to the maximum loading factor for the IO data detector established in \cite{TanakaCDMA,GV2005}. Moreover, the MRT for QPSK is given as $\betaminno_\text{QPSK}\approx1.4752$\cite{TanakaCDMA}.\footnote{Note that $\betaminno_\text{QPSK}$ Tanaka provided in \cite{TanakaCDMA} is $1.49$, whereas we obtain a slightly more accurate value $1.4752$.}
The critical noise values in \fref{tbl:exact_recovery} refer to complex constellations as the critical noise values can be easily computed for the real constellation by \fref{lem:separable_const_same}.

The MRTs for 16-QAM and 64-QAM indicate that small system ratios $\beta<1$ are necessary to guarantee that \LAMA achieves IO performance. 
For instance, we require $\beta\leq\beta^\text{min}_\text{64-QAM}\approx0.8424$, i.e. $\MT\leq0.8424\MR$,  to ensure that \LAMA solves \fref{eq:IO} for 64-QAM. As $\beta\rightarrow\beta^\text{max}_\text{64-QAM}\approx1.1573$, \LAMA is only optimal in settings in which the noise level is rather high, i.e., where $\No>\Nomaxnobeta(\beta^\text{max}_\text{64-QAM})\approx5.868\cdot 10^{-3}$, or, equivalently, when $\SNR<22.9495$ dB. 
From \fref{tbl:exact_recovery}, we see that higher-order QAM or PSK constellations can be decoded optimally by LAMA in massive MIMO as one typically assumes $\MR\gg\MT$. 
%
%
We also observe that as $M$ increases for $M$-PSK, $\betamin$ and $\betamax$ approaches to $1.4741$ and $1.8005$, respectively. 
%



\section{Numerical Results and Practical Considerations}\label{sec:numerical}

We now provide numerical results for \LAMA, discuss practical implementation aspects, and highlight the pros and cons. 
In what follows, we use the average received SNR defined in \fref{eq:SNR}.

\subsection{Achievable Rates and Error-Rate Performance}

As detailed in \fref{sec:LAMAdecouple}, the output of \LAMA enables one to represent each transmit stream by a single-input single-output AWGN channel with a equal noise variance $\sigma^2_t$ that can be computed via the cSE \fref{thm:CSE}.
Therefore, the performance of \LAMA in the large-system limit can be characterized by analyzing a single AWGN channel.

Figures~\ref{fig:IO_cap} and \ref{fig:IO_SER} show the achievable rate and symbol error rate for \LAMA after $100$ iterations for various system ratios $\beta$. 
While an infinite number of iterations would guarantee \LAMA to converge to a fixed point solution, our results show that much fewer than $100$ iterations are required for \LAMA to converge; we will further discuss this aspect in \fref{sec:performance_tradeoff}.


\begin{figure*}[tp]
\centering
\subfigure[]{\includegraphics[height=0.25\textheight]{./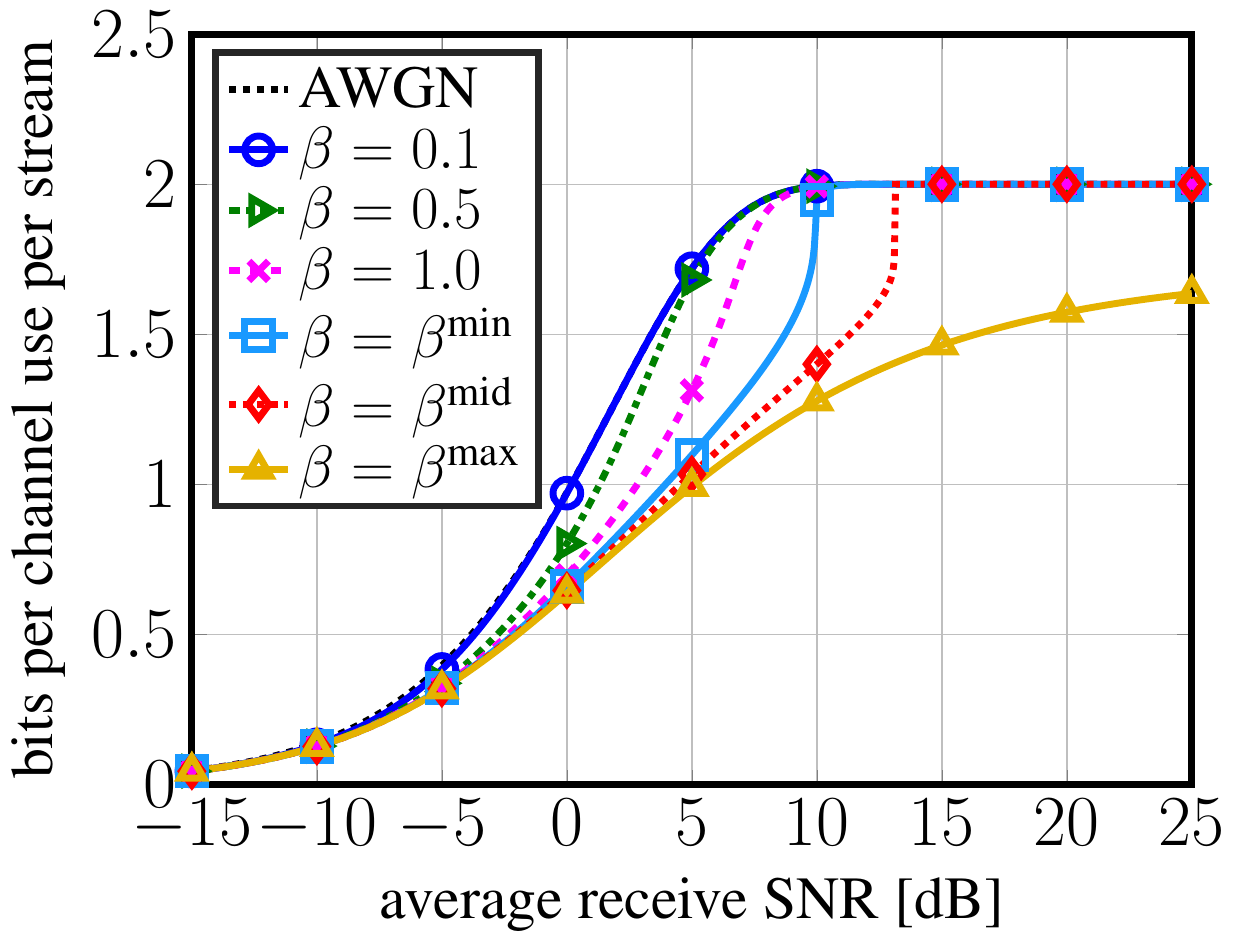}\label{fig:IO_cap}}
\hspace{0.4cm}
\subfigure[]{\includegraphics[height=0.25\textheight]{./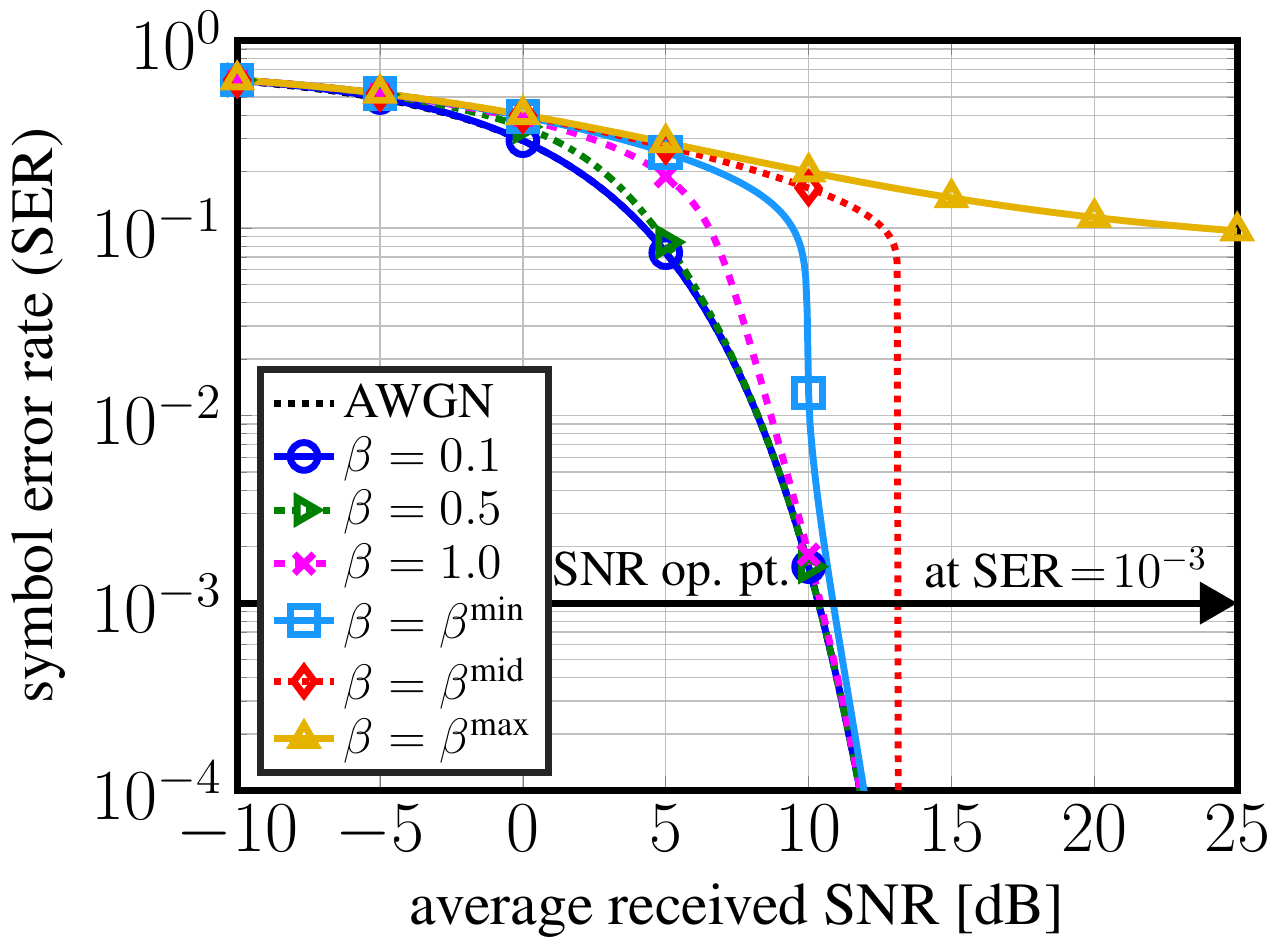}\label{fig:IO_SER}}
\subfigure[]{\includegraphics[height=0.25\textheight]{./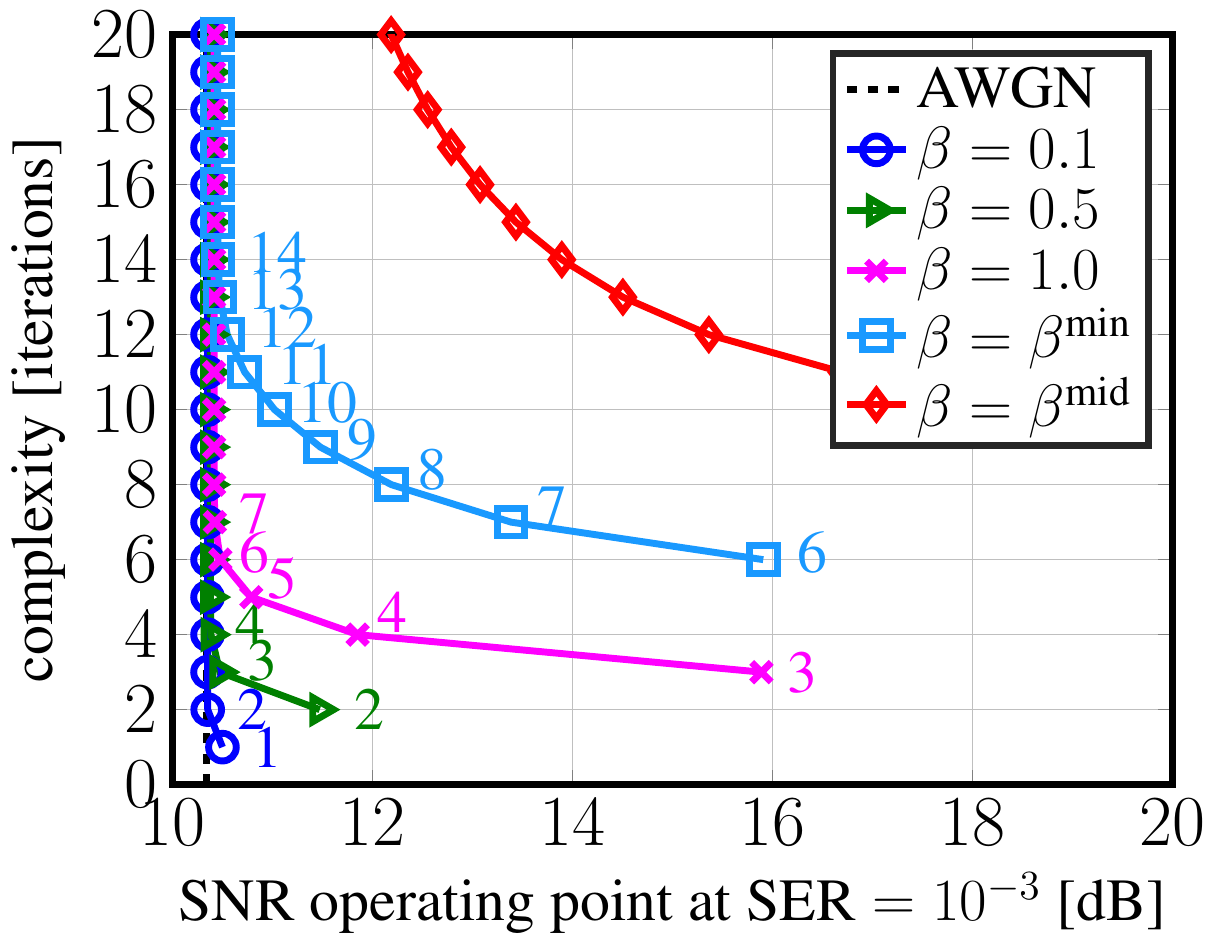}\label{fig:IO_iterative}}
\caption{
(a) Achievable rate (in bits per channel use per stream), (b) symbol-error rate (SER), and (c) performance/complexity trade-offs of \LAMA 
for $\beta\in\{0.1,0.5,\beta^\text{min}_\text{QPSK},\beta^\text{mid}_\text{QPSK},\beta^\text{max}_\text{QPSK}\}$,
where $\beta^\text{mid}_\text{QPSK}=(\beta^\text{min}_\text{QPSK}+\beta^\text{max}_\text{QPSK})/2$, and QPSK constellations. The sharp transitions in achievable rate and SER shown in (a) and (b) occur at $\beta>\beta^\text{min}_\text{QPSK}$ when the $\SNR=\beta\frac{E_s}{\No}$ with \No equaling the critical noise variance $\Nominnobeta(\beta)$.
(c) The dashed lines refer to the SNR operating point for an AWGN channel at $\textit{SER}=10^{-3}$ for each $\beta$.
The atomic complexity of \LAMA required to approach AWGN \SNR increases with the system ratio $\MT/\MR=\beta$.}
\label{fig:IOJO_plot}
\end{figure*}

\fref{fig:IO_cap} shows the achievable rate
of the decoupled AWGN channel per transmit stream for \LAMA, for various system ratios $\beta$. 
For small values of $\beta$, e.g. $\beta=0.1$, the achievable rate of \LAMA approaches to that of an AWGN channel, which agrees with \fref{lem:LAMAbeta0}.
We observe that the performance gap between \LAMA and that of an AWGN channel increases with $\beta$.
In particular, when $\beta=\beta^\text{min}_\text{QPSK}$, we see a sudden transition in the achievable rate of \LAMA
to the achievable rate of an AWGN channel, which occurs approximately at $10$\,dB.
%
This transition occurs exactly at the SNR regime for which the noise variance $\No$ becomes smaller than $\Nomin$, which was shown to ensure convergence of \LAMA to the unique optimal fixed point (cf. \fref{sec:LAMA_optimal}).
For $\beta\rightarrow\beta^\text{max}_\text{QPSK}$, we see that the achievable rate does not converge to that of an interference-free AWGN channel, irrespective of the SNR regime; this agrees with the 
perfect recoverability result in the large-system limit shown in \fref{thm:recovery} for ERT $\beta^\text{max}_\text{QPSK}$.

\fref{fig:IO_SER} shows the symbol error rate (SER) of \LAMA.
Similar to the achievable rate in \fref{fig:IO_cap}, the SER for $\beta=0.1$ for \LAMA in a MIMO system approaches that of an interference-free AWGN channel. 
For $\beta>\beta^\text{min}$, we observe a waterfall behavior where the SER quickly drops and approaches that of an interference-free AWGN channel; this happens at exactly the point where the noise variance is smaller than the minimum critical noise $\Nominnobeta(\beta^\text{min}_\text{QPSK})$. 
We note that this waterfall behavior is consistent with the SNR regime that caused an upwards jump in the achievable rate curve shown in \fref{fig:IO_cap}.
When $\beta=\beta^\text{max}_\text{QPSK}$, we observe an SER floor at about $0.08$; this is due to the fact that as $\SNR\rightarrow\infty$, the cSE of \LAMA always converges to a suboptimal fixed point shown in \fref{fig:SE_QPSK3}.

\subsection{Performance/Complexity Trade-off}\label{sec:performance_tradeoff}

While only an infinite number of \LAMA iterations guarantee the cSE of \LAMA in \fref{thm:CSE} to converge to a fixed-point of \fref{eq:fixed_pt} and \fref{eq:fixed_pt_JO}, one can terminate the algorithm early with the goal of reducing its complexity. 
A straightforward approach is to terminate \fref{alg:LAMA}, if the parameter $\tau^t$ does not improve from one iteration to the next, e.g., if $\tau^t\leq \tau^{t+1}$ is met. 
Another approach is to terminate \LAMA after a predefined number of~$I$ iterations. The latter approach not only enables a deterministic throughput (which is critical in hardware implementations), but also enables us to study a fundamental performance/complexity trade-off of \LAMA.

Since the cSE analysis is only valid in the large system limit, common complexity measures, such as the number of additions and/or multiplications are not meaningful.
Nevertheless, for a given system, we see from \fref{alg:LAMA} that the computational workload of \LAMA per iteration remains constant. Hence, counting the maximum number of algorithm iterations provides a sensible way of measuring the complexity of \LAMA\footnote{In practice, one can multiply the atomic complexity with the number of arithmetic operations require per iteration; this enables one to obtain an accurate complexity measure that depends on the system configuration.}.

\begin{defi}
The \emph{atomic complexity} of \LAMA is defined by the maximum number of algorithm iterations $I$.
\end{defi}

We now study the performance of \LAMA depending on the atomic complexity~$I$. Put simply, we investigate by how much one can approach the performance of \LAMA with infinitely many iterations. 
We do so by first computing the output variance $\sigma^2_I$ of the equivalent AWGN channel for a fixed complexity $I$, and then computing the associated SER.

We first discuss the convergence speed of \LAMA to its fixed-point solution. 
%
The following result, with proof in \fref{app:convergence_speed}, reveals that if $\beta<\betamin$, then \LAMA not only has a unique fixed point solution, but also converges exponentially fast; this ensures that \LAMA achieves near-IO performance with a small number of iterations.
\begin{lem}
\label{lem:convergence_speed}
Assume the initialization of \LAMA as in \fref{alg:LAMA}. If $\beta<\betamin$, then regardless of the noise variance $\No$, \LAMA converges exponentially fast to its unique fixed-point solution $\sigma^2_\star$.
\end{lem}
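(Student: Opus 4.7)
The plan is to recognize the state evolution as a fixed-point iteration of a single scalar map and invoke a contraction-mapping argument. Define
\begin{align*}
F(\sigma^2) \triangleq \No + \beta\,\Psi(\sigma^2,\sigma^2),
\end{align*}
so that \fref{cor:cSE_IO} becomes $\sigma_{t+1}^2 = F(\sigma_t^2)$, and the fixed points of LAMA are precisely the roots of $F(\sigma^2)=\sigma^2$. The initialization in \fref{alg:LAMA} gives a finite starting value $\sigma_1^2 = \No + \beta\Varop_S[S]$, so the iterates live on a bounded interval of $[0,\infty)$.

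First I would differentiate $F$ and use \fref{def:MRT}. By definition of the MRT,
\begin{align*}
\betamin = \left(\sup_{\sigma^2\geq 0} \frac{\textnormal{d}\Psi(\sigma^2,\sigma^2)}{\textnormal{d}\sigma^2}\right)^{\!-1}\!,
\end{align*}
so the hypothesis $\beta < \betamin$ gives the uniform bound
\begin{align*}
L \triangleq \sup_{\sigma^2\geq 0} F'(\sigma^2) = \beta\sup_{\sigma^2\geq 0} \frac{\textnormal{d}\Psi(\sigma^2,\sigma^2)}{\textnormal{d}\sigma^2} < 1.
\end{align*}
Non-negativity of $F'$ follows from the monotonicity of the MMSE in the noise variance (a Bayesian I-MMSE type argument, or directly from differentiating \fref{eq:Psi}). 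The mean value theorem then yields the Lipschitz/contraction bound $|F(\sigma^2) - F(\tilde\sigma^2)| \leq L\,|\sigma^2 - \tilde\sigma^2|$ for all $\sigma^2,\tilde\sigma^2 \geq 0$.

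Next I would apply the Banach fixed-point theorem on $[0,\infty)$: $F$ is a strict contraction with factor $L<1$, so it has a unique fixed point $\sigma^2_\star$ (this is consistent with uniqueness implied by \fref{def:MRT}), and the iterates satisfy
\begin{align*}
|\sigma_{t+1}^2 - \sigma^2_\star| = |F(\sigma_t^2)-F(\sigma^2_\star)| \leq L\,|\sigma_t^2 - \sigma^2_\star|,
\end{align*}
which unrolls to $|\sigma_{t+1}^2 - \sigma^2_\star| \leq L^t |\sigma_1^2 - \sigma^2_\star|$. Since $L<1$ and $\sigma_1^2$ is finite, this is exponential convergence independent of $\No$.

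The main obstacle is justifying that the supremum of the derivative is attained (or at least that it is a strict supremum bounded away from $1/\beta$). If the sup is only approached asymptotically as $\sigma^2\to 0$ or $\sigma^2\to\infty$, one needs a compactness argument: by \fref{lem:DAMPsolvability}, $\Psi(\sigma^2,\sigma^2)\to\Varop_S[S]$ as $\sigma^2\to\infty$ and $\Psi(\sigma^2,\sigma^2)\to 0$ as $\sigma^2\to 0$, so $d\Psi/d\sigma^2$ tends to zero at both extremes for bounded-alphabet priors. Consequently the supremum is attained on a compact interval and the $\min$ in \fref{def:MRT} is well-defined, giving the uniform Lipschitz constant $L<1$ needed for the contraction argument.
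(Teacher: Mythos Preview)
Your proposal is correct and follows essentially the same approach as the paper: both rely on the observation that $\beta<\betamin$ forces $\beta\,\tfrac{\dd}{\dd\sigma^2}\Psi(\sigma^2,\sigma^2)<1$ uniformly in $\sigma^2$, which yields uniqueness of the fixed point and exponential convergence of the scalar recursion. The paper phrases this via the stability-constant framework of \cite[Lem.~6.4.1]{Maleki2010phd} (noting that the concavity used there fails for LAMA's MSE function but is replaced by the global slope bound), whereas you give the equivalent argument directly as a Banach contraction; the substance is identical.
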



%
\fref{fig:IO_iterative} shows the required $\SNR$ to achieve SER of $10^{-3}$ for every iteration of \LAMA for various systems ratios $\beta$
in the large-system limit. 
The colored dashed lines correspond to the SNR required to achieve an SER of $10^{-3}$ in an interference-free AWGN channel, which we call ``AWGN SNR.''
For $\beta=0.1$, $\beta=0.5$, only three and five iterations are required for \LAMA
to closely approach the AWGN SNR.
We observe that as $\beta$ decreases, the number of iterations required to reach SNR of SER $10^{-3}$ also decreases.
This observation is in accordance with \fref{lem:LAMAbeta0}, where we demonstrated that in the extreme case where $\beta\rightarrow0$, one iteration (matched filter detection) is sufficient to converge to the AWGN SNR.
As $\beta$ increases, we start to see the performance differences between \LAMA and that of an interference-free AWGN channel. 
%
%
For $\beta=\betaminno_\text{QPSK}$, the SNR operating point of \LAMA closely approaches the AWGN SNR after $15$ iterations at a small performance loss (about $0.1$\,dB), which is visible from the SER plot in \fref{fig:IO_SER}.
%
%
The differences between the SNR operating point of \LAMA and AWGN SNR 
are more pronounced when $\beta=(\beta^\text{min}_\text{QPSK}+\beta^\text{max}_\text{QPSK})/2$, as the SNR operating point of \LAMA converges to $13.5$\,dB after about 90 iterations, which is $0.6$\,dB higher than the AWGN $\SNR$ of $12.9$\,dB.
%
%
%
For $\beta=\betamaxno_\text{QPSK}$, the complexity of \LAMA is not shown as it floors to an SER of approximately $0.08$ and hence, never achieves the target SER of $10^{-3}$.

\subsection{Performance in Finite-Dimensional Systems}\label{sec:estimator}

Since the design of \LAMA heavily relies on the large system limit, there are no optimality guarantees for finite-dimensional settings. 
For conventional, small-scale MIMO systems (with 8 antennas or less), the large-system assumption leads to a significant performance loss because (i) the statistics of $\bmz^t=\shat^t+\bH^\Herm\resid^t$ are not Gaussian and hence, (ii) the correct statistics of the Gaussian term $\bmz^t$ cannot be tracked in the \LAMA algorithm. 
The problem that arises in finite-dimensional systems becomes evident if we keep $\beta=1$ and increase $\SNR\rightarrow\infty$ for a small system. We see that \LAMA exhibits in an SER floor (see \fref{fig:sim2} for a $128\times128$ 16-QAM system). We note that this SER floor lowers as the system's dimension increases.
The performance loss of AMP-based algorithms for small-sized systems has been investigated in \cite{KT15,KT2016}.

In order to mitigate \LAMA's performance loss in finite dimensional systems, one can use estimators 
as opposed to the original message variance function in \fref{eq:G_compute} to estimate $\sigma^2_t$ each iteration.
For estimators in \LAMA to work universally when the antenna configurations are \emph{both} small and big, we need estimators of $\sigma^2_{t+1}$, which we will denote as $\hat\sigma^2_{t+1}$, that not only lower the error floor at high SNR in small antenna systems, but also converges to the true effective noise variance $\sigma^2_{t+1}$ in large antenna systems.

In \cite{andreaGMCS}, a series of estimators have been proposed for AMP in the context of sparse recovery. 
We adopt the same approach for LAMA for the case $\Nopost=\No$, where instead of computing the average of the exact message variance function as \fref{eq:input_var_eq}, 
%
we estimate the variance of the Gaussian estimate $\bmz^t=\shat^t + \bH^\Herm\resid^t$ by:
%
%
\begin{align}\label{eq:estimator_tau}
\hat\sigma^2_{t+1} = \frac{1}{\MR}\vecnorm{\resid^t}_2^2.
\end{align}

 
\fref{fig:sim2} shows the performance of \fref{eq:estimator_tau} for \LAMA in an $128\times128$ system with 16-QAM. 
We observe a decrease in the SER floor in high SNR regime compared to the original \LAMA without the estimator with no performance loss in the low SNR regime.
%
%

\subsection{Extension to General Channel Matrices $\bH$}\label{sec:gamp}

It is important to note that one of the limiting assumptions underlying AMP (and hence, for \LAMA) is that the entries of the channel matrix $\bH$ are i.i.d. zero-mean Gaussian or complex Gaussian with variance $1/\MR$ for AMP and complex-valued AMP respectively. 
In practical systems, however, the BS antennas may exhibit correlation and uneven power profiles, especially in multi-user scenarios, which makes LAMA less robust in these scenarios.
%
%
To address these limitations, Rangan \cite{GAMP2011} has developed Generalized AMP (GAMP), which extends AMP to arbitrary input and output noise distributions for real-valued systems, and can operate in channels with different power profiles. 
We note that in the large system limit with $\bH$ distributed according to (A2) with Gaussian noise, GAMP and AMP are equivalent. 
In addition, a modified GAMP that uses damping technique was proposed in \cite{RSF2014} to cope with non-zero mean, low-rank channels. The damping technique slows certain algorithmic parameter updates, but does so at the cost of increased iterations of the algorithm. 
Vila and Schniter furthermore included expectation-maximization into GAMP in \cite{VS2013,NSE14}, which further improves the performance of AMP-based methods in finite-dimensional systems. 
Recently, reference \cite{RSF2017} introduced vector AMP, which further generalizes GAMP to arbitrary matrices.
%
%

Generalized AMP has been used for practical MIMO-OFDM systems 
\cite{WKNLHG14} with variations introduced in \cite{NSE14,VS2013,S2011} to increase the detection performance for a finite-dimensional system. 
Reference \cite{WKNLHG14} primarily focused on simulations, whereas our paper concentrates on theoretical analysis in the large-system limit via the state-evolution framework.


\subsection{Simulation Results}

Figures \ref{fig:sim1} and \ref{fig:sim2} show simulation results for large MIMO systems with 16-QAM. We fix the number of BS antennas to $128$ and the number of user antennas to $64$ and $128$.
We compare the performance of \LAMA to unbiased linear MMSE detection, another message-passing-based receiver, i.e., channel hardening-exploiting message passing (CHEMP) \cite{indiachemp}\footnote{
We note that CHEMP has no theoretical performance guarantees and was primarily developed for massive MIMO, i.e., $\MR\gg\MT$ or small $\beta$.  
}, and IO data detection bound obtained by the cSE in the large-system limit. 
%


For the $128\times64$ system in \fref{fig:sim1}, \LAMA performs very close to the IO bound with only 8 iterations. 
We note that CHEMP \cite{indiachemp} with 8 iterations performs worse than the linear MMSE detection, but approaches the performance of LAMA at 15 iterations.
Note that in the large-system limit for a system-ratio of $\beta=64/128$, $\beta<\beta_\text{16QAM}^\text{min}$, so \LAMA achieves IO data detection performance for any noise variance $\No$. 

For the $128\times128$ system in \fref{fig:sim2}, \LAMA with the estimator in \fref{eq:estimator_tau} exhibits a floor at around  $10^{-2}$ SER.
%
We note that \LAMA with the estimator reduces the error floor while maintaining the performance at low SNR.
Because of the flooring behavior of \LAMA in finite dimensions, it performs worse than linear MMSE at high SNR (above $35$\,dB for this case). 
\LAMA with $20$ iterations outperforms CHEMP at the same number of iterations; CHEMP floors at an SER of $10^{-1}$ even after $100$ iterations.
In the $128\times128$ setting, we note that $\beta=1$ is larger than the ERT, $\beta_\text{16QAM}^\text{min}\approx0.9830$, from \fref{tbl:exact_recovery}, so \LAMA has two regions of optimality (cf.~\fref{tbl:IOLAMAoptimal_reg} for the regions) with $\No^\text{min}(\beta)\approx0.03$, or \SNR around 15\,dB. Note that this SNR happens where the sharp ``waterfall'' appears in the IO bound in \fref{fig:sim2}. 
We stress that for $\beta=1$ and $\MT\rightarrow\infty$, the SER of \LAMA will converge to that of the IO bound by cSE.
\begin{figure}[tp]
\centering
%
\subfigure[]{\includegraphics[width=0.45\textwidth]{./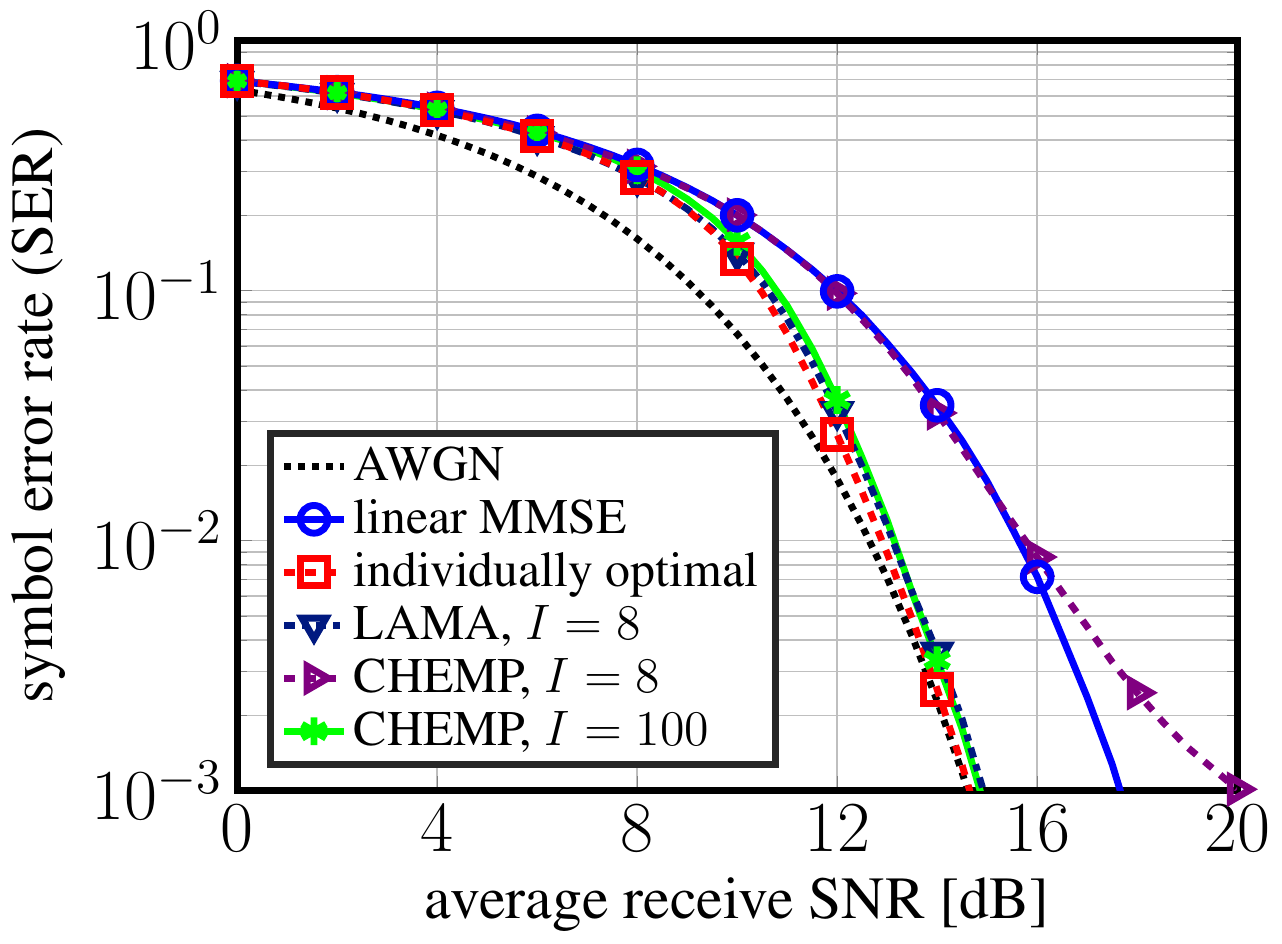}\label{fig:sim1}}
\hspace{0.4cm}
\subfigure[]{\includegraphics[width=0.45\textwidth]{./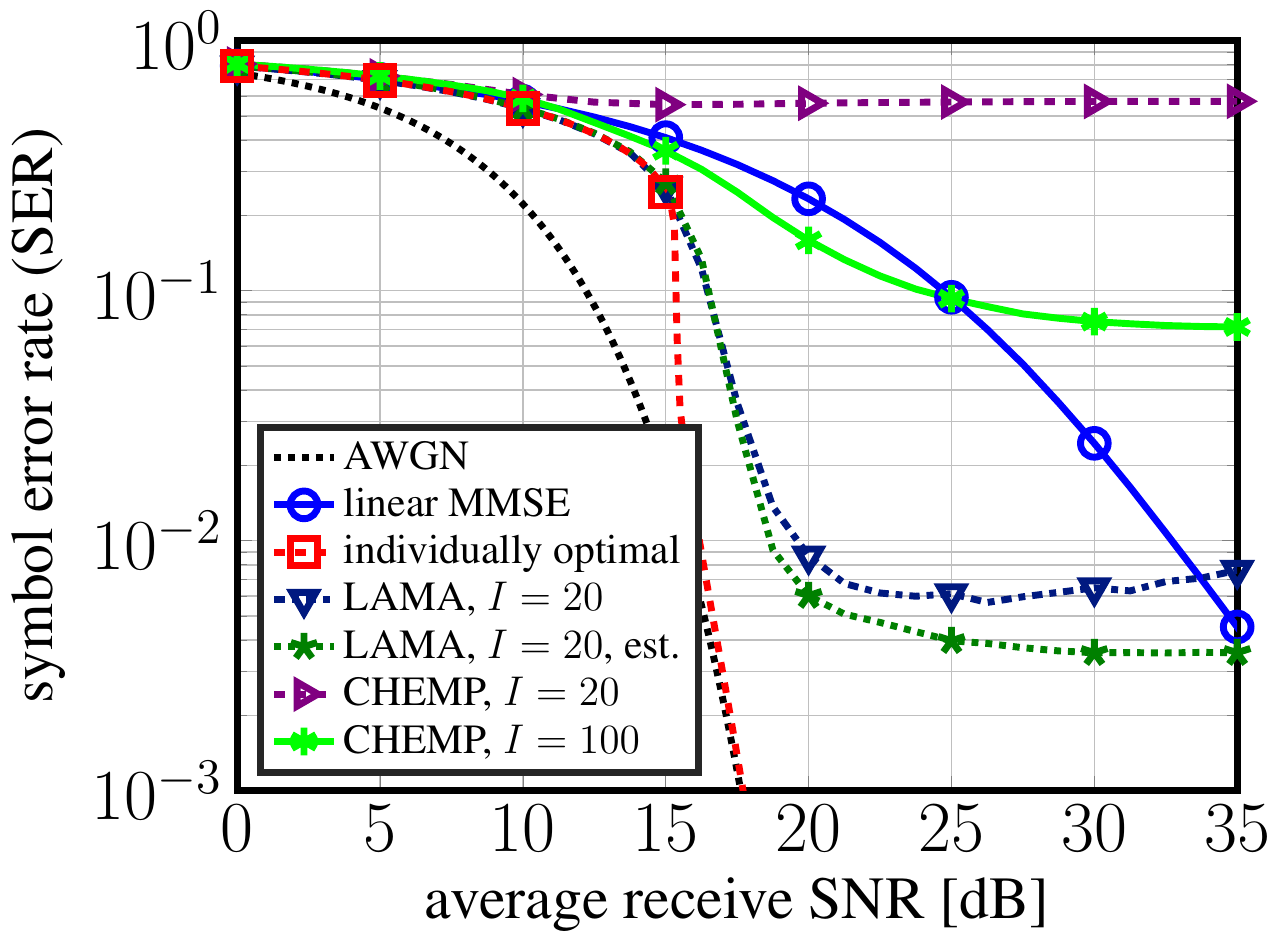}\label{fig:sim2}}
\caption{Symbol error rate (SER) performance of \LAMA for finite-dimensional systems with 16-QAM. (a) SER for a $128\times64$ system, compared to linear MMSE detection and CHEMP \cite{indiachemp}.
(b) SER for a $128\times128$ system, with the estimator in \fref{eq:estimator_tau} for \LAMA to mitigate the performance loss occurring from finite dimensions.
}
\label{fig:sims}
\end{figure}

\section{\LAMA and Prior Art}\label{sec:prior_art}

We now review existing results that are relevant for \LAMA and our analysis in \fref{sec:LAMA_optimal}.

\subsection{BPSK signaling in Randomly Spread CDMA systems}
We show that \LAMA for BPSK constellation in randomly spread CDMA systems coincides exactly to the detection algorithm put forth in \cite{K2003} and the cSE of \LAMA without noise variance mismatch is equivalent to that given by IO data detection bound derived from the replica method by Tanaka in \cite{TanakaCDMA}.
%
%
Consider a real-valued randomly-spread CDMA system with equally likely BPSK symbols $\setO=\{-1,+1\}$ and the entries of the channel matrix $\bH$ are distributed $\setN(0,1/\MR)$.
In this case, \fref{eq:DAMP_F} and \fref{eq:DAMPrealG} are given by
\begin{align*}
\mathsf{F}(\shate_\ell,\tau) = \tanh \!\left(\frac{1}{\tau}\shate_\ell\right),\quad
\mathsf{G}(\shate_\ell,\tau) = 1 - \tanh^2 \!\left(\frac{1}{\tau}\shate_\ell\right).
\end{align*}
and thus, \LAMA corresponds to the following recursion:
\begin{align*}
\shat^{t+1} &= \tanh \!\left(\frac{\shat^t + \bH^T\resid^t}{\No(1 + \tau^t)}\right)\\
\tau^{t+1} &= \frac{\beta}{\No}\!\left\langle 1 - \tanh^2 \!\left(\frac{\shat^t + \bH^T\resid^t}{\No(1+\tau^t)}\right)\right\rangle\\
\resid^{t+1} &= \bmy - \bH\shat^{t+1} + \frac{\tau^{t+1}}{1+\tau^t}\resid^t,
\end{align*}
with the SE recursion from \fref{thm:CSE} given by
\begin{align}\label{eq:BPSKse}
\sigma_{t+1}^2 &= \No + \beta\Exop_{\srv,Z}\!\left[\!\left(\tanh\!\left(\frac{\srv+\sigma_t Z}{\sigma_t^2}\right)-\srv\right)^2\right],
\end{align}
where for a fixed $\beta$ and $\No$, the fixed point equation is 
\begin{align}\label{eq:BPSK_fixed}
\sigma^2  
&= \No+\beta\int_\reals\!\left[1 - \tanh\!\left(\frac{1+\sigma z}{\sigma^2}\right)\right]\frac{1}{\sqrt{2\pi}}\exp\!\left(-\frac{z^2}{2}\right) \text{d}z.
\end{align}

We note that the fixed point equation in \fref{eq:BPSK_fixed} coincides exactly to Tanaka's fixed point equation in \cite{TanakaCDMA} and the optimal multiuser efficiency in \cite{GV2005} derived using the replica method.
Moreover, \LAMA coincides exactly to the method developed by Kabashima in 2003 for randomly-spread CDMA with BPSK signaling \cite{K2003}. 
Kabashima showed that the algorithm is consistent with the state evolution predictions obtained through numerical simulations.
Kabashima's algorithm in~\cite{K2003} was given for BPSK in real-valued systems only; in contrast, \LAMA is suitable for general constellations and complex-valued systems, and can be analyzed in the large-system limit.

In \fref{sec:ERTMRT}, we noted that the ERT of a BPSK system for LAMA is computed to be approximately 2.0855, which coincides exactly with Tanaka's recovery threshold in \cite{TanakaCDMA}, which was computed using the replica method. 
While we characterized the state of having multiple fixed point solutions by our definitions of MRT and ERT, Tanaka analogized the state of having multiple fixed points as having coexistence of phases in physical systems.
In this context, the MRT $\betaminno_\text{BPSK}$ and ERT $\betamaxno_\text{BPSK}$ corresponds to the boundary in which the instability of the phrases occur, and the boundary where the replica-symmetry solution becomes unstable, breaking the replica-symmetry assumptions \cite{TanakaCDMA}.
Although an analytical expression of the ERT has been given in  \cite{TanakaCDMA}, an exact characterization of the MRT was not included. 
Note that our \LAMA results generalize Tanaka's results to arbitrary constellations and provide a practical algorithm.

\subsection{Recovery of Antipodal Solutions via Convex Optimization}

Recall that from \fref{tbl:exact_recovery}, that a system ratio $\beta$ smaller than $2.0855$ is able to perfectly recover a BPSK vector in absence of noise. In this scenario, \LAMA is able to determine the unique solution to $\bmy=\bH\bms_0$ with $\bms_0\in\{-1,+1\}^\MT$ if $\bH$ is distributed (A2), and $\beta=\MT/\MR$ is fixed with $\MT\rightarrow\infty$.
A similar scenario was studied in \cite{DT2011,MR2011}, where the authors have provided necessary and sufficient conditions for the recovery of antipodal solutions from $\bmy=\bH\bms_0$.
In \cite{DT2011}, Donoho and Tanner showed that in the large system limit, $\beta<2$ guarantees the recoverability of the unique signal $\bms_0\in\{-1,+1\}^\MT$. 
The same threshold was recovered in \cite{MR2011}, by solving the following convex optimization problem \cite{SGYB14}:
\begin{align*}
(\text{P}_\infty) \quad \underset{\tilde\vecs\in\reals^\MT}{\text{minimize}}\, \|\tilde\vecs\|_\infty \quad \text{subject to} \,\,\vecy=\bH\tilde\vecs.
\end{align*}
In particular, the solution $\hat\vecs$ to $(\text{P}_\infty)$ corresponds to the antipodal vector $\{-\alpha,+\alpha\}$ for a given $\alpha>0$ if $\beta<2$ with high probability \cite{MR2011}. 
It is interesting to see that $(\text{P}_\infty)$ does not exploit magnitude information (i.e. $\alpha=1$), whereas LAMA requires this information. Quite surprisingly, the lack of this prior information only results in a slight improvement in terms of the system ratio $\beta$ that enables perfect recovery from $2$~to~$2.0855$. 
The error-rate performance of $\hat\vecs$ to $(\text{P}_\infty)$ for was recently investigated in \cite{TAXH2015,TAH2018,TXH2018}; \LAMA-based results were studied in \cite{JMS2016}.

\section{Conclusions}\label{sec:conclusion}

In this paper, we have developed the complex Bayesian approximate message passing (cB-AMP) framework with a possible mismatch in the postulated noise variance; 
%
%
cB-AMP with appropriate priors enables a derivation of the \LAMA data detector.
In the large-system limit, we have shown that \LAMA decouples large MIMO systems into parallel AWGN channels with identical noise variance across all transmit streams every iteration.
Furthermore, cSE has been used to analyze the exact noise variance of the decoupled AWGN channel.

We have derived the specific conditions for which \LAMA achieves IO performance.
Based on the system ratio $\beta$, there exist three optimality regimes for \LAMA, where $\beta\leq\betaminno_\setO$, $\beta\in(\betaminno_\setO,\betamaxno_\setO)$, and $\beta\geq\betamaxno_\setO$ where the MRT $\betaminno_\setO$ and ERT $\betamaxno_\setO$ can be computed numerically for \LAMA. 
We have shown both asymptotic and finite-dimensional performance of \LAMA through analytical predictions and numerical simulations, which confirm our theoretical results. In addition, we have characterized the convergence behavior of \LAMA for system ratios smaller than the ERT.
For small system ratios $\beta$, we have shown that \LAMA exhibits similar achievable rate and error-rate performance to that of an AWGN channel for a low number of iterations, which makes \LAMA an excellent candidate for data detection in massive MIMO systems.


%
%
%
%

%

There are numerous avenues for further work. 
A performance analysis of \LAMA with noise variance mismatch and a theoretical study in the finite dimensional setting as in \cite{RV2016} are open problems.
Moreover, theoretical analysis of LAMA and (AMP-based) methods for non-i.i.d. channel matrices could lead to a more robust algorithms that can operate more effectively on practical, real-world channel matrices.
%
Investigating the performance of LAMA in the presence of mismatched priors, as in \cite{JMS2016,MBB2015}, may lead to hardware-friendly data detection algorithms. 
A hardware implementation of LAMA would demonstrate the real-world efficacy of our algorithm.

\section*{Acknowledgments}
The work of C. Jeon, R. Ghods, and C. Studer was supported in part by Xilinx Inc., and by the US National Science Foundation (NSF) under grants ECCS-1408006, CCF-1535897,  CAREER CCF-1652065, CNS-1717559, and EECS-1824379. 
%

\appendices
\section{\cite[Lem. 5.56]{Maleki2010phd}}\label{app:lemma5_5}

For completeness, we include \cite[Lem. 5.56]{Maleki2010phd} and its proof for complex-valued MIMO systems.
We will use 
\fref{lem:Message} to derive the cB-AMP algorithm in \fref{app:damp-complex}.
\begin{lem}
\label{lem:Message}
Let $\shate_{\ell\rightarrow k}^t$ and $\Nopost\tau_{\ell\rightarrow k}^t$ be the mean and variance of the distribution $\nu_{\ell\rightarrow k}^t$ in \fref{eq:sumprod}, respectively. Suppose at iteration $t$, the messages are set to $\est\nu_{k\rightarrow \ell}^t\!\left(s_\ell\right) = \est\phi_{k\rightarrow \ell}^t\!\left(s_\ell\right)$, where $\est\phi_{k\rightarrow \ell}^t(s_\ell) $ is defined by
\begin{align}\label{eq:z1}
\est\phi_{k\rightarrow \ell}^t (s_\ell) &\triangleq 
\frac{\!\left\vert H_{k, \ell}\right\vert^2}{\pi \Nopost(1+\tau_{k \rightarrow \ell}^t)} \exp\left(
- \frac{\left\vert H_{k,\ell} s_\ell - \reside_{k \rightarrow \ell}^t \right\vert^2\!}{\Nopost(1+\tau_{k \rightarrow \ell}^t)}
\right)\!,
\end{align}
and $\tau_{k\rightarrow \ell}^t = \tau^t$, where the residual and variance terms are given by
\begin{align*}
\reside_{k \rightarrow \ell}^t \triangleq y_k - \sum_{b \neq \ell} H_{k,b}\shate_{b \rightarrow k},\quad
 \tau_{k\rightarrow \ell}^t \triangleq \sum_{b \neq \ell} \abs{H_{k,b}}^2 \tau_{b\rightarrow k}^t.
 \end{align*}
 Then, at the next iteration $t+1$, the mean and the variance of the message $\nu_{\ell\rightarrow k}^{t+1}$ are given by
 \begin{align*}
 \shate_{\ell\rightarrow k}^{t+1} &= \mathsf{F}\!\left(\sum_{a\neq k} H_{a,\ell}^* \reside_{a\rightarrow \ell}^t,\Nopost(1+\tau^t)\right),\\
 \tau_{\ell\rightarrow k}^{t+1} &= \frac{1}{\Nopost} \mathsf{G}\!\left(\sum_{a\neq k} H_{a,\ell}^* \reside_{a\rightarrow \ell}^t,\Nopost(1+\tau^t)\right),
 \end{align*}
 \end{lem}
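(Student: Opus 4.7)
The plan is to show that, under the stipulated form for the factor-to-variable messages, the variable-to-factor update produced by the sum-product rule collapses to the canonical ``pseudo-prior times Gaussian'' form of \eqref{eq:gaussianmeasure}, whose first two moments are exactly $\mathsf{F}$ and $\mathsf{G}$.

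First I would write the sum-product update at the $\ell$th variable node,
\begin{align*}
\nu_{\ell\to k}^{t+1}(s_\ell) \;\propto\; p(s_\ell)\prod_{a\neq k}\hat\nu_{a\to\ell}^t(s_\ell)
\;=\; p(s_\ell)\prod_{a\neq k}\hat\phi_{a\to\ell}^t(s_\ell),
\end{align*}
and substitute the postulated form \eqref{eq:z1}. The $|H_{a,\ell}|^2/(\pi\Nopost(1+\tau^t))$ prefactors are independent of $s_\ell$ and can be absorbed into the normalization constant. Collecting the exponents, the task reduces to simplifying
\begin{align*}
\sum_{a\neq k}\!\abs{H_{a,\ell} s_\ell - \reside^t_{a\to\ell}}^{2}
= |s_\ell|^2\!\sum_{a\neq k}\!|H_{a,\ell}|^2
- 2\realpart{s_\ell^*\!\!\sum_{a\neq k}\! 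H_{a,\ell}^*\reside^t_{a\to\ell}} + C,
\end{align*}
where $C$ does not depend on $s_\ell$. Here I would need to be careful with complex conjugates when expanding $|H_{a,\ell}s_\ell-\reside^t_{a\to\ell}|^{2}$.

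Next I would invoke the column normalization from (A1), which gives $\sum_{a}|H_{a,\ell}|^2=1$; the single omitted term $|H_{k,\ell}|^2$ is $O(1/\MR)$ and is absorbed in the usual AMP large-system approximation (this is precisely the point at which \cite[Lem.~5.56]{Maleki2010phd} is invoked). Completing the square with $\tilde z_\ell^t \triangleq \sum_{a\neq k} H_{a,\ell}^*\reside^t_{a\to\ell}$ then gives
\begin{align*}
\nu_{\ell\to k}^{t+1}(s_\ell) \;\propto\; p(s_\ell)\,\exp\!\left(-\frac{\abs{s_\ell-\tilde z_\ell^t}^{2}}{\Nopost(1+\tau^t)}\right)\!,
\end{align*}
which is literally the density $f(s_\ell\,|\,\tilde z_\ell^t,\Nopost(1+\tau^t))$ defined in \eqref{eq:gaussianmeasure}. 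Reading off the mean and variance via \eqref{eq:F_compute} and \eqref{eq:G_compute} then yields the claimed expressions $\shate^{t+1}_{\ell\to k}=\mathsf{F}(\tilde z_\ell^t,\Nopost(1+\tau^t))$ and $\Nopost\tau^{t+1}_{\ell\to k}=\mathsf{G}(\tilde z_\ell^t,\Nopost(1+\tau^t))$.

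The main obstacle is the passage from $\sum_{a\neq k}|H_{a,\ell}|^2$ to $1$: one must argue that the missing $|H_{k,\ell}|^2$ term produces only vanishing corrections in the large-system limit, and similarly that using the common $\tau^{t}_{k\to\ell}=\tau^t$ (rather than per-edge variances) is self-consistent under the Gaussian-moment approximation of \cite[Lem.~5.56]{Maleki2010phd}. Once these standard AMP approximations are accepted, the remainder of the derivation is a direct algebraic identification of the exponent.
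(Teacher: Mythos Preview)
Your proposal is correct and follows essentially the same route as the paper's proof: expand the product of Gaussian factor messages, use the column normalization $\sum_a |H_{a,\ell}|^2=1$ to reduce the quadratic coefficient to $1$ up to an $O(1/\MR)$ correction (the paper writes this explicitly as a multiplicative $\{1+O(|s_\ell|^2/\MR)\}$ term), and then identify the resulting density with $f(s_\ell\,|\,\tilde z_\ell^t,\Nopost(1+\tau^t))$ from \eqref{eq:gaussianmeasure} to read off the mean and variance as $\mathsf{F}$ and $\mathsf{G}/\Nopost$.
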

\begin{proof}
Suppose at iteration $t$, the messages from factor nodes to the variable nodes are set to be $\est\nu_{k\rightarrow \ell}^t = \est\phi_{k\rightarrow \ell}^t$. Then, 
\begin{align*}
\nu_{\ell\rightarrow k}^{t+1} &= \prod_{a\neq k}\est\phi_{a\rightarrow \ell}^t p\!\left(s_\ell \right) \\&= \exp \left(- \frac{
\sum_{a\neq k} \abs{H_{a,\ell}s_\ell-\reside_{a\rightarrow \ell}^t}^2}{\Nopost(1+\tau_{k\rightarrow \ell}^t)}
\right)\!
p\!\left(s_\ell \right)
\\
&=
\frac{1}{Z} 
\exp\!
\left(-
  \frac{
    \abs{s_\ell}^2 - 
    2\sum_{a\neq k}
    \realpart{s_\ell^*H_{a,\ell}^*\reside_{a\rightarrow \ell}^t}}{\Nopost(1+\tau_{k\rightarrow \ell}^t)}
\right)\!
\\
&\quad \cdot
\exp
\bigg(
  \frac{\abs{s_\ell}^2}{\MR\Nopost(1+\tau_{k\rightarrow \ell}^t)}
\bigg)
p\!\left(s_\ell\right)
\\
&= \phi_{\ell\rightarrow k}^{t+1} (s_\ell) \left\{ 1+ O(\abs{s_\ell}^2/\MR)\right\}\!,
\end{align*}
where $Z$ is a normalization constant that ensures $\nu_{\ell\rightarrow k}^{t+1}$ is a probability density function.
Here, we defined $\phi_{\ell\rightarrow k}^{t+1}$ as
\begin{align*}
\phi_{\ell\rightarrow k}^{t+1}(s_\ell)=f\Bigg(s_\ell \,\bigg|\, \sum_{a\neq k} H_{a,\ell}^* \reside_{a\rightarrow \ell}^t, \Nopost(1+\tau^t)\Bigg),
\end{align*}
where $f\!\left(s_\ell \vert \hat{s}_\ell,\tau\right)$ is defined as the probability distribution in \fref{eq:gaussianmeasure} and $\tau_{k\rightarrow \ell}^t = \tau^t$. 
By definition, the mean $\mathsf{F}$ and variance $\mathsf{G}$ of $\nu_{\ell\rightarrow k}^{t+1}$ is given as mean and variance of the conditional probability distribution defined in \fref{eq:gaussianmeasure}:
\begin{align*}
\shate_{\ell\rightarrow k}^{t+1} &= \mathsf{F}\left(\sum_{a\neq k} H_{a,\ell}^* \reside_{a\rightarrow \ell}^t,\Nopost(1+\tau^t)\right),\\
\tau_{\ell\rightarrow k}^{t+1} &= \frac{1}{\Nopost} \mathsf{G}\left(\sum_{a\neq k} H_{a,\ell}^* \reside_{a\rightarrow \ell}^t,\Nopost(1+\tau^t)\right).
\end{align*}
\end{proof}

\section{Derivation of \fref{alg:cB-AMP}}\label{app:damp-complex}

We start by considering a factor graph $G=(V,F,E)$ with variable nodes $V=\{1,\ldots,\MT\}$, factor nodes $F=\{1,\ldots,\MR\}$, and edges $E = V \times F = \left\{ (\ell,k): \ell\in V, k\in F \right\}$. The sum-product message equations for \eqref{eq:factoredMAP} at every iteration $t$ are given by \cite{MM2009book},
\begin{align}
\label{eq:sumprod}
\nu_{\ell\rightarrow k}^t\!\left(s_\ell\right) 
&= 
\prod_{a\neq k} \est\nu_{a\rightarrow \ell}^{t-1}
\!\left(s_\ell\right) p\!\left(s_\ell\right)\!,\\
\label{eq:sumprod2}
\est\nu_{k\rightarrow \ell}^t\!\left(s_\ell\right) &= \int_\complexset p\!\left(y_k\,|\,\bms,\bmh_k^\row\right)
\prod_{b\neq \ell} \nu_{b\rightarrow k}^t \!\left(s_b\right)\text{d}y_k,
\end{align}
where $\nu_{\ell\rightarrow k}^t(s_\ell)$ and $\est\nu_{k\rightarrow \ell}^t(s_\ell)$ are probability density functions.
%
%
%

Now, with \fref{lem:Message} we can simplify the sum-product algorithm shown in \fref{eq:sumprod} and \fref{eq:sumprod2}. 
We first expand the messages $\shate_{\ell\rightarrow k}^{t+1}$ and $\reside_{k\to\ell}^{t+1}$ into two parts (i) constant messages $\shate_{\ell}^{t+1}$ and $\reside_{k}^{t+1}$ which are independent of the edge $(\ell,k)$ and (ii) perturbed messages $\Delta \shate_{\ell\rightarrow k}^{t+1}, \Delta \reside_{k\rightarrow \ell}^{t+1}$ that depend on the edge.
As done in \cite[Eq. 5]{malekiCAMP}, we assume $\Delta \shate_{\ell\rightarrow k}^{t+1}, \Delta \reside_{k\rightarrow \ell}^{t+1} = O(1/\sqrt{\MT})$  such that 
\begin{align}\label{eq:z2}
\shate_{\ell\rightarrow k}^{t+1} &\triangleq \shate_\ell^{t+1} + \Delta \shate_{\ell\rightarrow k}^{t+1} + O(1/{\MT}),\\\label{eq:z3}
\reside_{k\rightarrow \ell}^{t+1} &\triangleq \reside_k^{t+1} + \Delta \reside_{k\rightarrow \ell}^{t+1} + O(1/{\MT}).
\end{align}
We then replace the complex-valued soft-thresholding function $\eta(\cdot)$ by the conditional mean $\mathsf{F}(\cdot)$ as in \cite[Prop. II.1]{malekiCAMP}, and use the decomposition in \fref{eq:z2} and \fref{eq:z3} to obtain
%
\begin{align}
\shate_\ell^{t+1} &= \mathsf{F}\!\left(\shate_\ell^t + \sum_{a=1}^{\MR}H_{a,\ell}^*\reside_a^t,\Nopost(1+\tau^t)\right)\label{eq:cBAMP-mean}\\
\reside_k^{t+1} &= y_k - \sum_{b=1}^{\MT}H_{k,b}\shate_b^{t+1} \notag\\
&\quad + \sum_{b=1}^{\MT} H_{k,b} \!\left(\partial_1\mathsf{F}^\text{R}\!\left(\shate_b^t + \sum_{a=1}^{\MR}H_{a,b}^*\reside_a^t\right)\!\right)\!\realpart{H_{k,b}^* \reside_k^t}\notag\\
&\quad + \sum_{b=1}^{\MT} H_{k,b} \!\left(\partial_2\mathsf{F}^\text{R}\!\left(\shate_b^t + \sum_{a=1}^{\MR}H_{a,b}^*\reside_a^t\right)\!\right)\!\imagpart{H_{k,b}^* \reside_k^t}\notag\\
&\quad + i\sum_{b=1}^{\MT} H_{k,b} \!\left(\partial_1\mathsf{F}^\text{I}\!\left(\shate_b^t + \sum_{a=1}^{\MR}H_{a,b}^* \reside_a^t\right)\!\right)\!\realpart{H_{k,b}^* \reside_k^t}\notag\\
&\quad + i\sum_{b=1}^{\MT} H_{k,b} \!\left(\partial_2\mathsf{F}^\text{I}\!\left(\shate_b^t + \sum_{a=1}^{\MR}H_{a,b}^*\reside_a^t\right)\!\right)\!\imagpart{H_{k,b}^* \reside_k^t}\label{eq:cBAMP-residual},
\end{align}
with
\begin{align*}
\partial _1\mathsf{F}^\text{R} &\triangleq \frac{\partial \realpart{\mathsf{F}(x + iy,\tau)}}{\partial x},\quad \partial _2\mathsf{F}^\text{R} \triangleq \frac{\partial \realpart{\mathsf{F}(x + iy,\tau)}}{\partial y},\\
\partial _1\mathsf{F}^\text{I} &\triangleq \frac{\partial \imagpart{\mathsf{F}(x + iy,\tau)}}{\partial x},\quad \partial _2\mathsf{F}^\text{I} \triangleq \frac{\partial \imagpart{\mathsf{F}(x + iy,\tau)}}{\partial y}.
\end{align*}

The final step to arrive at the cB-AMP algorithm is to compute the message-variance update equation and simplifying \fref{eq:cBAMP-residual} by the fact that $\bH$ satisfies (A1).
We note that the message-variance update equation was not provided in \cite{malekiCAMP} and hence, we include it for completeness. 
The variance update equation is computed by
\begin{align}\notag
\tau^{t+1} &= \sum_{b=1}^{\MT}\abs{H_{k,b}}^2\tau_{b\rightarrow k}^{t+1}\\
&= 
 \sum_{b=1}^{\MT} \frac{1}{\Nopost \MR} \mathsf{G}\!\left(\shate_b^t+\sum_{a=1}^{\MR} H_{a,b}^* \reside_a^t,\Nopost(1+\tau^t)\right)\!.\label{eq:tau_update_eq}
\end{align}
Since the columns of $\bH$ have unit norm with pairwise independence by (A1), each term in \fref{eq:cBAMP-residual} can be simplified in the large system limit as follows:
\begin{align*}
\sum_{b=1}^{\MT} H_{k,b} \partial_1 \mathsf{F}^\text{R} \realpart{H_{k,b}^* \reside_k^t} 
&= \frac{\beta}{2}  \reside_k^t\langle \partial_1 \mathsf{F}^\text{R}\rangle,\\
\sum_{b=1}^{\MT} H_{k,b}\partial_2 \mathsf{F}^\text{R} \imagpart{H_{k,b}^* \reside_k^t}
%
%
&= \frac{\beta}{2i} \reside_k^t 
\langle
\partial_2\mathsf{F}^\text{R}
\rangle,
%
%
%
%
\\
i\sum_{b=1}^{\MT} H_{k,b} \partial_1 \mathsf{F}^\text{I} \realpart{H_{k,b}^* \reside_k^t}
&=
\frac{\beta i}{2}\reside_k^t
\langle
\partial_1 \mathsf{F}^\text{I}
\rangle,
\\
i \sum_{b=1}^{\MT} H_{k,b}\partial_2 \mathsf{F}^\text{R} \imagpart{H_{k,b}^* \reside_k^t}
&=
\frac{\beta}{2}  \reside_k^t
\langle\partial_2 \mathsf{F}^\text{I}
\rangle,
\end{align*}
By using the Hadamard product, we arrive at \fref{alg:cB-AMP}.
%

\section{Proof of \fref{lem:B_AMP_same_cB_AMP}}\label{app:B_AMP_same_cB_AMP}

We use the facts that $\partial_2\mathsf{F}^\text{R}$, $\partial_1\mathsf{F}^\text{I}$, and $\partial_2\mathsf{F}^\text{I}$ are all zero for real-valued systems. 
Moreover, since the $\ell_2$-norm of each column of $\bH$ is one according to (A1), the update \fref{eq:cBAMP-residual} simplifies to 
\begin{align}\notag
\reside_k^{t+1} - \left(y_k - \bmh_k^\row\shat^{t+1} \right) 
&=  \reside_k^t\sum_{b=1}^{\MT}H_{k,b}^2{\mathsf{F}'}\left(\shate_b^t + \sum_{a=1}^{\MR}H_{a,b}\reside_a^t\right)
\notag \\
&= \beta \reside_k^t\!\left\langle {\mathsf{F}'}\!\left(\shat^t + \bH^T \resid^t\right)\right\rangle\!, \label{eq:realBAMP}
\end{align}
where $\mathsf{F}'$ is the derivative of the mean function $\mathsf{F}(\shate_\ell,\tau)$ taken with respect to $\shate_\ell$. 
The final comparison of \fref{eq:realBAMP} with \cite[Eq.~5.74]{Maleki2010phd} reveals equivalence of real-valued cB-AMP and B-AMP. 

\section{Intuitive derivation of \fref{thm:CSE}}\label{app:statederiv}

We present a non-rigorous derivation of \fref{thm:CSE} for complex-valued systems; a rigorous proof can be found in \cite{BM2011}.
Assume that the MIMO channel $\bH(t)$ changes each iteration $t$, where the elements are distributed $\setC\setN\left(0,1/\MR\right)$.
In addition, let $\mathsf{F}(z,\tau)$ and $\mathsf{G}(z,\tau)$ are functions defined in \fref{eq:F_compute} and \fref{eq:G_compute} according to the mean and variance of the distribution in \fref{eq:gaussianmeasure}, respectively.
Let $\bmy^t = \bH(t)\bms_0 + \bmn$ where the entries of $\bmn$ are circularly symmetric complex Gaussian with variance $\No$.
Assuming that we fix the postulated noise variance to $\Nopost$, then, in each iteration, the recursion is defined as:
\begin{align}\label{eq:statez}
\resid^{t} &= \bmy^{t} - \bH\shat^{t},\\
\shat^{t+1} &= \mathsf{F}\!\left(\shat^t+\bH^\Herm(t) \resid^t,\Nopost(1+\tau^t)\right)\!,\label{eq:statex}\\
\tau^{t+1} &= \frac{\beta}{\Nopost}\!\left\langle\mathsf{G}\left(\shat^t+\bH^\Herm(t) \resid^t,\Nopost(1+\tau^t)\right)\right\rangle\!.\label{eq:statetau}
\end{align}

By substituting $\resid^{t}$ in \fref{eq:statez} into $\shat^t+\bH^\Herm(t) \resid^t$, we have that
\begin{align}\label{eq:stateevoeta}
&\shat^t+\bH^\Herm(t) \resid^t = \bH^\Herm(t)\bmy^t + \!\left(\bI_{\MT} - \bH^\Herm(t)\bH(t)\right)\!\shat^t\notag\\
&= \bms_0 + \bH^\Herm(t)\bmn + \!\left( \bI_{\MT} - \bH^\Herm(t)\bH(t)\right)\!\!\left(\shat^t - \bms_0\right)\!.
\end{align}
The central limit theorem shows that each diagonal and non-diagonal entry in $\bI_{\MT} - \bH^\Herm(t)\bH(t)$ is distributed $\setN(0,1/{\MR})$ and $\setC\setN(0,1/{\MR})$ respectively, with pairwise independent entries. 
Also, for each $\ell$th entry in $\left( \bI_{\MT} - \bH^\Herm(t)\bH(t)\right)\left(\shat^t - \bms_0\right)$, the real and imaginary parts are normally distributed with zero mean and variance $\frac{\Vert\shat^t - \bms_0\Vert_2^2}{2\MR}+\delta_\ell^t$ and $\frac{\Vert\shat^t - \bms_0\Vert_2^2}{2\MR}-\delta_\ell^t$ respectively, with $\delta_\ell^t = \frac{\text{Re}\{(\shate^t_{\ell} - s_{0\ell})^2\}}{2\MR}$. 
With
\begin{align}\label{eq:stateevosigma}
\hat\sigma_{t}^2 = \lim_{\MT\rightarrow\infty}\left\Vert\shat^t-\bms_0\right\Vert^2/{\MT},
\end{align}
and noting that $\delta_\ell^t\rightarrow0$ as $\MT\rightarrow\infty$, we have that
\begin{align*}
\!\left(\bI_{\MT} - \bH^\Herm(t)\bH(t)\right)\!\!\left(\shat^t - \bms_0\right)\rightarrow \setC\setN\!\left(0,\beta\hat\sigma_{t}^2\right)\!.
\end{align*}
Moreover, by conditioning on $\bmn$, $\bH^\Herm(t)\bmn\rightarrow\setC\setN\left(0,\No\right)$ by the law of large numbers. 
By \fref{def:effective_var} of the effective noise variance of cB-AMP, we have the relation $\sigma_{t}^2 = \No+\beta\hat\sigma_{t}^2$ with $\hat\sigma_{t}^2$ defined in \fref{eq:stateevosigma}. 
Thus, each $\ell$th entry of \fref{eq:statex} converges to $\mathsf{F}\!\left(s_{0\ell} + \sigma_t Z,\Nopost(1+\tau^t)\right)$ where $Z\sim\setC\setN\left(0,1\right)$. Since we assume a fixed prior distribution for all $s_{0\ell}$, we obtain the following recursion for \fref{eq:statex}:
\begin{align*}
\sigma_{t+1}^2 &= \No + \beta\!\!\lim_{\MT\rightarrow\infty}\frac{1}{\MT}\left\Vert\shat^{t+1} - \bms_0\right\Vert^2\\
&= \No + \beta\Exop_{\srv,Z}\!\left[\abs{\mathsf{F}\left(\srv + \sigma_t Z,\Nopost(1+\tau^t)\right)- \srv}^2 \right]\!,
\end{align*}
with $S\sim p(S)$. 
Starting from \fref{eq:statetau}, we use \fref{eq:stateevoeta} and \fref{eq:stateevosigma}, and the law of large numbers to obtain:
\begin{align*}
\tau^{t+1} &= \frac{\beta}{\Nopost}\!\left\langle\mathsf{G}\!\left(\shat^t+\bH^\Herm(t) \resid^t,\Nopost(1+\tau^t)\right)\!\right\rangle\!
\\
&= \frac{\beta}{\Nopost}\Exop_{\srv,Z}\!\left[\mathsf{G}\!\left(\srv + \sigma_t Z,\Nopost(1+\tau^t)\right)\!\right]\!.
\end{align*}
By introducing the postulated variance $\gamma_t^2 = \Nopost(1+\tau^t)$, we obtain the final cSE:
\begin{align*}
\sigma_{t+1}^2 &= \No + \beta\Exop_{\srv,Z}\!\left[\abs{\mathsf{F}\!\left(\srv + \sigma_t Z,\gamma_t^2\right)\!- \srv}^2 \right]\\
\gamma_{t+1}^2 &= \Nopost+ \beta\Exop_{\srv,Z}\!\left[\mathsf{G}\!\left(\srv + \sigma_t Z,\gamma_t^2\right)\!\right]\!.
\end{align*}
We reiterate that the formulation of $\bH(t)$ to obtain cSE in \fref{thm:CSE} was non-rigorous; a rigorous proof can be found in \cite{BM2011}.

\section{Proof of \fref{lem:LAMA_alg}}\label{app:LAMA_alg}
We start with cB-AMP as detailed in \fref{alg:cB-AMP}. 
We simplify intermediate steps in cB-AMP using the definition of $\mathsf{F}(\shate_\ell,\tau)$ and $\mathsf{G}(\shate_\ell,\tau)$, and our knowledge of the prior distribution.
Recall that $\mathsf{F}(\shate_\ell,\tau)$ in \fref{eq:DAMP_F} was defined as
\begin{align*}
\mathsf{F}(\shate_\ell,\tau) = \sum_{a\in\setO} w_a(\shate_\ell,\tau)a,
\end{align*}
By taking partial derivatives of $\mathsf{F}(\shate_\ell,\tau)$ with the notations defined in \fref{alg:cB-AMP}, we have the following expressions, where we drop the notation $w_a = w_a(\shate_\ell,\tau)$ for simplicity.
\begin{align*}
\partial_1\mathsf{F}^\text{R} &= \frac{2}{\tau} \left[\sum_{a\in\setO} \realpart{a}^2 w_a - \!\left(\sum_{a\in\setO} \realpart{a} w_a\right)^{\!\!2} \right]\!,
\\
\partial_2 \mathsf{F}^\text{I} &= \frac{2}{\tau} \left[\sum_{a\in\setO} \imagpart{a}^2 w_a - \!\left(\sum_{a\in\setO} \imagpart{a} w_a\right)^{\!\!2} \right]\!,
\\
\partial_2\mathsf{F}^\text{I} &= \partial_1\mathsf{F}^\text{R} = \frac{2}{\tau}
\Bigg[
\sum_{a\in\setO} \realpart{a} \imagpart{a}  w_a \\
&\quad\quad\quad\quad - \!\bigg(\sum_{a\in\setO} \realpart{a} w_a\bigg)\!
\!\bigg(\sum_{a\in\setO} \imagpart{a}  w_a\bigg)\! 
\Bigg].
\end{align*}
Note that \fref{eq:DAMPrealG} can be separated in real and imaginary parts. Therefore,
\begin{align*}
\mathsf{G}(\shate_\ell,\tau) &= \sum_{a\in\setO}\abs{a}^2w_a - \abs{\sum_{a\in\setO}a w_a}^2
\\
&= \sum_{a\in\setO}\realpart{a}^2w_a - \!\left(\sum_{a\in\setO}\realpart{a} w_a\right)^{\!\!2}\\
&\quad + \sum_{a\in\setO}\imagpart{a} ^2w_a - \!\left(\sum_{a\in\setO}\imagpart{a}  w_a\right)^{\!\!2}\\
&= \frac{\tau}{2}\!\left[\partial_1 \mathsf{F}^\text{R} + \partial_2 \mathsf{F}^\text{I}\right]\!(\shate_\ell,\tau)
\end{align*}
Finally, observe that $\partial_1\mathsf{F}^\text{I}=\partial_2\mathsf{F}^\text{R}$ and $1/i = -i$, so \fref{lem:LAMA_alg} simplifies to:
\begin{align*}
\shat^{t+1} &= \mathsf{F}\!\left(\shat^t+\bH^\Herm\resid^t,\Nopost(1+\tau^t) \right)\!\\
\tau^{t+1} &= \frac{\beta}{\Nopost}\!\left\langle\mathsf{G}\!\left(\shat^t+\bH^\Herm\resid^t,\Nopost(1+\tau^t)\right)\!\right\rangle\!\\
\resid^{t+1} &= \bmy - \bH \shat^{t+1} + \frac{\tau^{t+1}}{1+\tau^t}\resid^t.
\end{align*}

\section{Proof of \fref{lem:LAMAtoMF}}\label{app:LAMA_MF}
Since $\Nopost\rightarrow\infty$, the recursions in \fref{alg:LAMA} are given by
\begin{align*}
\shat^{t} &= \!\!\lim_{\Nopost\rightarrow\infty}\!\!\mathsf{F}\!\left(\shat^{t-1}+\bH^\Herm\resid^{t-1},\Nopost\right)\!,\\
\resid^{t} &= \bmy - \bH\shat^{t}.
\end{align*}
First of all, notice that as $\Nopost\rightarrow\infty$, $w_a(\shate_\ell,\Nopost)\rightarrow p_a$ for any $\shate_\ell$. Therefore, for all $t$,
\begin{align*}
\shat^{t} &= \!\!\lim_{\Nopost\rightarrow\infty}\!\!\mathsf{F}\!\left(\shat^{t+1}+\bH^\Herm\resid^{t-1},\Nopost\right)\!\rightarrow \sum_{a\in\setO}a p_a = 0,\\
\resid^{t} &= \bmy - \bH\shat^{t} = \bmy,
\end{align*}
and thus, the Gaussian output $\bmz^{t}=\shat^{t}+\bH^\Herm\resid^t$ is equivalent to the matched filter output $\bH^\Herm\bmy$. To show that the non-linear MMSE output corresponds to the matched filter involves computing $\lim_{\Nopost\rightarrow\infty} \frac{\Nopost}{E_s} \shat^{t+1}$, which is given by
\begin{align*}
&\lim_{\Nopost\rightarrow\infty}\!\frac{\Nopost}{E_s}\shat^{t+1} = \lim _{\Nopost\rightarrow\infty}\!\frac{\Nopost}{E_s}\mathsf{F}\!\left(\bH^\Herm\bmy,\Nopost\right)\!\\
&= \lim_{\Nopost\rightarrow\infty}\!\frac{\Nopost}{E_s}\sum_{a\in\setO}a p_a\!\left(1 - \frac{1}{\Nopost}\abs{\bH^\Herm\bmy-a}^2\right)\!
\\&= -\frac{1}{E_s}\sum_{a\in\setO}a p_a \abs{\bH^\Herm\bmy-a}^2 
=\bH^\Herm\bmy.
\end{align*}


\section{Proof of \fref{lem:LAMAbeta0}}\label{app:LAMAbeta0}
First, note that as $\beta\rightarrow 0$ for a fixed $\Nopost$, we have that $\tau^t=0$ for all $t\geq1$. Therefore, we have the following recursions,
\begin{align*}
\shat^{t+1} &= 
\mathsf{F}\!\left(\shat^t+\bH^\Herm\resid^t,\Nopost\right)\!,\\
\resid^{t+1} &= \bmy - \bH\shat^{t+1}.
\end{align*}
Following the derivation of complex state evolution in \fref{app:statederiv}, as $\beta\rightarrow 0$, we have
\begin{align*}
\shat^t+\bH^\Herm\resid^t = \bH^\Herm\bmy+ (\bI_\MT - \bH^\Herm\bH)\shat^t \rightarrow \bH^\Herm\bmy,
\end{align*}
because the entries of $(\bI_\MT - \bH^\Herm\bH)\shat^t$ converge to a complex normal distribution with zero mean and variance $\beta\tilde\sigma^2_t$ with $\tilde\sigma^2_t = \lim_{\MT\rightarrow\infty}\frac{1}{\MT}\vecnorm{\shat^t}^2$. 
Since $\tilde\sigma^2_t$ is finite and $\beta\rightarrow 0$, we have that the Gaussian output of LAMA is $\bmz^t = \bH^\Herm\bmy$ (independent of the iteration index $t$). Hence, the non-linear MMSE output $\shat^{t+1}$ of \LAMA is given by $\mathsf{F}\!\left(\bH^\Herm\bmy,\Nopost\right)\!$ for all $t$.

We show that one iteration of \LAMA is sufficient to achieve AWGN performance by the cSE in \fref{thm:CSE}. 
Recall that previous paragraph demonstrated that $\bmz^t=\bH^H\bmy$ for all $t$.
Thus, the equivalent output noise variance 
is computed as $\sigma^2=\No+\beta\Varop_S[S]=\No$, where the last step comes from $\beta\rightarrow 0$.
Since each output of \LAMA is identical every iteration and the output noise variance is $\No$, one iteration is sufficient to achieve AWGN performance.


\section{Proof of \fref{lem:DAMPsolvability}}\label{app:DAMPsolvability}
Note that since $\No=0$, we have $\sigma^2 = \gamma^2$ by \fref{cor:cSE_IO}. 
Since the variance of $S$ is finite, denote $\Varop_S[S]=\sigma_s^2$. By \cite[Prop. 15]{GWSS2011}, we have the following upper bound for $\Psi(\sigma^2,\sigma^2)$:
\begin{align}\label{eq:MSIbeta1}
\Psi(\sigma^2,\sigma^2) \leq \frac{\sigma_s^2}{\sigma_s^2+\sigma^2}\sigma^2,
\end{align}
where equality is achieved for all $\sigma^2$ if and only if $S$ is complex normal with variance $\sigma_s^2$. Note that if $\sigma^2=0$, then \fref{eq:MSIbeta1} is achieved for any $\sigma_s^2$. If $\sigma^2>0$, then
\begin{align*}
\Psi(\sigma^2,\sigma^2)&\leq\frac{\sigma_s^2}{\sigma_s^2+\sigma^2}\sigma^2= \frac{1}{1+\sigma^2/\sigma_s^2}\sigma^2 < \sigma^2,
\end{align*}
and, hence, the proof follows.

The first part of \fref{lem:DAMPsolvability} is trivial from \fref{eq:MSIbeta1}, and thus, $\Psi(\sigma^2,\sigma^2)\rightarrow0$ as $\sigma^2\rightarrow0$. The second part is noting that as $\sigma^2\rightarrow\infty$, $\mathsf{F}(\cdot,\sigma^2)\rightarrow \sum_{a\in\setO}a p_a = \Exop_S[S]$, and hence we have
\begin{align*}
\lim_{\sigma^2\rightarrow\infty}\Psi(\sigma^2,\sigma^2) \rightarrow\Exop_{S}\!\left[
\abs{
	S - \Exop_S[S]
}^2
\right]\! = \Varop_S[S].
\end{align*}


\section{Proof of \fref{thm:recovery}}\label{app:recovery}

We assume the initialization in \fref{alg:LAMA}. Since $\No=\Nopost=0$, if \LAMA perfectly recovers the true signal $\bms_0$, then the fixed-point \fref{eq:fixed_pt} is unique at $\sigma^2=0$.
This happens if the system ratio is strictly less than the ERT, $\betamax$ because otherwise, i.e., $\beta\geq\betamax$, there exists a non-unique fixed point to \fref{eq:fixed_pt} for some $\sigma^2>0$ by \fref{def:ERT}.


\section{Proof of \fref{lem:MRTERT}}\label{app:MRTERT}

We show that for a fixed constellation $\setO$, $\betamin\leq\betamax$. 
For conciseness, define $\sigma^2_\star$ as the fixed-point $\sigma^2=\betamax\Psi(\sigma^2,\sigma^2)$.
The proof is straightforward as,
\begin{align*}
\betamin &\stackrel{(a)}{=} \min_{\sigma^2>0}\!\left\{\!
\left(\frac{\textnormal{d}\Psi(\sigma^2,\sigma^2)}{\textnormal{d}\sigma^2}
\right)^{\!\!-1}
\right\}\!
\leq
\!\left.\!
\left(\frac{\textnormal{d}\Psi(\sigma^2,\sigma^2)}{\textnormal{d}\sigma^2}
\right)^{\!\!-1}
\right\vert_{\sigma^2=\sigma^2_\star}
\\&
\stackrel{(b)}{=}\!\left(\frac{1}{\betamax}
\right)^{\!\!-1} \!= \betamax,
\end{align*}
where (a) and (b) follow from the definitions of MRT and ERT, respectively.


\section{Proof of \fref{lem:fp_IOLAMA_0}}\label{app:fp_IOLAMA_0}

As $\beta<\betamax$, there exists a value of $\No$, denote it as $\No^\star$, such that for $\No<\No^\star$, the fixed-point solution of \LAMA is unique. 
We note that $\Nomin$ is also a candidate for $\No^\star$ as the fixed-point solution of \LAMA is unique for all $\No<\Nomin$.
In addition, since $\setO$ is a constellation, by \cite[Thm. 10]{WV2010}, $\Psi(\sigma^2,\sigma^2)$ has a continuous derivative and $\lim_{\sigma^2\to0}\frac{\dd}{\dd \sigma^2}\Psi(\sigma^2,\sigma^2)=0$.
Hence, there exists a value $\sigma_\star^2$ such that for all $\sigma^2<\sigma_\star^2$, 
\begin{align}\label{eq:fp_IOLAMA_0}
\frac{\dd}{\dd \sigma^2}\Psi(\sigma^2,\sigma^2) < \frac{1}{2\beta}.
\end{align}
Now, suppose that $\No^\star<\sigma_\star^2/2$. Then, for all $\sigma^2<\sigma_\star^2$ we have:
\begin{align*}
\No+\beta\Psi(\sigma^2,\sigma^2)\stackrel{(a)}{<}\No+\frac{\sigma^2}{2},
\end{align*}
where $(a)$ follows from \fref{eq:fp_IOLAMA_0} and the mean value theorem.
Since $2\No<2\No^\star<\sigma_\star^2$, we have that:
\begin{align*}
\No+\beta\Psi(2\No,2\No)<\No+\No=2\No,
\end{align*}
and therefore, the fixed-point solution $\sigma^2$ has to be between $\No$ and $2\No$. As a result, as $\No\to0$, the fixed-point solution $\sigma^2\to0$.
The last part is apparent as:
\begin{align*}
\lim_{\No\to0}1 &= 
\lim_{\No\to0}\frac{\No}{\sigma^2} + \beta\lim_{\No\to0}\frac{\Psi(\sigma^2,\sigma^2)}{\sigma^2}
\\
&=\lim_{\No\to0}\frac{\No}{\sigma^2} + \beta
\lim_{\sigma^2\to0}\frac{\dd}{\dd\sigma^2}\Psi(\sigma^2,\sigma^2) = 
\lim_{\No\to0}\frac{\No}{\sigma^2},
\end{align*}


\section{Proof of \fref{lem:separability}}\label{app:separability}
Since the constellation $\setO$ is separable, we introduce a shorthand notation for $\setO^\textnormal{R} = \realpart{\setO}$ and $\setO^\textnormal{I} = \imagpart{\setO}$.
It is easy to observe that the weight scalar $w_a(\shate_\ell,\tau)$ can be rewritten as a product between the weight scalar of the real and imaginary constellation $w_{\ar}(\shate_\ell,\tau)$ and $w_{\ai}(\shate_\ell,\tau)$, i.e., $w_a(\shate_\ell,\tau )=w_{\ar}(\shate_\ell,\tau)w_{\ai}(\shate_\ell,\tau)  $ where
\begin{align}\label{eq:weight_separable}
w_{\ar}(\shate_\ell,\tau) = \frac{p_{\ar}\exp\! \left( - \frac{1}{\tau}(\realpart{\shate_\ell} - \ar)^2\right)\!}{\sum_{\ar\in\setO^\textnormal{R}}p_{\ar}\exp\!\left( - \frac{1}{\tau}(\realpart{\shate_\ell} - \ar) ^2\right)\!},
\end{align}
and likewise for $w_{\ai}$. 
Therefore, $\mathsf{F}$ is separable because
\begin{align}
\notag
&\mathsf{F}(\shate_\ell,\tau) = 
\sum_{a\in\setO} w_a(\shate_\ell,\tau) a 
\\
\notag
&= 
\sum_{a\in\setO} w_a(\shate_\ell,\tau) \ar + i \sum_{a\in\setO} w_a(\shate_\ell,\tau) \ai \\
\notag
&= \sum_{\ar\in\setO^\textnormal{R}} 
\ar \sum_{\ai\in\setO^\textnormal{I}}
w_a(\shate_\ell,\tau) 
+ 
i\sum_{\ai\in\setO^\textnormal{I}}
\ai\sum_{\ar\in\setO^\textnormal{R}} 
w_a(\shate_\ell,\tau) \\
\label{eq:F_seperable_complex}
&=  \sum_{\ar\in\setO^\textnormal{R}}
w_{\ar}(\shate_\ell,\tau) \ar + 
i  \sum_{\ai\in\setO^\textnormal{I}}
w_{\ai}(\shate_\ell,\tau) \ai.
\end{align}

Now, for a real-valued constellation $\setO^\textnormal{R}$, the message mean $\mathsf{F}^\textnormal{R}$ is given by:
\begin{align}
\label{eq:F_separable_real}
\mathsf{F}^\textnormal{R}(\shate_\ell,\tau) = \sum_{a\in\setO^\textnormal{R}} w_a^\textnormal{R}(\shate_\ell,\tau) a, 
\end{align}
where the weight scalar for the real-valued constellation is computed by
\begin{align*}
w_{a}^\textnormal{R}(\shate_\ell,\tau) = 
\frac{p_{a}\exp\! \left( - \frac{1}{2\tau}(\shate_\ell - a)^2\right)\!}{\sum_{a\in\setO^\textnormal{R}}p_{a}\exp\!\left( - \frac{1}{2\tau}(\shate_\ell - a) ^2\right)\!}.
\end{align*}

Therefore, we have that:
\begin{align*}
&\Psi(\sigma^2,\gamma^2) 
=  
\Exop_{\srv,Z}\!\left[\abs{\mathsf{F}\!\left(\srv + \sigma Z,\gamma^2\right) - \srv}^2 \right]\\
&\stackrel{(a)}{=}
\Exop_{\Sr,Z_\textnormal{R}}\!\left[\!\left(
\realpart{\mathsf{F}\Big(\Sr + \frac{\sigma}{\sqrt{2}} Z_\textnormal{R},\gamma^2\Big)}
- \Sr \right)^{\!2} \right]\\
&\quad + 
\Exop_{S_\textnormal{I},Z_\textnormal{I}}\!\left[\!\left(
\imagpart{\mathsf{F}\Big(S_\textnormal{I} +  \frac{\sigma}{\sqrt{2}} Z_\textnormal{I},\gamma^2\Big)}
- S_\textnormal{I} \right)^{\!2} \right]\\
&\stackrel{(b)}{=}
2\Exop_{\Sr,Z_\textnormal{R}}\!\left[\!\left(
\realpart{\mathsf{F}\Big(\Sr +  \frac{\sigma}{\sqrt{2}} Z_\textnormal{R},\gamma^2\Big)}
- \Sr \right)^{\!2} \right]\\
&\stackrel{(c)}{=}
2\Exop_{\Sr,Z_\textnormal{R}}\!\left[\!\left(
\sum_{a\in\setO^\textnormal{R}} w_a^\textnormal{R}\Big(\Sr + \frac{\sigma}{\sqrt{2}} Z_\textnormal{R},\frac{\gamma^2}{2}\Big) a 
- \Sr \right)^{\!2} \right]\\
&= 
2 \Psi^\textnormal{R}\!\left(\frac{\sigma^2}{2},\frac{\sigma^2}{2}\right)\!,
\end{align*}
where $(a)$ follows from \fref{eq:F_seperable_complex}, $(b)$ from definition of separable constellation, and $(c)$ follows from construction of \fref{eq:F_separable_real}.
We note that the case for variance function $\Phi$ is derived similarly.

\section{Proof of \fref{lem:separable_const_same}}\label{app:separable_const_same}
We show that for a separable constellation $\setO$, the MRT and ERT are equivalent.
Denote the complex-valued MSE function as $\Psi(\sigma^2,\sigma^2)= \Exop_{\srv,Z}\!\left[\abs{\mathsf{F}\!\left(\srv + \sigma Z,\sigma^2\right) - \srv}^2 \right]$, where $Z\sim\setC\setN(0,1)$, and  $\srv\sim p(\srv)$ for constellation $\setO$.
Denote the real-valued MSE function $\Psi^\text{R}(\sigma^2,\sigma^2)= \Exop_{\Sr,Z^\text{R}}\!\left[\!\left(\mathsf{F}\!\left(\Sr + \sigma Z^\text{R},\sigma^2\right) - \Sr\right)^2 \right]$, where $Z^\text{R}\sim\setN(0,1)$ and $\Sr\sim p(\Sr)$ for the real-valued constellation $\realpart{\setO}$.
We know from \fref{lem:separability} that $\Psi(\sigma^2,\sigma^2)=2\Psi^\text{R}(\sigma^2/2,\sigma^2/2)$. 
By denoting $\betaminno_\complexset$ and $\betaminno_\reals$ as the MRT of complex- and real-valued MSE function, respectively, we have:
\begin{align*}
\betaminno_\complexset 
&=
\min_{\sigma^2\geq0}
\!\left\{\!
\left(\frac{\textnormal{d}\Psi(\sigma^2,\sigma^2)}{\textnormal{d}\sigma^2}
\right)^{\!\!-1}\!
\right\}
\\
&=
\min_{\sigma^2\geq0}
\!\left\{\!
\left(\frac{\textnormal{d}\Psi^\text{R}(\frac{\sigma^2}{2},\frac{\sigma^2}{2})}{\textnormal{d}\sigma^2/2}
\right)^{\!\!-1}\!
\right\}
\\&=
\min_{\bar\sigma^2\geq0}
\!\left\{\!
\left(\frac{\textnormal{d}\Psi^\text{R}(\bar\sigma^2,\bar\sigma^2)}{\textnormal{d}\bar\sigma^2}
\right)^{\!\!-1}\!
\right\}
\\&=
\betaminno_\reals
\end{align*}
The remaining quantities, $\betamaxno$ and the critical noise levels of \fref{lem:separable_const_same} can be derived similarly. 

%
%
%
%
%
%
%
%
%
%
%
%
%
%
%
%
%
%
%

\section{Proof of \fref{lem:convergence_speed}}\label{app:convergence_speed}

Note that if $\beta<\betamin$, then the slope of the function $\beta\Psi(\sigma^2,\sigma^2)$ with respect to $\sigma^2$ is always less than 1, i.e., $
\beta\!\!\left.\frac{\textnormal{d}}{\textnormal{d}\sigma^2}\Psi(\sigma^2,\sigma^2)\right\vert_{\sigma^2=\sigma^2_\star}<1$ for any $\sigma_\star^2>0$.
In addition, since $\beta<\betamin$, the fixed-point solution to $\No+\beta\Psi(\sigma^2,\sigma^2)=\sigma^2$ is unique.
Now the exponentially-fast convergence result can be shown by using the bounding technique of \cite[Lem. 6.4.1]{Maleki2010phd}.

In \cite{Maleki2010phd}, the proof for showing exponentially-fast convergence of standard AMP to its largest fixed-point solution was shown by analyzing the stability constant $\text{SC}(\Psi)$ which was defined by:
\begin{align*}
\textnormal{SC}(\Psi) &= \beta\!\!\left.\frac{\textnormal{d}}{\textnormal{d}\sigma^2}\Psi(\sigma^2,\sigma^2)\right\vert_{\sigma^2=\sigma^2_\star},
\\
\sigma_\star^2 &= \max_{\sigma^2>0}
\!\left\{
\sigma^2: \No+\beta\Psi(\sigma^2,\sigma^2)\geq \sigma^2
\right\}
\end{align*}
Using the new notation of stability constant, the condition of $\textnormal{SC}(\Psi)<1$, i.e., slope at the largest fixed point is less than 1, was only needed in \cite{Maleki2010phd} to show exponential-fast convergence.
However, we note that this approach was viable in \cite{Maleki2010phd} due to concavity of the MSE function of $\Psi(\sigma^2,\sigma^2)$ for the soft-thresholding function; however, the MSE function of \LAMA does not have such properties. 
In fact, the MSE function for \LAMA for commonly used constellation in wireless is neither convex nor concave. 
However, as shown above, if $\beta<\betamin$, we have that not only $\textnormal{SC}(\Psi)<1$, but also $
\beta\!\!\left.\frac{\textnormal{d}}{\textnormal{d}\sigma^2}\Psi(\sigma^2,\sigma^2)\right\vert_{\sigma^2=\sigma^2_\star}<1$ for any $\sigma_\star^2>0$.
Therefore, if $\beta<\betamin$, LAMA converges exponentially fast to its unique fixed-point solution $\sigma_\star^2$.



%

\bibliographystyle{IEEEtran}
\bibliography{bib/VIPabbrv,bib/confs-jrnls,bib/publishers,bib/VIP_180924}

\begin{thebibliography}{10}
\providecommand{\url}[1]{#1}
\csname url@samestyle\endcsname
\providecommand{\newblock}{\relax}
\providecommand{\bibinfo}[2]{#2}
\providecommand{\BIBentrySTDinterwordspacing}{\spaceskip=0pt\relax}
\providecommand{\BIBentryALTinterwordstretchfactor}{4}
\providecommand{\BIBentryALTinterwordspacing}{\spaceskip=\fontdimen2\font plus
\BIBentryALTinterwordstretchfactor\fontdimen3\font minus
  \fontdimen4\font\relax}
\providecommand{\BIBforeignlanguage}[2]{{%
\expandafter\ifx\csname l@#1\endcsname\relax
\typeout{** WARNING: IEEEtran.bst: No hyphenation pattern has been}%
\typeout{** loaded for the language `#1'. Using the pattern for}%
\typeout{** the default language instead.}%
\else
\language=\csname l@#1\endcsname
\fi
#2}}
\providecommand{\BIBdecl}{\relax}
\BIBdecl

\bibitem{JGMS2015conf}
C.~Jeon, R.~Ghods, A.~Maleki, and C.~Studer, ``Optimality of large {MIMO}
  detection via approximate message passing,'' in \emph{Proc. IEEE Int. Symp.
  Inf. Theory (ISIT)}, Jun. 2015, pp. 1227--1231.

\bibitem{guo2003multiuser}
D.~Guo and S.~Verd\'u, ``Multiuser detection and statistical mechanics,'' in
  \emph{Commun., Inf. and Netw. Security}.\hskip 1em plus 0.5em minus
  0.4em\relax Springer, 2003, pp. 229--277.

\bibitem{GV2005}
------, ``Randomly spread {CDMA}: {Asymptotics} via statistical physics,''
  \emph{{IEEE} Trans. Inf. Theory}, vol.~51, no.~6, pp. 1983--2010, Jun. 2005.

\bibitem{rusek2013scaling}
F.~Rusek, D.~Persson, B.~K. Lau, E.~G. Larsson, T.~L. Marzetta, O.~Edfors, and
  F.~Tufvesson, ``Scaling up {MIMO}: Opportunities and challenges with very
  large arrays,'' \emph{{IEEE} Signal Process. Mag.}, vol.~30, no.~1, pp.
  40--60, Jan. 2013.

\bibitem{LETM2014}
E.~Larsson, O.~Edfors, F.~Tufvesson, and T.~Marzetta, ``Massive {MIMO} for next
  generation wireless systems,'' \emph{{IEEE} Commun. Mag.}, vol.~52, no.~2,
  pp. 186--195, Feb. 2014.

\bibitem{ABCHLAZ2014}
J.~Andrews, S.~Buzzi, W.~Choi, S.~Hanly, A.~Lozano, A.~Soong, and J.~Zhang,
  ``What will {5G} be?'' \emph{{IEEE} J. Sel. Areas Commun.}, vol.~32, no.~6,
  pp. 1065--1082, Jun. 2014.

\bibitem{Marzetta10}
T.~L. Marzetta, ``Non-cooperative cellular wireless with unlimited numbers of
  base station antennas,'' \emph{IEEE Trans. Wireless Comm.}, vol.~9, no.~11,
  pp. 3590--3600, Nov. 2010.

\bibitem{HBD11}
J.~Hoydis, S.~ten Brink, and M.~Debbah, ``Massive {MIMO}: How many antennas do
  we need?'' in \emph{Proc. Allerton Conf. Commun., Contr., Comput.}, Sept.
  2011, pp. 545--550.

\bibitem{WBVSCJD2013}
M.~Wu, B.~Yin, A.~Vosoughi, C.~Studer, J.~Cavallaro, and C.~Dick, ``Approximate
  matrix inversion for high-throughput data detection in the large-scale {MIMO}
  uplink,'' in \emph{Proc. IEEE Int. Symp. Circuits and Syst. (ISCAS)}, May
  2013, pp. 2155--2158.

\bibitem{WYWDCS2014}
M.~Wu, B.~Yin, G.~Wang, C.~Dick, J.~Cavallaro, and C.~Studer, ``Large-scale
  {MIMO} detection for {3GPP LTE}: Algorithm and {FPGA} implementation,''
  \emph{{IEEE} J. Sel. Topics Signal Process.}, vol.~8, no.~5, pp. 916--929,
  Oct. 2014.

\bibitem{L1991}
W.~C.~Y. Lee, ``Overview of cellular {CDMA},'' \emph{{IEEE} Trans. Veh.
  Technol.}, vol.~40, no.~2, pp. 291--302, May 1991.

\bibitem{V1995}
A.~J. Viterbi, \emph{CDMA: Principles of Spread Spectrum Communication}.\hskip
  1em plus 0.5em minus 0.4em\relax Addison Wesley Longman Publishing Co., Inc.,
  1995.

\bibitem{GJPVWW1991}
K.~Gilhousen, I.~Jacobs, R.~Padovani, A.~Viterbi, J.~Weaver, L.A., and
  I.~Wheatley, C.E., ``On the capacity of a cellular {CDMA} system,''
  \emph{{IEEE} Trans. Veh. Technol.}, vol.~40, no.~2, pp. 303--312, May 1991.

\bibitem{HP1997}
S.~Hara and R.~Prasad, ``Overview of multicarrier {CDMA},'' \emph{{IEEE}
  Commun. Mag.}, vol.~35, no.~12, pp. 126--133, Dec 1997.

\bibitem{LV1989}
R.~Lupas and S.~Verd\'u, ``Linear multiuser detectors for synchronous
  code-division multiple-access channels,'' \emph{{IEEE} Trans. Inf. Theory},
  vol.~35, no.~1, pp. 123--136, Jan. 1989.

\bibitem{TanakaCDMA}
T.~Tanaka, ``A statistical-mechanics approach to large-system analysis of
  {CDMA} multiuser detectors,'' \emph{{IEEE} Trans. Inf. Theory}, vol.~48,
  no.~11, pp. 2888--2910, Nov. 2002.

\bibitem{K2003}
Y.~Kabashima, ``A {CDMA} multiuser detection algorithm on the basis of belief
  propagation,'' \emph{J. Phys. A: Math. Gen.}, vol.~36, no.~43, pp.
  11\,111--11\,121, Oct. 2003.

\bibitem{Brown1958}
A.~B. Brown, ``Linear diophantine equations,'' \emph{Math. Mag.}, vol.~31,
  no.~4, pp. 215--220, 1958.

\bibitem{contejean1994efficient}
E.~Contejean and H.~Devie, ``An efficient incremental algorithm for solving
  systems of linear diophantine equations,'' \emph{Inform. Comput.}, vol. 113,
  no.~1, pp. 143--172, Aug. 1994.

\bibitem{fincke1985improved}
U.~Fincke and M.~Pohst, ``Improved methods for calculating vectors of short
  length in a lattice, including a complexity analysis,'' \emph{Math. Comp.},
  vol.~44, no. 170, pp. 463--471, Apr. 1985.

\bibitem{agrell2002closest}
E.~Agrell, T.~Eriksson, A.~Vardy, and K.~Zeger, ``Closest point search in
  lattices,'' \emph{{IEEE} Trans. Inf. Theory}, vol.~48, no.~8, pp. 2201--2214,
  Aug. 2002.

\bibitem{VS1999}
S.~Verd\'u and S.~Shamai, ``Spectral efficiency of {CDMA} with random
  spreading,'' \emph{{IEEE} Trans. Inf. Theory}, vol.~45, no.~2, pp. 622--640,
  Mar. 1999.

\bibitem{MPV1987}
M.~M\'{e}zard, G.~Parisi, and M.~A. Virasoro, \emph{{Spin Glass Theory and
  Beyond}}.\hskip 1em plus 0.5em minus 0.4em\relax Singapore: World Scientific,
  1987.

\bibitem{MT06}
A.~Montanari and D.~Tse, ``Analysis of belief propagation for non-linear
  problems: The example of {CDMA} (or: How to prove {Tanaka's} formula),'' in
  \emph{Proc. IEEE Inf. Theory Workshop (ITW)}, March 2006, pp. 160--164.

\bibitem{GW2006}
D.~Guo and C.-C. Wang, ``Asymptotic mean-square optimality of belief
  propagation for sparse linear systems,'' in \emph{Proc. IEEE Inf. Theory
  Workshop (ITW)}, Oct. 2006, pp. 194--198.

\bibitem{GW2007}
------, ``Random sparse linear systems observed via arbitrary channels: A
  decoupling principle,'' in \emph{Proc. IEEE Int. Symp. Inf. Theory (ISIT)},
  2007, pp. 946--950.

\bibitem{CMT2004}
G.~Caire, R.~M\"uller, and T.~Tanaka, ``Iterative multiuser joint decoding:
  optimal power allocation and low-complexity implementation,'' \emph{{IEEE}
  Trans. Inf. Theory}, vol.~50, no.~9, pp. 1950--1973, Sept. 2004.

\bibitem{donoho2009message}
D.~Donoho, A.~Maleki, and A.~Montanari, ``Message-passing algorithms for
  compressed sensing,'' \emph{Proc. Natl. Academy of Sciences (PNAS)}, vol.
  106, no.~45, pp. 18\,914--18\,919, Sept. 2009.

\bibitem{BM2011}
M.~Bayati and A.~Montanari, ``The dynamics of message passing on dense graphs,
  with applications to compressed sensing,'' \emph{{IEEE} Trans. Inf. Theory},
  vol.~57, no.~2, pp. 764--785, Feb. 2011.

\bibitem{Maleki2010phd}
A.~Maleki, ``Approximate message passing algorithms for compressed sensing,''
  Ph.D. dissertation, Stanford University, Jan. 2011.

\bibitem{Donoho06}
D.~Donoho, ``Compressed sensing,'' \emph{{IEEE} Trans. Inf. Theory}, vol.~52,
  no.~1, pp. 1289--1306, Jan. 2006.

\bibitem{CSintro_Candes08}
E.~Cand{\`e}s and M.~Wakin, ``An introduction to compressive sampling,''
  \emph{{IEEE} Signal Process. Mag.}, vol.~25, no.~2, pp. 21--30, Mar. 2008.

\bibitem{MMB2014}
C.~A. Metzler, A.~Maleki, and R.~G. Baraniuk, ``{From Denoising to Compressed
  Sensing},'' \emph{{IEEE} Trans. Inf. Theory}, vol.~62, no.~9, pp. 5117--5144,
  2016.

\bibitem{andreaGMCS}
A.~Montanari, \emph{Graphical models concepts in compressed sensing, Compressed
  Sensing (Y.C. Eldar and G. Kutyniok, eds.)}.\hskip 1em plus 0.5em minus
  0.4em\relax Cambridge University Press, 2012.

\bibitem{DMM10a}
D.~Donoho, A.~Maleki, and A.~Montanari, ``Message passing algorithms for
  compressed sensing: {I}. {M}otivation and construction,'' in \emph{Proc. IEEE
  Inf. Theory Workshop (ITW)}, Jan. 2010, pp. 1--5.

\bibitem{DMM10b}
------, ``Message passing algorithms for compressed sensing: {II}. {A}nalysis
  and validation,'' in \emph{Proc. IEEE Inf. Theory Workshop (ITW)}, Jan. 2010,
  pp. 1--5.

\bibitem{malekiCAMP}
A.~Maleki, L.~Anitori, Z.~Yang, and R.~Baraniuk, ``Asymptotic analysis of
  complex {LASSO} via complex approximate message passing ({CAMP}),''
  \emph{{IEEE} Trans. Inf. Theory}, vol.~59, no.~7, pp. 4290--4308, Jul. 2013.

\bibitem{GAMP2011}
S.~Rangan, ``Generalized approximate message passing for estimation with random
  linear mixing,'' \emph{arXiv:1010.5141}, 2010.

\bibitem{JM2012}
A.~Javanmard and A.~Montanari, ``State evolution for general approximate
  message passing algorithms, with applications to spatial coupling,'' \emph{J.
  Inf. Inference}, vol.~2, no.~2, pp. 115--144, Oct. 2013.

\bibitem{RSF2017}
S.~Rangan, P.~Schniter, and A.~K. Fletcher, ``Vector approximate message
  passing,'' in \emph{Proc. IEEE Int. Symp. Inf. Theory (ISIT)}, Jun. 2017, pp.
  1588--1592.

\bibitem{VS2013}
J.~Vila and P.~Schniter, ``Expectation-maximization gaussian-mixture
  approximate message passing,'' \emph{{IEEE} Trans. Signal Process.}, vol.~61,
  no.~19, pp. 4658--4672, Oct. 2013.

\bibitem{KRFU2014}
U.~Kamilov, S.~Rangan, A.~Fletcher, and M.~Unser, ``Approximate message passing
  with consistent parameter estimation and applications to sparse learning,''
  \emph{{IEEE} Trans. Inf. Theory}, vol.~60, no.~5, pp. 2969--2985, May 2014.

\bibitem{RSRFC2013}
S.~Rangan, P.~Schniter, E.~Riegler, A.~Fletcher, and V.~Cevher, ``Fixed points
  of generalized approximate message passing with arbitrary matrices,'' in
  \emph{Proc. IEEE Int. Symp. Inf. Theory (ISIT)}, Jul. 2013, pp. 664--668.

\bibitem{VS2011}
J.~Vila and P.~Schniter, ``Expectation-maximization bernoulli-gaussian
  approximate message passing,'' in \emph{Proc. Asilomar Conf. Signals, Syst.,
  Comput.}, Nov. 2011, pp. 799--803.

\bibitem{KGR2012}
U.~Kamilov, V.~Goyal, and S.~Rangan, ``Message-passing de-quantization with
  applications to compressed sensing,'' \emph{{IEEE} Trans. Signal Process.},
  vol.~60, no.~12, pp. 6270--6281, Dec. 2012.

\bibitem{maechler2012vlsi}
P.~Maechler, C.~Studer, D.~E. Bellasi, A.~Maleki, A.~Burg, N.~Felber,
  H.~Kaeslin, and R.~G. Baraniuk, ``{VLSI} design of approximate message
  passing for signal restoration and compressive sensing,'' \emph{IEEE J.
  Emerg. Sel. Topics Circuits Syst.}, vol.~2, no.~3, pp. 579--590, Sep. 2012.

\bibitem{SS2012}
S.~Som and P.~Schniter, ``Compressive imaging using approximate message passing
  and a markov-tree prior,'' \emph{{IEEE} Trans. Signal Process.}, vol.~60,
  no.~7, pp. 3439--3448, Jul. 2012.

\bibitem{SR2012}
P.~Schniter and S.~Rangan, ``Compressive phase retrieval via generalized
  approximate message passing,'' in \emph{Proc. Allerton Conf. Commun., Contr.,
  Comput.}, Oct. 2012, pp. 815--822.

\bibitem{MZB2014}
Y.~{Ma}, J.~{Zhu}, and D.~{Baron}, ``{Compressed Sensing via Universal
  Denoising and Approximate Message Passing},'' \emph{arXiv:1407.1944}, Jul.
  2014.

\bibitem{NSE14}
M.~Nassar, P.~Schniter, and B.~Evans, ``A factor graph approach to joint {OFDM}
  channel estimation and decoding in impulsive noise environments,''
  \emph{{IEEE} Trans. Signal Process.}, vol.~62, no.~6, pp. 1576--1589, Mar.
  2014.

\bibitem{NL2014}
M.~Nabaee and F.~Labeau, ``Bayesian quantized network coding via generalized
  approximate message passing,'' in \emph{Proc. Wireless Telecomunn. Symp.
  (WTS)}, Apr. 2014, pp. 1--7.

\bibitem{S2011}
P.~Schniter, ``A message-passing receiver for {BICM-OFDM} over unknown
  clustered-sparse channels,'' \emph{Proc. Intl. Workshop IEEE Sig. Proc. Adv.
  Wireless Commun. (SPAWC)}, pp. 246--250, Jun. 2011.

\bibitem{WKNLHG14}
S.~Wu, L.~Kuang, Z.~Ni, J.~Lu, D.~Huang, and Q.~Guo, ``Low-complexity iterative
  detection for large-scale multiuser {MIMO-OFDM} systems using approximate
  message passing,'' \emph{{IEEE} J. Sel. Topics Signal Process.}, vol.~8,
  no.~5, pp. 902--915, Oct. 2014.

\bibitem{GV2003}
D.~Guo and S.~Verd\'u, ``Replica analysis of large-system {CDMA},'' \emph{Proc.
  IEEE Inf. Theory Workshop (ITW)}, pp. 22--25, Mar. 2003.

\bibitem{MM2009book}
M.~Mezard and A.~Montanari, \emph{Information, Physics, and Computation}.\hskip
  1em plus 0.5em minus 0.4em\relax Oxford University Press, Inc., 2009.

\bibitem{BM2012}
M.~Bayati and A.~Montanari, ``The {LASSO} risk for {Gaussian} matrices,''
  \emph{{IEEE} Trans. Inf. Theory}, vol.~58, no.~4, pp. 1997--2017, Apr. 2012.

\bibitem{DSV2005}
D.~Guo, S.~Shamai, and S.~Verd\'u, ``Mutual information and minimum mean-square
  error in gaussian channels,'' \emph{{IEEE} Trans. Inf. Theory}, vol.~51,
  no.~4, pp. 1261--1282, Apr. 2005.

\bibitem{WV2010}
Y.~Wu and S.~Verd\'u, ``{MMSE} dimension,'' \emph{Proc. IEEE Int. Symp. Inf.
  Theory (ISIT)}, pp. 1463--1467, Jun. 2010.

\bibitem{JMS2016}
C.~Jeon, A.~Maleki, and C.~Studer, ``On the performance of mismatched data
  detection in large {MIMO} systems,'' in \emph{Proc. IEEE Int. Symp. Inf.
  Theory (ISIT)}, Jul. 2016, pp. 180--184.

\bibitem{V1998}
S.~Verd\'u, \emph{Multiuser Detection}.\hskip 1em plus 0.5em minus 0.4em\relax
  Cambridge University Press, 1998.

\bibitem{JO05}
J.~Jalden and B.~Ottersten, ``On the complexity of sphere decoding in digital
  communications,'' \emph{{IEEE} Trans. Signal Process.}, vol.~53, no.~4, pp.
  1474--1484, Apr. 2005.

\bibitem{SJSB11}
D.~Seethaler, J.~Jald{\'e}n, C.~Studer, and H.~B\"olcskei, ``On the complexity
  distribution of sphere decoding,'' \emph{{IEEE} Trans. Inf. Theory}, vol.~57,
  no.~9, pp. 5754--5768, Sept. 2011.

\bibitem{HT2003}
B.~Hochwald and S.~Ten~Brink, ``Achieving near-capacity on a multiple-antenna
  channel,'' \emph{{IEEE} Trans. Commun.}, vol.~51, no.~3, pp. 389--399, March
  2003.

\bibitem{studer2008soft}
C.~Studer, A.~Burg, and H.~B\"olcskei, ``Soft-output sphere decoding:
  Algorithms and {VLSI} implementation,'' \emph{{IEEE} J. Sel. Areas Commun.},
  vol.~26, no.~2, pp. 290--300, Feb. 2008.

\bibitem{studer2010soft}
C.~Studer and H.~B\"olcskei, ``Soft--input soft--output single tree-search
  sphere decoding,'' \emph{{IEEE} Trans. Inf. Theory}, vol.~56, no.~10, pp.
  4827--4842, Oct. 2010.

\bibitem{indiachemp}
T.~Narasimhan and A.~Chockalingam, ``Channel hardening-exploiting message
  passing ({CHEMP}) receiver in large-scale {MIMO} systems,'' \emph{{IEEE} J.
  Sel. Topics Signal Process.}, vol.~8, no.~5, pp. 847--860, Oct. 2014.

\bibitem{ASW2014}
K.~Alnajjar, P.~Smith, and G.~Woodward, ``Low complexity {V-BLAST} for massive
  {MIMO},'' in \emph{Proc. Aus. Commun. Theory Workshop (AusCTW)}, Feb. 2014,
  pp. 22--26.

\bibitem{CL2014}
M.~\v{C}irki\'c and E.~Larsson, ``{SUMIS}: Near-optimal soft-in soft-out {MIMO}
  detection with low and fixed complexity,'' \emph{{IEEE} Trans. Signal
  Process.}, vol.~62, no.~12, pp. 3084--3097, Jun. 2014.

\bibitem{SKIM2012}
P.~Suthisopapan, K.~Kasai, V.~Imtawil, and A.~Meesomboon, ``Approaching
  capacity of large {MIMO} systems by non-binary {LDPC} codes and {MMSE}
  detection,'' in \emph{Proc. IEEE Int. Symp. Inf. Theory (ISIT)}, Jul. 2012,
  pp. 1712--1716.

\bibitem{CLSK2013}
J.~W. Choi, B.~Lee, B.~Shim, and I.~Kang, ``Low complexity detection and
  precoding for massive {MIMO} systems,'' in \emph{Proc. IEEE Wireless Commun.
  Netw. Conf. (WCNC)}, April 2013, pp. 2857--2861.

\bibitem{WG2006}
C.-C. Wang and D.~Guo, ``Belief propagation is asymptotically equivalent to
  {MAP} detection for sparse linear systems,'' in \emph{Proc. Allerton Conf.
  Commun., Contr., Comput.}, 2006, pp. 926--935.

\bibitem{ZMWL2015}
L.~Zheng, A.~Maleki, X.~Wang, and T.~Long, ``Does $\ell_p$-minimization
  outperform $\ell_1$-minimization?'' \emph{arXiv:1501.03704}, Jan. 2015.

\bibitem{RP2016}
G.~Reeves and H.~D. Pfister, ``The replica-symmetric prediction for compressed
  sensing with {Gaussian} matrices is exact,'' in \emph{Proc. IEEE Int. Symp.
  Inf. Theory (ISIT)}, Jul. 2016, pp. 665--669.

\bibitem{DT2011}
D.~L. Donoho and J.~Tanner, ``Counting the faces of randomly-projected
  hypercubes and orthants, with applications,'' \emph{Discrete Comput.
  Geometry}, vol.~43, no.~3, pp. 522--541, Apr. 2010.

\bibitem{DT2010}
------, ``Precise undersampling theorems,'' \emph{Proc. {IEEE}}, vol.~98,
  no.~6, pp. 913--924, 2010.

\bibitem{T2000}
T.~Tanaka, ``Analysis of bit error probability of direct-sequence {CDMA}
  multiuser demodulators,'' in \emph{NIPS}, 2000, pp. 315--321.

\bibitem{P2013}
D.~Pirjol, ``The logistic-normal integral and its generalizations,'' \emph{J.
  Comput. Appl. Math.}, vol. 237, no.~1, pp. 460--469, Jan. 2013.

\bibitem{C2013}
G.~E. Crooks, ``Logistic approximation to the logistic-normal integral,''
  Lawrence Berkeley National Laboratory, Tech. Rep. 002v4, 2013.

\bibitem{D2005}
E.~Demidenko, \emph{Generalized Linear Mixed Models}.\hskip 1em plus 0.5em
  minus 0.4em\relax John Wiley \& Sons, Inc., 2005, pp. 329--430.

\bibitem{KT15}
A.~Kuriya and T.~Tanaka, ``Performance degradation of {AMP} for small-sized
  problems,'' in \emph{Proc. IEEE Int. Symp. Inf. Theory (ISIT)}, vol. Jun.,
  2015, pp. 2802--2806.

\bibitem{KT2016}
------, ``Effects of the approximations from {BP} to {AMP} for small-sized
  problems,'' in \emph{Proc. IEEE Int. Symp. Inf. Theory (ISIT)}, Jul. 2016,
  pp. 770--774.

\bibitem{RSF2014}
S.~Rangan, P.~Schniter, and A.~Fletcher, ``On the convergence of approximate
  message passing with arbitrary matrices,'' in \emph{Proc. IEEE Int. Symp.
  Inf. Theory (ISIT)}, June 2014, pp. 236--240.

\bibitem{MR2011}
O.~Mangasarian and B.~Recht, ``Probability of unique integer solution to a
  system of linear equations,'' \emph{Eur. J. Oper. Res.}, vol. 214, no.~1, pp.
  27--30, Oct. 2011.

\bibitem{SGYB14}
C.~Studer, T.~Goldstein, W.~Yin, and R.~G. Baraniuk, ``Democratic
  representations,'' \emph{arXiv:1401.3420}, Apr. 2015.

\bibitem{TAXH2015}
C.~Thrampoulidis, E.~Abbasi, W.~Xu, and B.~Hassibi, ``{BER} analysis of the box
  relaxation for {BPSK} signal recovery,'' in \emph{Proc. IEEE Int. Conf.
  Acoust., Speech, Signal Process. (ICASSP)}, Mar. 2016, pp. 3776--3780.

\bibitem{TAH2018}
C.~Thrampoulidis, E.~Abbasi, and B.~Hassibi, ``Precise error analysis of
  regularized {$M$}-estimators in high dimensions,'' \emph{{IEEE} Trans. Inf.
  Theory}, vol.~64, no.~8, pp. 5592--5628, Aug. 2018.

\bibitem{TXH2018}
C.~Thrampoulidis, W.~Xu, and B.~Hassibi, ``Symbol error rate performance of
  box-relaxation decoders in massive mimo,'' \emph{{IEEE} Trans. Signal
  Process.}, vol.~66, no.~13, pp. 3377--3392, Jul. 2018.

\bibitem{RV2016}
C.~Rush and R.~Venkataramanan, ``Finite-sample analysis of approximate message
  passing,'' in \emph{Proc. IEEE Int. Symp. Inf. Theory (ISIT)}, Jul. 2016, pp.
  755--759.

\bibitem{MBB2015}
Y.~Ma, D.~Baron, and A.~Beirami, ``Mismatched estimation in large linear
  systems,'' in \emph{Proc. IEEE Int. Symp. Inf. Theory (ISIT)}, Jun. 2015, pp.
  760--764.

\bibitem{GWSS2011}
D.~Guo, Y.~Wu, S.~Shamai, and S.~Verd\'u, ``Estimation in {Gaussian} noise:
  Properties of the minimum mean-square error,'' \emph{{IEEE} Trans. Inf.
  Theory}, vol.~57, no.~4, pp. 2371--2385, Apr. 2011.

\end{thebibliography}

\end{document}